\newcolumntype{C}[1]{>{\centering\arraybackslash}p{#1}}
\newtheorem{lemma}{Lemma}
\begin{document}

\title{Untwisted and Twisted Rényi Negativities: Toward a Rényi Proxy for Logarithmic Negativity in Fermionic Systems}
\author{Fo-Hong Wang}
\affiliation{Key Laboratory of Artificial Structures and Quantum Control (Ministry of Education), School of Physics and Astronomy, Shanghai Jiao Tong University, Shanghai 200240, China}
\affiliation{Tsung-Dao Lee Institute, Shanghai Jiao Tong University, Shanghai 201210, China}
\author{Xiao Yan Xu}
\email{xiaoyanxu@sjtu.edu.cn}
\affiliation{Key Laboratory of Artificial Structures and Quantum Control (Ministry of Education), School of Physics and Astronomy, Shanghai Jiao Tong University, Shanghai 200240, China}
\affiliation{Tsung-Dao Lee Institute, Shanghai Jiao Tong University, Shanghai 201210, China}
\affiliation{Hefei National Laboratory, Hefei 230088, China}
\date{\today}

\begin{abstract}
Entanglement entropy is a fundamental measure of quantum entanglement for pure states, but for large-scale many-body systems, R\'{e}nyi entanglement entropy is much more computationally accessible. For mixed states, logarithmic negativity (LN) serves as a widely used entanglement measure, but its direct computation is often intractable, leaving R\'{e}nyi negativity (RN) as the practical alternative. In fermionic systems, RN is further classified into untwisted and twisted types, depending on the definition of the fermionic partial transpose. However, which of these serves as the true R\'{e}nyi proxy for LN has remained unclear---until now.
In this work, we address this question by developing a robust quantum Monte Carlo (QMC) method to compute both untwisted and twisted RNs, focusing on the rank-4 twisted RN, where non-trivial behavior emerges. We identify and overcome two major challenges: the singularity of the Green's function matrix and the exponentially large variance of RN estimators. Our method is demonstrated in the Hubbard model and the spinless $t$-$V$ model, revealing critical distinctions between untwisted and twisted RNs, as well as between rank-2 and high-rank RNs. Remarkably, we find that the twisted R\'{e}nyi negativity ratio (RNR) adheres to the area law and decreases monotonically with temperature, in contrast to the untwisted RNR but consistent with prior studies of bosonic systems.
This study not only establishes the twisted RNR as a more pertinent R\'{e}nyi proxy for LN in fermionic systems but also provides comprehensive technical details for the stable and efficient computation of high-rank RNs. Our work lays the foundation for future studies of mixed-state entanglement in large-scale fermionic many-body systems.
\end{abstract}


\maketitle

\section{Introduction}
Quantum entanglement in many-body systems aids in identifying exotic phases and quantum criticality~\cite{Phys.Rev.Lett.2003Vidal,Rev.Mod.Phys.2008Amico,PhysicsReports2016Laflorencie}. 
Among the various entanglement measures~\cite{QuantumInfo.Comput.2007Plbnio,Rev.Mod.Phys.2009Horodecki}, the entanglement entropy (EE) has garnered significant interest over recent decades~\cite{J.Stat.Mech.Theor.Exp.2004Calabrese,Phys.Rev.Lett.2006Gioev,Phys.Rev.Lett.2006Kitaev,Phys.Rev.Lett.2006Levin,Phys.Rev.Lett.2006Fradkin,Phys.Rev.Lett.2008Wolf,J.Phys.AMath.Theor.2009Calabrese,Phys.Rev.Lett.2010Hastings,Phys.Rev.Lett.2013Grover,ArXiv2015Metlitski,Phys.Rev.Lett.2022Zhao,Phys.Rev.Lett.2024DEmidio,npjQuantumInf2025Liao}. 
EE is a fundamental measure of quantum entanglement for pure states, but for large-scale many-body systems, Rényi entanglement entropy (REE) is much more computationally accessible. 
Although EE is effective for identifying bipartite entanglement for pure states, it is not a reliable measure for mixed-state entanglement due to its inability to distinguish quantum entanglement from classical correlations. 
Instead, the logarithmic negativity (LN)~\cite{Phys.Rev.A1998Zyczkowski,J.Mod.Opt.1999Eisert,Phys.Rev.A2002Vidal,Phys.Rev.Lett.2005Plenio}, designed based on positive partial transpose criterion for the inseparability of density matrices~\cite{Phys.Rev.Lett.1996Peres,PhysicsLettersA1996Horodecki}, can effectively detect quantum correlations in mixed states.  
The definition of LN hinges on the partial transpose of density matrices. 
For fermionic systems, additional considerations are necessary to respect their anticommuting statistical properties. 
A modified definition, known as the fermionic partial transpose (FPT), was proposed~\cite{Phys.Rev.B2017Shapourian,Phys.Rev.B2018Shiozaki,Phys.Rev.A2019Shapourian}.
There are two types of FPT, characterized by different boundary conditions in the spacetime picture but yielding the same LN: the \textit{untwisted} and \textit{twisted} FPTs~\cite{SciPostPhys.2019Shapourian}. 

Although being a computable entanglement measure without invoking any optimization, LN is still hard to compute in interacting many-body systems. 
For bosonic cases, certain tensor network techniques including tree tensor networks~\cite{J.Stat.Mech.2013Calabresea} and matrix product states/operators~\cite{Phys.Rev.B2016Ruggiero,J.Phys.AMath.Theor.2020Gruber} were utilized to represent the partially transposed density matrix (PTDM) and thus obtain LN. 
There also exists early study using traditional density matrix renormalization group~\cite{Phys.Rev.A2009Wichterich}. 
However, while the matrix product states/operators approach is universal, it is highly demanding in computational resources~\cite{Phys.Rev.B2016Ruggiero,J.Phys.AMath.Theor.2020Gruber,Phys.Rev.B2020Wybo}. 
The moments of PTDM (referred to as the R\'{e}nyi negativity (RN)), originally defined via the replica trick in the context of conformal field theory~\cite{Phys.Rev.Lett.2012Calabrese,J.Stat.Mech.2013Calabresea,J.Stat.Mech.2013Calabrese}, are easier to compute using various many-body algorithms, including tensor network~\cite{Phys.Rev.B2020Wybo} and Monte Carlo~\cite{J.Stat.Mech.2013Alba,Phys.Rev.B2014Chunga,Phys.Rev.Lett.2020Wu,Phys.Rev.B2025Ding}. 
There are also feasible experimental protocols for measuring RN in artificial quantum systems~\cite{Phys.Rev.Lett.2018Gray,Phys.Rev.A2019Cornfeld,Phys.Rev.Lett.2020Elben,npjQuantumInf2021Neven}. 
Although the RN has no direct relation to either entanglement or correlation, both equilibrium~\cite{J.Stat.Mech.2013Alba,Phys.Rev.B2014Chunga,Phys.Rev.Lett.2020Wu,Phys.Rev.B2025Ding} and non-equilibrium~\cite{J.Stat.Mech.2014Coser,Phys.Rev.B2020Wybo,Murciano2022QuenchDynamicsRenyi} studies indicate that its deviation from R\'{e}nyi entropy (moments of the full density matrices) of the same moment rank, called R\'{e}nyi negativity ratio (RNR), measures either entanglement or correlation. 

Computing the LN or RN in interacting fermionic  systems is even more difficult due to the appearance of extra phases beyond conventional partial transpose~\cite{Phys.Rev.B2017Shapourian,Phys.Rev.B2018Shiozaki,Phys.Rev.A2019Shapourian}. 
In particular, it is challenging to implement the FPT for matrix product states.
When the regions of interest comprise only a few points in total, one can first perform tomography of the reduced density matrix (e.g., using correlation functions to represent each of its elements~\cite{Phys.Rev.Res.2024Parez,Nat.Commun.2025Wang}), and then apply a partial transpose.
However, for large regions, the size of the reduced density matrices increases exponentially. 
For general system sizes, since the FPT of fermionic Gaussian states remains Gaussian, it enables the calculation of fermionic RN and LN through the Green's function or covariance matrix for free-fermion systems~\cite{Phys.Rev.B2017Shapourian,J.Stat.Mech.2019Shapourian,SciPostPhys.2023Alba,Phys.Rev.B2024Choi}. 
The two definitions of FPTs, namely, the untwisted and twisted FPTs, yield the same LN but different RNs in general.
Recently, in previous work~\cite{Nat.Commun.2025Wanga}, we demonstrated that the untwisted RN of interacting systems can be feasibly calculated within the framework of determinantal quantum Monte Carlo (DQMC), provided the models are sign-problem-free. 
We computed the rank-2 untwisted RN for two representative models in the realm of strongly-correlated systems, namely, the Hubbard model and the $t$-$V$ model.

For interacting fermionic systems with large system sizes, employing DQMC to compute exponential observables, such as REE and RN, encounters an inaccurate sampling issue due to the exponentially large invariance. 
Since Grover's pioneering work~\cite{Phys.Rev.Lett.2013Grover}, many efforts have been devoted to calculating the REE of interacting fermionic systems via DQMC. 
However, for large system sizes, Grover's original method becomes less accurate, prompting the development of several incremental algorithms to address this issue. 
There are two main approaches to designing incremental algorithms. 
One approach involves intermediate processes without physical meaning~\cite{Phys.Rev.Lett.2020DEmidio,npjQuantumMater.2022Zhao,Phys.Rev.Lett.2022Zhao,Phys.Rev.B2023Pan,npjQuantumInf2025Liao,ArXiv2023Liao,Phys.Rev.B2024Zhang,Phys.Rev.Lett.2024DEmidio}. 
The other approach is grounded in the concept of annealing, where each incremental process is assigned a physical interpretation, such as temperature~\cite{Stat.Comput.2001Neal,NewJ.Phys.2008Pollet,Phys.Rev.E2017Alba,JHEP2023Bulgarelli,Phys.Rev.B2025Dai,Phys.Rev.B2024Ding}. 
Fermionic RN also suffers from this inaccurate sampling problem as the system size increases. 
In previous work~\cite{Nat.Commun.2025Wanga}, we developed an incremental algorithm for rank-2 untwisted RN and unveiled its behavior across a finite-temperature phase transition point. 

In addition to the rank-2 untwisted RN examined in Ref.~\cite{Nat.Commun.2025Wanga}, higher-order moments of the PTDM, including both untwisted and twisted RNs of arbitrary ranks, warrant further investigation. 
Given that the untwisted PTDM is generally non-Hermitian and only the twisted RN can be analytically continued to LN by definition~\cite{SciPostPhys.2019Shapourian}, twisted RN is more analogous to the RN in bosonic systems~\cite{Phys.Rev.Lett.2012Calabrese,J.Stat.Mech.2013Calabresea,J.Stat.Mech.2013Calabrese,Phys.Rev.Lett.2020Wu}. 
Indeed, the rank-2 twisted RN is trivially the rank-2 R\'{e}nyi entropy~\cite{SciPostPhys.2019Shapourian}, making rank 3 the smallest non-trivial rank, analogous to the conventional bosonic RN~\cite{Phys.Rev.Lett.2020Wu}.
This highlights the need to explore the computation of high-rank RNs (i.e., ranks greater than 2) for fermionic systems, a largely unexplored area that is the focus of this paper. 

In this work, we address two primary challenges in computing high-rank RNs of Gibbs states using DQMC: (i) the numerical instability associated with inverting the partially transposed Green's function, and (ii) the exponentially large variance in Monte Carlo sampling.
We present a comprehensive technical analysis of the numerical instability and introduce stable update schemes for the incremental measurement of RNs.
These algorithms are applied to analyze the RNs in the Hubbard model and spinless $t$-$V$ model.
We demonstrate the distinctions between untwisted and twisted RNs, as well as the variations among RNs of different ranks.
Based on the numerical results, we discuss which version of RN, untwisted or twisted, is more appropriate as a R\'{e}nyi proxy for LN.
This work provides robust technical support for the stable QMC computation of exponential observables with analogous mathematical structures, including R\'{e}nyi entropy and REE.

\begin{figure}[t]
    \centering
    \includegraphics[width=\linewidth]{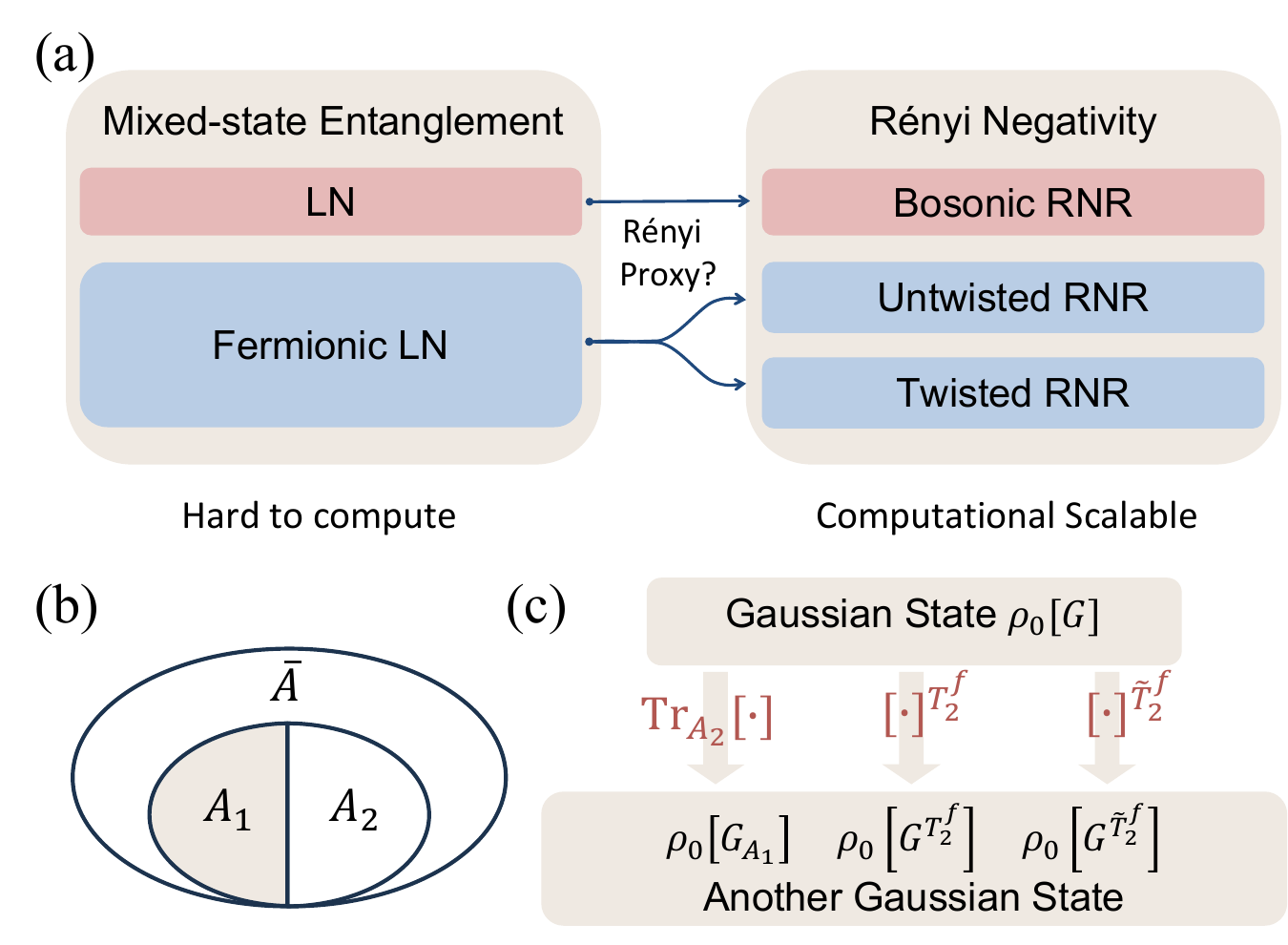}
    \caption{(a) Overview of various negativity definitions, whose mathematical expressions can be found in Sec.~\ref{sec:Renyi_negativity_and_Grover_determinant}. While LN and its fermionic analog are computationally challenging, RNRs are scalable with QMC algorithms. This work seeks to identify a R\'{e}nyi proxy for LN in fermionic systems. 
    (b) Tripartition geometry: system $A=A_{1}\cup A_{2}$ is embedded in environment $\bar A$, and we examine entanglement between subsystems $A_1$ and $A_2$.
    (c) Foundation underlying DQMC algorithms for REE and RNRs. Operations $\mathrm{Tr}_{A_{2}}[\cdot]$ (partial trace), $[\cdot]^{T^{f}_{2}}$ (untwisted FPT) and $[\cdot]^{\tilde{T}^{f}_{2}}$ (twisted FPT) transform Gaussian states with modified Green's functions.}
    \label{fig:overview}
\end{figure}

The remainder of this paper is organized as follows. 
In Sec.~\ref{sec:Prelimiary}, we set the stage by reviewing the FPT of both general states and Gaussian states, subsequently transitioning to interacting fermionic states within the framework of DQMC. 
Various definitions of R\'{e}nyi negativities and their Monte Carlo estimators, termed as Grover determinants, are introduced in Sec.~\ref{sec:Renyi_negativity_and_Grover_determinant}.
Readers less versed in technical details may wish to bypass Secs.~\ref{sec:numerical_stable} and \ref{sec:incremental}. 
Sec.~\ref{sec:numerical_stable} addresses the first key challenge in calculating high-rank RNs using DQMC: the numerical instability arising from inverting partially transposed Green's functions. 
In Sec.~\ref{sec:incremental}, we tackle the second challenge, namely the inaccurate sampling issue, and present a fast update scheme sufficiently stable for incremental measurement of RNs. 
Sec.~\ref{sec:examples} exhibits and discusses the simulation results for the Hubbard chain and the spinless $t$-$V$ model. 
Finally, we encapsulate our work and provide a perspective on potential future avenues in Sec.~\ref{sec:conclusion}.

\section{Fermionic partial transpose: From Gaussian states to interacting fermions}\label{sec:Prelimiary}
In this section, we review the definition of FPT in the Fock space and how it acts on Gaussian states. 
Subsequently, leveraging the FPT of Gaussian states, we develop a formulation for the FPT of interacting fermionic Gibbs states within the framework of DQMC.

In this paper, we employ two types of fermionic operators. 
For systems that conserve particle number and do not include pairing terms in the Hamiltonian, the formalism is fully captured using the complex fermion basis, or Dirac basis. 
This is characterized by the creation and annihilation operators $c_j^\dagger, c_j$ ($j=1,2,\dots,N$, where $N$ denotes the number of modes,  such as the number of sites for spinless fermions), which adhere to the anti-commutation relations $\{c_j, c_{j'}^\dagger\} = \delta_{jj'}$. 
In more general scenarios, it is necessary to work in the Majorana basis\footnote{Alternatively, one can employ the Nambu vector $\Psi\equiv\{c_1,\dots,c_N,c^\dagger_1,\dots,c^\dagger_N\}$~\cite{Ripka1986QuantumTheoryFinite,deTorres-Solanot2017TimedependentGaussianVariational}.}  
with the Majorana operators defined as:
\begin{equation}\label{equ:MajoranaOp}
    \gamma_{j}^{\left(1\right)}=\gamma_{2j-1}=c_{j}+c_{j}^{\dagger},\gamma_{j}^{\left(2\right)}=\gamma_{2j}={\rm i}\left(c_{j}^{\dagger}-c_{j}\right).
\end{equation} 
Given that it is more convenient to present and derive certain formulas in the Majorana basis, we will primarily utilize this basis in this section.  
If necessary for potential applications, we will reduce some results to the Dirac basis. 

\subsection{Fermionic partial transpose}
We begin by revisiting the definition of FPT within the Fock space, where each basis state $|n\rangle$ is represented by a bit string $n=(n_1n_2\cdots n_N)$ that encodes the occupation numbers of each mode. 
Along the way, we establish several intriguing general trace properties of both the untwisted and twisted PTDM, some of which will be corroborated by numerical simulations in the subsequent sections.

Under a bipartite geometry ($A=A_1\cup A_2$, see Fig.~\ref{fig:overview}(b) for illustration), where the modes are divided into two parts, $A_1$ with $N_1$ modes and $A_2$ with $N_2$ modes ($N_1+N_2=N$), the partial transpose of a density matrix $\rho$ with respect to the subsystem $A_2$ is defined as~\cite{Phys.Rev.A2002Vidal}
\begin{equation}\label{equ:BPT}
    \left(\left|n_{A_{1}},n_{A_{2}}\right\rangle \left\langle \bar{n}_{A_{1}},\bar{n}_{A_{2}}\right|\right)^{T_{2}}=\left|n_{A_{1}},\bar{n}_{A_{2}}\right\rangle \left\langle \bar{n}_{A_{1}},n_{A_{2}}\right|,
\end{equation}
where $n_{A_b}$ denotes the (sub-)bit string of length $N_b$ comprising the occupation numbers of the modes within subsystem $A_b$ (with $b=1,2$ denoting distinct subsystem blocks).
A more suitable definition for fermions, which respects the fermionic anti-commutation statistics and preserves fermionic parity, was proposed~\cite{Phys.Rev.B2017Shapourian,J.Stat.Mech.2019Shapourian}:
\begin{equation}\label{equ:FPT}
    \left(\left|n_{A_{1}},n_{A_{2}}\right\rangle \left\langle \bar{n}_{A_{1}},\bar{n}_{A_{2}}\right|\right)^{T_{2}^{f}}=\left(-1\right)^{\phi\left(n,\overline{n}\right)}\left|n_{A_{1}},\bar{n}_{A_{2}}\right\rangle \left\langle \bar{n}_{A_{1}},n_{A_{2}}\right|,
\end{equation}
where
\begin{equation}\label{equ:FPT_phase}
    {\phi\left(n,\overline{n}\right)}=\frac{\left[\left(\tau_{2}+\bar{\tau}_{2}\right)\bmod2\right]}{2}+\left(\tau_{1}+\bar{\tau}_{1}\right)\left(\tau_{2}+\bar{\tau}_{2}\right),
\end{equation}
with $\tau_b=\sum_{j\in A_b}n_j$ being the norm of the bit string $n_{A_b}$ ($b=1,2$). 
This is referred to as the \textit{untwisted} FPT~\cite{SciPostPhys.2019Shapourian}. 
In general, the untwisted FPT of a density matrix is pseudo-Hermitian. The untwisted PTDM satisfies $\rho^{T_{2}^{f}\dagger}=X_{2}\rho^{T_{2}^{f}}X_{2}$, where ${X}_{2}=(-1)^{\sum_{j\in A_2}c^\dagger_jc_j}$ is the disorder operator acting on the subsystem $A_2$. 
In contrast, the \textit{twisted} PTDM, defined as $\rho^{\tilde{T}_{2}^{f}}\equiv\rho^{T_{2}^{f}}X_{2}$, is Hermitian~\cite{SciPostPhys.2019Shapourian}.
In the Fock space, the twisted FPT is similarly given by Eq.~\eqref{equ:FPT} but the extra phase is modified to $\tilde{\phi}\left(n,\overline{n}\right)=\phi\left(n,\overline{n}\right)+\tau_{2}$. 

Utilizing the definitions in Eqs.~\eqref{equ:FPT} and \eqref{equ:FPT_phase}, we have proven some intriguing trace properties of the PTDM (see Appendix \ref{app:prood_in_Fock_space} for detailed proofs). 
Firstly, the moment of untwisted PTDM is expressed as
\begin{equation}\label{equ:moment_untwisted}
    \begin{aligned}
        &\mathrm{Tr}\left[\left(\rho^{T_{2}^{f}}\right)^{r}\right]=\sum_{n^{1},n^{2},\dots,n^{r}}\\&\left\langle n_{A_{1}}^{1},n_{A_{2}}^{3}\left|\rho\right|n^{2}\right\rangle \cdots\left\langle n_{A_{1}}^{i},n_{A_{2}}^{i+2}\left|\rho\right|n^{i+1}\right\rangle \cdots\left\langle n_{A_{1}}^{r},n_{A_{2}}^{2}\left|\rho\right|n^{1}\right\rangle \\&\times\left(-1\right)^{\phi\left(n^{1},n^{2}\right)}\cdots\left(-1\right)^{\phi\left(n^{i},n^{i+1}\right)}\cdots\left(-1\right)^{\phi\left(n^{r},n^{1}\right)}.
    \end{aligned}
\end{equation}
In particular, for $r=1$, we find that $\mathrm{Tr}[\rho^{T_2^{^{f}}}	]=\mathrm{Tr}\lbrack\rho^{}\rbrack$, while for $r=2$, we have
\begin{equation}\label{equ:untwisted_moment_2}
    {\rm Tr}\left[\left(\rho^{T_{2}^{f}}\right)^{2}\right]={\rm Tr}\left(\rho{X}_{2}\rho{X}_{2}\right).
\end{equation}
The second relation concerns the twisted moments: 
\begin{equation}\label{equ:moment_twisted}
    \begin{aligned}
        &{\rm Tr}\left[\left(\rho^{\tilde{T}_{2}^{f}}\right)^{r}\right]={\rm Tr}\left[\left(\rho^{T_{2}^{f}}X_{2}\right)^{r}\right]=\sum_{n^{1},n^{2},\dots,n^{r}}\\&\left\langle n_{A_{1}}^{1},n_{A_{2}}^{3}\left|\rho\right|n^{2}\right\rangle \left\langle n_{A_{1}}^{2},n_{A_{2}}^{4}\left|\rho\right|n^{3}\right\rangle \cdots\left\langle n_{A_{1}}^{r},n_{A_{2}}^{2}\left|\rho\right|n^{1}\right\rangle \\&\times\left(-1\right)^{\tilde{\phi}\left(n^{1},n^{2}\right)}\left(-1\right)^{\tilde{\phi}\left(n^{2},n^{3}\right)}\cdots\left(-1\right)^{\tilde{\phi}\left(n^{r},n^{1}\right)}.
    \end{aligned}
\end{equation}
In particular, in the case of $r=2$ we have ${\rm Tr}[\rho^{T_{2}^{f}}\rho^{T_{2}^{f}\dagger}]=\mathrm{Tr}[\rho^{2}]$. 
Finally, for a density matrix with a well-defined parity (as required by the parity superselection rule~\cite{Phys.Rev.1952Wick,Phys.Rev.1967Aharonov}), the parity constrain on Eqs.~\eqref{equ:moment_untwisted} and \eqref{equ:moment_twisted} leads to further simplification. 
On the one hand, we prove that
\begin{equation}\label{equ:moment_4k_with_parity}
    {\rm Tr}\left[\left(\rho^{\tilde{T}_{2}^{f}}\right)^{4r}\right]={\rm Tr}\left[\left(\rho^{T_{2}^{f}}\right)^{4r}\right],r\in\mathbb{Z}.
\end{equation}
On the other hand, the phase chain in the third line of Eq.~\eqref{equ:moment_untwisted} can also be simplified to
\begin{equation}\label{equ:phase_chain_with_parity}
    \begin{aligned}
        &\left(-1\right)^{\phi\left(n^{1},n^{2}\right)}\cdots\left(-1\right)^{\phi\left(n^{i},n^{i+1}\right)}\cdots\left(-1\right)^{\phi\left(n^{r},n^{1}\right)}\\=&\left(-1\right)^{\sum_{i=1}^{r}\left(\tau_{2}^{i}\right)+\sum_{i=1}^{r}\left(\tau_{2}^{i}\tau_{2}^{\left(i+1\right)\text{ mod }r}\right)}.
    \end{aligned}
\end{equation}
These relations are applicable to arbitrary fermionic states and are not restricted to Gaussian states (the last two relations hold for states with well-defined parity).

\subsection{Gaussian states}

Free fermionic systems, characterized by quadratic Hamiltonians $\hat{H}_0$, are typically found in \textit{Gaussian states} when in equilibrium. 
These states include the thermal Gibbs states, expressed as $\rho_T={e^{-\beta \hat{H}_0}}/{\mathrm{Tr}[e^{-\beta \hat{H}_0}]}$, at finite temperature $T=1/\beta$. 
Additionally, the ground states can be derived by taking the zero-temperature limit, given by $\rho_{\mathrm{GS}}=\lim_{T\rightarrow0}\rho_T$\footnote{Indeed, every pure Gaussian state can be viewed as the ground state of some quadratic Hamiltonian. Furthermore, under an appropriate basis transformation, it can be reformulated into a BCS-like form~\cite{Kraus2009QuantumInformationPerspective}.}. 
Formally, the Gaussian states are defined as the exponential of quadratic forms: 
\begin{equation}\label{equ:Gaussian_state_Majorana}
    \rho_{0}=\frac{1}{Z}\exp\left(\frac{1}{4}\sum_{k,l}W_{kl}\gamma_{k}\gamma_{l}\right)=\frac{1}{Z}\exp\left(\frac{1}{4}\boldsymbol{\gamma}^{T}W\boldsymbol{\gamma}\right). 
\end{equation}
In this expression, the matrix $W$ defining the Gaussian state is antisymmetric due to the anti-commutation relations between Majorana operators. 
For Hermitian systems, $W$ is also purely imaginary and can be expressed as $W=\mathrm{i}W_0$ with $W_0$ a real antisymmetric matrix. 
Hereafter, we consider a general case where $W$ could be non-Hermitian. 
The normalization factor $Z$ can be determined in terms of the eigenvalues of the $W$ matrix and is given by $\det[\sqrt{I+e^W}]^{1/2}$ up to a sign ambiguity~\cite{J.Stat.Mech.2014Klich}. 
The (fermionic) partial transpose of pure Gaussian states can be expressed within the overlap matrix framework, enabling analytical calculations and highly efficient numerical evaluations of LN~\cite{Phys.Rev.B2016Chang,ArXiv2025Fang}.
However, for general mixed Gaussian states, the Green’s function method to be introduced is more applicable.

One fundamental attribute of Gaussian states is that the covariance matrix, defined as $\Gamma_{ij}=\frac{1}{2}\mathrm{Tr}(\rho_0[\gamma_i,\gamma_j])$, fully characterizes a Gaussian state. 
The covariance matrix $\Gamma$ is related to the matrix $W$ via
\begin{equation}\label{equ:Gamma_W}
    \Gamma=-\tanh\left(\frac{W}{2}\right)\text{ and }W=\ln\left[(I+\Gamma)^{-1}(I-\Gamma)\right].
\end{equation}
Ref.~\cite{Nat.Commun.2025Wanga} presents a proof utilizing the trace formula~\cite{J.Stat.Mech.2014Klich} which applies to cases where $Z=\sqrt{\det(I+e^{W})}$ without sign ambiguity, a condition typically met when $W$ is Hermitian. 
A more comprehensive proof employs the product form of Gaussian states~\cite{Quant.Inf.Comput.2005Bravyi,ArXiv2005Bravyi,J.Stat.Mech.2010Fagotti} (see Appendix \ref{app:Gamma-W-relation} for detailed exposition). 
All high-order correlations can be derived using Wick's theorem, indicating that the entirety of information pertaining to these Gaussian states is encapsulated within $\Gamma$. 
It is convenient to represent Gaussian states in a functional form as $\rho_0[\Gamma]$, where $\Gamma$ determines $W$ via Eq.~\eqref{equ:Gamma_W}, and subsequently, $W$ determines $\rho_0$ through Eq.~\eqref{equ:Gaussian_state_Majorana}. 

In addition to considering temperature, a mixed Gaussian state can also be obtained by tracing out certain degrees of freedom from a pure Gaussian state. 
The partial trace of a Gaussian state remains a Gaussian state~\cite{J.Phys.AMath.Gen.2003Peschel,Phys.Rev.B2004Cheong,J.Phys.AMath.Theor.2009Peschel}. 
Remarkably, the FPT of a Gaussian state is also a Gaussian state~\cite{Phys.Rev.B2017Shapourian}, and it can be efficiently calculated through the ``partial transpose'' of its covariance matrix. 
For instance, with respect to subsystem $A_2$, this is given by
\begin{equation}\label{equ:PT_Gamma}
    \Gamma^{T_{2}^{f}}=\left(\begin{array}{cc}
        \Gamma_{11} & \text{i}\Gamma_{12}\\
        \text{i}\Gamma_{21} & -\Gamma_{22}
        \end{array}\right),
\end{equation} 
where $\Gamma_{b,b'}$ denotes the matrix block of $\Gamma$, with rows corresponding to the $A_b$ region and columns corresponding to the $A_{b'}$ region. This notation will be consistently used throughout the paper.
Using the functional form of Gaussian states, the untwisted PTDM can be expressed as $\rho_0^{T_2^f}[\Gamma]=\rho_0[\Gamma^{T_2^f}]$. 
This fact was proved by utilizing Wick's theorem~\cite{NewJ.Phys.2015Eisler,Nat.Commun.2025Wanga}. 
The twisted PTDM, defined as $\rho^{\tilde{T}_2^f}_0\equiv\rho^{{T}_2^f}_0 X_2$, is also a Gaussian state~\cite{SciPostPhys.2019Shapourian}, owing to the product rule of Gaussian states \cite{J.Stat.Mech.2010Fagotti} (see Appendix~\ref{app:product_rule_Gaussian} for further details). 
By employing Eqs.~\eqref{equ:prodrule_W} and \eqref{equ:Gamma_W}, we can express the covariance matrix of $\rho_0^{\tilde{T}_2^f}$ in terms of the blocks of $\Gamma^{T_{2}^{f}}$~\cite{SciPostPhys.2019Shapourian}: 
\begin{equation}\label{equ:TPT_Gamma}
    \Gamma^{\tilde{T}_{2}^{f}}=\left(\begin{array}{cc}
        \Gamma_{11}-\Gamma_{12}\Gamma_{22}^{-1}\Gamma_{21} & \mathrm{i}\Gamma_{12}\Gamma_{22}^{-1}\\
        -\mathrm{i}\Gamma_{22}^{-1}\Gamma_{21} & -\Gamma_{22}^{-1}
        \end{array}\right).
\end{equation}
It is important to note that $\rho^{\tilde{T}_2^f}_0$ may be not normalized, but $\Gamma^{\tilde{T}_{2}^{f}}$ actually corresponds to the normalized Gaussian state $\rho^{\tilde{T}_2^{f}}_0/\mathcal{Z}_{\tilde{T}_2^{f}}$ with $\mathcal{Z}_{\tilde{T}_{2}^{f}}\equiv\mathrm{Tr}[\rho_0^{T_{2}^{f}}X_2]=\pm\sqrt{\det(\Gamma_{22})}$~\cite{SciPostPhys.2019Shapourian}. 

For particle-number-conserving fermionic systems, their Gaussian states can be expressed in terms of Dirac operators and are completely characterized by the Green's function $G_{ij}=\mathrm{Tr}(\rho_0 c_i c_j^\dagger)$, as given by~\cite{Phys.Rev.B2004Cheong}
\begin{equation}\label{equ:Gaussian_state_Dirac}
    \rho_{0}[G]=\det G\exp\left[\mathbf{c}^{\dagger}\ln\left(G^{-1}-I\right)\mathbf{c}\right]. 
\end{equation}
Its FPT is given by the same functional form $\rho_0[\cdot]$, but with the input of partially transposed Green's function, 
\begin{equation}\label{equ:untwistedPTGreen}
    G^{T_{2}^{f}}=\left(\begin{array}{cc}
        G_{11} & \text{i}G_{12}\\
        \text{i}G_{21} & I_{2}-G_{22}
        \end{array}\right), 
\end{equation}
which satisfies $\rho^{T_2^f}_0[G]=\rho_0[G^{T_2^f}]$. 
By applying the product rule formulated in Eq.~\eqref{equ:prodrule_Green}, we obtain $[(G^{\tilde{T}_{2}^{f}})^{-1}-I]=[(G^{T_{2}^{f}})^{-1}-I]U_{2}$ with $U_2\equiv I_1\oplus (-I_2)$, where $I_{b}$ is the diagonal block of the identity matrix $I$, corresponding to rows and columns in region $A_b$. Consequently, the twisted partially transposed Green's function is derived as follows:
\begin{equation}\label{equ:twistedPTGreen}
    \begin{aligned}
        G^{\tilde{T}_{2}^{f}}&=\left(\begin{array}{cc}
        I_{1} & 2G_{12}^{T_{2}^{f}}\\
        0 & 2G_{22}^{T_{2}^{f}}-I_{2}
        \end{array}\right)^{-1}G^{T_{2}^{f}}\\&=\left(\begin{array}{cc}
        I_{1} & -2\text{i}G_{12}\left(I_{2}-2G_{22}\right)^{-1}\\
        0 & \left(I_{2}-2G_{22}\right)^{-1}
        \end{array}\right)G^{T_{2}^{f}},
        \end{aligned}
\end{equation}
which is related to the twisted PTDM via $\mathrm{Tr}(\rho_0^{\tilde{T}_2^f}c_i c^\dagger_j)=\mathcal{Z}_{\tilde{T}_2^f}G^{\tilde{T}_2^f}_{ij}$ with ${\mathcal{Z}}_{\tilde{T}_{2}^{f}}=\det\left(I_{2}-2G_{22}\right)$. 
The determinants of the untwisted and twisted Green's functions are directly related by $\det[\mathcal{Z}_{\tilde{T}_2^f}G^{\tilde{T}_2^f}]=\det[G^{T_2^f}]$. 
We note that the Gaussian-preserving property of untwisted and twisted FPTs is also shared by partial trace, as illustrated collectively in Fig.~\ref{fig:overview}(c). 

\subsection{Interacting fermions}
Before we start, we emphasize that the use of a hat, such as in $\hat{H}$, denotes an operator, while the same symbol without a hat represents its corresponding matrix representation. 
In the Majorana basis, the operator is expressed as $\hat{H} \equiv \frac{1}{4}\boldsymbol{\gamma}^\dagger H \boldsymbol{\gamma}$, whereas in the Dirac basis, it is given by $\hat{H} \equiv \mathbf{c}^\dagger H \mathbf{c}$. 
Note that the matrix representations of the same operator differ in different bases, even though the same symbols are used.

The Gibbs states of interacting fermionic systems with a Hamiltonian $\hat{H}=\hat{H}_0+\hat{H}_I$, where $\hat{H}_I$ includes a quartic term, are inherently non-Gaussian. 
However, by successively employing the Trotter-Suzuki decomposition and Hubbard-Stratonovich (HS) transformations, we can systematically decouple the interaction terms into quadratic forms. 
This process leads to the expression:
$e^{-\Delta_\tau \hat{H}}=\sum_{s}\alpha[s]e^{\hat{K}}e^{\hat{V}[s]} e^{\hat{K}}+O(\Delta_\tau^3)$ 
where $\hat{K}\equiv-\Delta_\tau \hat{H}_0/2$ simply incorporates the prefactor, while the quadratic interaction term $\hat{V}[s]$ along with the coupling coefficient $\alpha[s]$ depend on the HS auxiliary fields $s$. 
For an interacting Gibbs state at inverse temperature $\beta=L_\tau \Delta_\tau$, the partition function is given by
\begin{equation}\label{equ:Z_DQMC}
    Z=\mathrm{Tr}\left[e^{-\beta\hat{H}}\right]=\sum_{\left\{ s_{l}\right\} }\mathrm{Tr}\left[\prod_{l=1}^{L_{\tau}}\left(\alpha\left[s_{l}\right]e^{\hat{K}_{l}}e^{\hat{V}\left[s_{l}\right]}e^{\hat{K}_{l}}\right)\right].
\end{equation}
In what follows, we adopt a shorthand notation to represent the complete set of all spacetime-dependent auxiliary fields, $\mathbf{s}\equiv\{s_l| l=1,\dots,L_\tau\}$.
The partition function is thus a summation over the weights of all possible $\mathbf{s}$-configurations, given by $Z=\sum_{\mathbf{s}}w_{\mathbf{s}}$.

The density matrix $\rho=e^{-\beta \hat{H}}/Z$ can be expressed as a sum of Gaussian states by utilizing the product rules for Gaussian states (refer to Appendix~\ref{app:product_rule_Gaussian} for details). 
Indeed, for each $\mathbf{s}$-configuration, the operators within the square bracket in the final expression of Eq.~\eqref{equ:Z_DQMC} combine to form a single Gaussian state, represented as $e^{\hat{W}_\mathbf{s}}=e^{\frac14\boldsymbol{\gamma}^TW_\mathbf{s}\boldsymbol{\gamma}}\equiv \prod_l e^{\hat{h}_l[\mathbf{s}]}$. 
Each $e^{\hat{h}_l[\mathbf{s}]}$ is further decomposed into a product of three Gaussian states, $e^{\hat{h}_l[\mathbf{s}]}\equiv e^{\hat{K}_l}e^{\hat{V}_l[\mathbf{s}]}e^{\hat{K}_l}$. 
Incorporating normalization factors, the density matrix can be expressed as
\begin{equation}\label{equ:rho_MQMC}
    \rho=\sum_{{\bf s}}p_{{\bf s}}\rho_{{\bf s}}\text{ with } \rho_{\bf{s}}=\det\left[I+e^{W_\mathbf{s}}\right]^{-1/2}e^{\frac{1}{4}{\boldsymbol\gamma}^{T}W_\mathbf{s}{\boldsymbol\gamma}}, 
\end{equation}
where $p_\mathbf{s}=w_\mathbf{s}/Z$ is the probability of $\mathbf{s}$-configuration. 
Due to the linearity of the FPT, we can partially transpose the Gaussian states corresponding to different $\mathbf{s}$-configurations independently, and then sum them to obtain
\begin{equation}
    \rho^{T_{2}^{f}}=\sum_{\mathbf{s}}p_{\mathbf{s}}\rho_{\mathbf{s}}^{T_{2}^{f}}=\sum_{\mathbf{s}}p_{\mathbf{s}}\rho_{0}\left[\Gamma_{\mathbf{s}}^{T_{2}^{f}}\right],
\end{equation}
where $\Gamma_\mathbf{s}=\frac12\mathrm{Tr}(\rho_\mathbf{s}[\gamma_i,\gamma_j])$ represents the covariance matrix of the Gaussian state for configuration $\mathbf{s}$, and $\Gamma^{T_2^f}_\mathbf{s}$ is obtained from Eq.~\eqref{equ:PT_Gamma}, with explicit dependence on the auxiliary fields $\mathbf{s}$. 
Finally, the twisted PTDM is then given by
\begin{equation}
    \rho^{\tilde{T}_{2}^{f}}=\sum_{\mathbf{s}}p_{\mathbf{s}}\rho_{\mathbf{s}}^{T_{2}^{f}}X_{2}=\sum_{\mathbf{s}}p_{\mathbf{s}}\mathcal{Z}_{\tilde{T}_{2}^{f},\mathbf{s}}\rho_{0}\left[\Gamma^{\tilde{T}_{2}^{f}}_\mathbf{s}\right], 
\end{equation}
where $\mathcal{Z}_{\tilde{T}_{2}^{f},\mathbf{s}}=\pm\sqrt{\det\left(\Gamma_{\mathbf{s},22}\right)}$ is a normalization factor and $\Gamma^{\tilde{T}_{2}^{f}}_\mathbf{s}$ is obtained from Eq.~\eqref{equ:TPT_Gamma}, again explicitly dependent on the auxiliary fields $\mathbf{s}$. 

If the weight in Eq.~\eqref{equ:rho_MQMC} is consistently real and positive, the summation can be efficiently evaluated using Monte Carlo sampling over the configuration space of $\mathbf{s}$. 
For simplicity, we have disregarded the sign ambiguity of the normalization factor in Eq.~\eqref{equ:rho_MQMC}. 
When the sign is determined by symmetry arguments, Majorana QMC~\cite{Phys.Rev.B2015Li} can be employed for calculations. 
In cases where sign problems arise, one may turn to Pfaffian QMC~\cite{ArXiv2024Han} as an alternative approach.
By focusing exclusively on particle-number-conserving systems, we can express the equations in terms of Dirac operators. 
Specifically, we have $e^{\hat{W}_\mathbf{s}}=e^{\mathbf{c}^\dagger W_\mathbf{s} \mathbf{c}}\equiv\prod_l e^{\hat{h}_l[\mathbf{s}]}$, which transforms Eq.~\eqref{equ:rho_MQMC} into the following form~\cite{Phys.Rev.Lett.2013Grover}:
\begin{equation}\label{equ:rho_DQMC}
    \rho=\sum_{\mathbf{s}}p_{\mathbf{s}}\rho_{\mathbf{s}}\text{ with } \rho_{\mathbf{s}}=\det\left[G_{\mathbf{s}}\right]e^{\mathbf{c}^{\dagger}\ln\left[G_{{\bf s}}^{-1}-I\right]\mathbf{c}}, 
\end{equation}
which aligns with the framework of the celebrated DQMC~\cite{Phys.Rev.D1981Blankenbecler,Phys.Rev.B1981Scalapino,Phys.Rev.B1985Hirsch,Assaad2008WorldlineDeterminantalQuantum}. 
After performing the untwisted and twisted FPTs, we derive the following expressions:
\begin{equation}\label{equ:rho_untwisted_DQMC}
    \rho^{T_{2}^{f}}=\sum_{\mathbf{s}}p_{\mathbf{s}}\rho_{\mathbf{s}}^{T_{2}^{f}}=\sum_{\mathbf{s}}p_{\mathbf{s}}\rho_{0}\left[G_{\mathbf{s}}^{T_{2}^{f}}\right],
\end{equation}
and
\begin{equation}\label{equ:rho_twisted_DQMC}
    \rho^{\tilde{T}_{2}^{f}}=\sum_{\mathbf{s}}p_{\mathbf{s}}\rho_{\mathbf{s}}^{T_{2}^{f}}X_{2}=\sum_{\mathbf{s}}p_{\mathbf{s}}\mathcal{Z}_{\tilde{T}_{2}^{f},\mathbf{s}}\rho_{0}\left[G_{\mathbf{s}}^{\tilde{T}_{2}^{f}}\right],
\end{equation}
where $\mathcal{Z}_{\tilde{T}_{2}^{f},\mathbf{s}}=\det\left(I_{2}-2G_{\mathbf{s},22}\right)$, and the matrices $G_{\mathbf{s}}^{{T}_{2}^{f}}$ and $G_{\mathbf{s}}^{\tilde{T}_{2}^{f}}$ are given by Eqs.~\eqref{equ:untwistedPTGreen} and \eqref{equ:twistedPTGreen}, respectively.

The Green's function $G_\mathbf{s}$ appearing above is given by $G_{\mathbf{s},ij}\equiv\mathrm{Tr}[\rho_{\mathbf{s}}c_{i}c_{j}^{\dagger}]=(I+e^{W_{\mathbf{s}}})^{-1}$. 
We can extend it to a more general construct, specifically the equal-time Green's function for the total system governed by $\hat{H}$.
This function is defined as $G_{ij}(\tau)\equiv\mathrm{Tr}[e^{-\beta \hat{H}}c_i(\tau)c^\dagger_j(\tau)]$, evaluated at a particular imaginary time $\tau\in[0,\beta]$. 
For a Hermitian $\hat{H}$, the value of $G$ remains invariant across different $\tau$. 
However, since $\hat{W}_\mathbf{s}$ is typically non-Hermitian, translation symmetry along the imaginary-time direction is not preserved for a specific $\mathbf{s}$. 
We define intermediate imaginary-time evolution operators by decomposing $e^{\hat{W}_\mathbf{s}}$ as follows:
\begin{equation}\label{equ:DQMC_B}
    e^{\hat{W}_{{\bf s}}}\equiv\prod_{l=1}^{L_{\tau}}e^{\hat{h}_{l}\left[\mathbf{s}\right]}\equiv\hat{B}_{{\bf s}}\left(\beta,0\right)=\hat{B}_{{\bf s}}\left(\beta,\tau\right)\hat{B}_{{\bf s}}\left(\tau,0\right),
\end{equation}
where $\hat{B}_\mathbf{s}\left(\tau_{1}=l_{1}\Delta_{\tau},\tau_{2}=l_{2}\Delta_{\tau}\right)=\prod_{l=l_{1}}^{l_{2}+1}e^{\hat{h}_{l}\left[\mathbf{s}\right]}$ can be interpreted as un-normalized intermediate Gaussian states. 
Consequently, we have~\cite{Assaad2008WorldlineDeterminantalQuantum}
\begin{equation}\label{equ:DQMC_equaltimeG}
    \begin{aligned}
    G_{\mathbf{s},ij}\left(\tau,\tau\right)&=\frac{\mathrm{Tr}\left[\hat{B}_{{\bf s}}\left(\beta,\tau\right)c_{i}c_{j}^{\dagger}\hat{B}_{{\bf s}}\left(\tau,0\right)\right]}{\mathrm{Tr}\left[\hat{B}_{{\bf s}}\left(\beta,0\right)\right]}
    \\&=\left(I+B_{{\bf s}}\left(\tau,0\right)B_{{\bf s}}\left(\beta,\tau\right)\right)^{-1},
    \end{aligned}
\end{equation}
where $B_\mathbf{s}\left(\tau_{1}=l_{1}\Delta_{\tau},\tau_{2}=l_{2}\Delta_{\tau}\right)=\prod_{l=l_{1}}^{l_{2}+1}e^{K_{l}}e^{V_{l}\left[\mathbf{s}\right]}e^{K_{l}}$. 
The imaginary-time translation symmetry is restored after summing over all $\mathbf{s}$-configurations. 
By averaging these $\mathbf{s}$-dependent Green's functions over different $\tau$, we obtain the overall Green's function $G=\frac{1}{L_\tau}\sum_{\mathbf{s},\tau}p_{\mathbf{s}}G_{\mathbf{s}}(\tau,\tau)$. 
All equal-time measurements follow a similar sampling process, which also applies to the RN introduced in the next section. 

\section{R\'{e}nyi negativity (ratio), Grover matrix and Grover determinant}\label{sec:Renyi_negativity_and_Grover_determinant}
In this section, we introduce variations of the LN, namely, the untwisted and twisted RNs, and next section will discuss methods for stably calculating their sampled value for a specific $\mathbf{s}$-configuration. 

Based on the bosonic and fermionic partial transposes (see Eqs.~\eqref{equ:BPT} and \eqref{equ:FPT}), the LN is defined as
\begin{equation}
    \mathcal{E} = \ln\mathrm{Tr}\sqrt{\rho^{T_2^{(f)}\dagger}\rho^{T_2^{(f)}}}.
\end{equation}
This quantity has been shown to be an entanglement monotone for both bosonic~\cite{Phys.Rev.A2002Vidal} and fermionic systems~\cite{Phys.Rev.A2019Shapourian}. 
It is noteworthy that since $\rho^{T_{2}^f\dagger}\rho^{T_{2}^f}=\rho^{\tilde{T}_{2}^f\dagger}\rho^{\tilde{T}_{2}^f}=(\rho^{\tilde{T}_{2}^f})^{2}$, both untwisted and twisted FPTs yield the same LN. 
However, due to challenges associated with evaluating the trace norm in both analytical and numerical contexts, some researchers have turned to a variation of LN, known as the RN or negativity moments~\cite{Phys.Rev.Lett.2012Calabrese,J.Stat.Mech.2013Calabrese,SciPostPhys.2019Shapourian,Phys.Rev.Lett.2020Wu}. 
The untwisted and twisted RN are defined as follows:
\begin{equation}\label{equ:def_Renyinegativity}
    \mathcal{E}_r=\frac{1}{1-r}\ln\mathrm{Tr}\left[\left(\rho^{T_2^f}\right)^r\right]\text{ and }\tilde{\mathcal{E}}_r=\frac{1}{1-r}\ln\mathrm{Tr}\left[\left(\rho^{\tilde{T}_2^f}\right)^r\right],
\end{equation}
respectively. 
Here, we define the RNs in a way analogous to the R\'{e}nyi entropy, incorporating the prefactor $1/(1-r)$ to ensure a positive value. 
The twisted RN is analytically continued to the LN via $\mathcal{E}=\lim_{r\rightarrow 1/2}(1-2r)\tilde{\mathcal{E}}_{2r}$~\cite{SciPostPhys.2019Shapourian}. 
It is important to note that untwisted and twisted FPTs of the same rank can yield different values of RN. 
Specifically, while $\mathcal{E}_2=-\ln\mathrm{Tr}[(\rho X_2)^2]$ using Eq.~\eqref{equ:untwisted_moment_2}, $\tilde{\mathcal{E}}_2=-\ln\mathrm{Tr}[\rho^2]$ is trivially related to the R\'{e}nyi entropy. 
Therefore, there is generally a greater interest in high-rank RN, which is the primary focus of this work. 
As discussed in the introduction, the deviation of RN from the R\'{e}nyi entropy of the same rank may provide meaningful insights related to either entanglement or correlation. 
We define this deviation as the R\'{e}nyi negativity ratio (RNR):
\begin{equation}\label{equ:def_RNR}
    R_{r}=\frac{1}{1-r}\ln\frac{\text{Tr}\left[\left(\rho^{T_{2}^{f}}\right)^{r}\right]}{\text{Tr}\left[\left(\rho\right)^{r}\right]}=\mathcal{E}_{r}-S_{r}^{\mathrm{th}},
\end{equation}
where the thermal R\'{e}nyi entropy is denoted by $S_{r}^{\mathrm{th}}=\ln\mathrm{Tr}\left[\rho^{r}\right]/\left(1-r\right)$. 
The twisted RNR is similarly defined as $\tilde{R}_{r}=\tilde{\mathcal{E}}_{r}-S_{r}^{\mathrm{th}}$. 

By substituting the decomposition forms from Eqs.~\eqref{equ:rho_untwisted_DQMC} and \eqref{equ:rho_twisted_DQMC} into the definition of rank-$r$ RN in Eq.~\eqref{equ:def_Renyinegativity}, we obtain
\begin{subequations}
\begin{equation}\label{equ:expRenyiNk_untwisted_DQMC}
    e^{\left(1-r\right)\mathcal{E}_{r}}=\text{Tr}\left[\left(\rho^{T_{2}^{f}}\right)^{r}\right]=\sum_{\mathbf{s}^{(1)}\cdots\mathbf{s}^{(r)}}\left(\prod_{i=1}^{r}p_{\mathbf{s}^{(i)}}\right)\det g_{r,\bar{\boldsymbol{s}}},
\end{equation}
and
\begin{equation}\label{equ:expRenyiNk_twisted_DQMC}
    \begin{aligned}
        e^{\left(1-r\right)\mathcal{\tilde{E}}_{r}}&=\text{Tr}\left[\left(\rho^{\tilde{T}_{2}^{f}}\right)^{r}\right]\\&=\sum_{\mathbf{s}^{(1)}\cdots\mathbf{s}^{(r)}}\left(\prod_{i=1}^{r}p_{\mathbf{s}^{(i)}}\right)\left(\prod_{i=1}^{r}\mathcal{Z}_{\tilde{T}_{2}^{f},\mathbf{s}^{(i)}}\right)\det\tilde{g}_{r,\bar{\boldsymbol{s}}},
        \end{aligned}
\end{equation}
\end{subequations}
where ${\mathcal{Z}}_{\tilde{T}_{2}^{f},\mathbf{s}}=\det\left(I_{2}-2G_{\mathbf{s},22}\right)$, and $\bar{\boldsymbol{s}}$ is a shorthand notation for the replica collection $\mathbf{s}^{(1)}\cdots\mathbf{s}^{(r)}$. 
The rank-$r$ \textit{Grover matrices} are defined as
\begin{subequations}\label{equ:gk_DQMC}
    \begin{equation}\label{equ:gk_untwisted_DQMC}
        g_{r,\bar{\boldsymbol{s}}}\equiv\left[\prod_{i=r}^{1}G_{\mathbf{s}^{(i)}}^{T_{2}^{f}}\right]\left[I+\prod_{i=r}^{1}\left(\left(G_{\mathbf{s}^{(i)}}^{T_{2}^{f}}\right)^{-1}-I\right)\right],
    \end{equation}
    and
    \begin{equation}\label{equ:gk_twisted_DQMC}
        \tilde{g}_{r,\bar{\boldsymbol{s}}}\equiv\left[\prod_{i=r}^{1}G_{\mathbf{s}^{(i)}}^{\tilde{T}_{2}^{f}}\right]\left[I+\prod_{i=r}^{1}\left(\left(G_{\mathbf{s}^{(i)}}^{\tilde{T}_{2}^{f}}\right)^{-1}-I\right)\right].
    \end{equation}
\end{subequations}
The untwisted and twisted Green's functions, $G^{T_2^f}_\mathbf{s}$ and $G^{\tilde{T}_2^f}_\mathbf{s}$, are provided in Eqs.~\eqref{equ:untwistedPTGreen} and \eqref{equ:twistedPTGreen}, respectively. 
The determinants of Grover matrices are referred to as the \textit{Grover determinants}. The term ``Grover'' is used because these formulas are analogous to Grover's formulas for calculating REE~\cite{Phys.Rev.Lett.2013Grover}.
The twisted Grover determinant can be combined with the normalization factors via $\det[\mathcal{Z}_{\tilde{T}_2^f}G^{\tilde{T}_2^f}]=\det[G^{T_2^f}]$ (see the discussion below Eq.~\eqref{equ:twistedPTGreen}). 
This allows the calculation of twisted RN using untwisted Green's function, as given by the formula:
\begin{equation}\label{equ:expRenyiNk_twisted_DQMC2}
    e^{\left(1-r\right)\mathcal{\tilde{E}}_{r}}=\sum_{\mathbf{s}^{(1)}\cdots\mathbf{s}^{(r)}}\left(\prod_{i=1}^{r}p_{\mathbf{s}^{(i)}}\right)\det\underline{\tilde{g}}_{r,\bar{\boldsymbol{s}}}, 
\end{equation}
where a modified Grover matrix is defined as:
\begin{equation}\label{equ:gkbar_twisted_DQMC}
    \underline{\tilde{g}}_{r,\bar{\boldsymbol{s}}}\equiv\left[\prod_{i=r}^{1}G_{\mathbf{s}^{(i)}}^{T_{2}^{f}}\right]\left[I+\prod_{i=r}^{1}\left[\left(\left(G_{\mathbf{s}^{(i)}}^{T_{2}^{f}}\right)^{-1}-I\right)U_{2}\right]\right],
\end{equation}
which satisfies $\det\underline{\tilde{g}}_{r,\bar{\boldsymbol{s}}}=\left(\prod_{i=1}^{r}\mathcal{Z}_{\tilde{T}_{2}^{f},\mathbf{s}^{(i)}}\right)\det\tilde{g}_{r,\bar{\boldsymbol{s}}}$. 
Eq.~\eqref{equ:expRenyiNk_twisted_DQMC2} is advantageous over Eq.~\eqref{equ:expRenyiNk_twisted_DQMC} when the matrix block $(I_2-2G_{22})$ in Eq.~\eqref{equ:twistedPTGreen} becomes singular. 
This singularity leads to a nearly zero value for $\mathcal{Z}_{\tilde{T}_2^f}$ and results in an unstable $G^{\tilde{T}_2^f}$. 

\section{Numerically stable calculation of rank-$r$ Grover determinants}\label{sec:numerical_stable}
In this section, we address the first challenge in computing high-rank RNs via DQMC: the instability arising from inverting partially transposed Green's functions, which appears necessary in Eqs.~\eqref{equ:gk_DQMC} and \eqref{equ:gkbar_twisted_DQMC}. 

The direct computation of high-rank Grover determinants is plagued by numerical instability due to the singular nature of the partially transposed Green's functions, which itself stems from the singularity of the Green's function $G$. 
For free-fermion systems, where the HS transformation and Monte Carlo sampling are unnecessary, we have $G_{\mathbf{s}^{(i)}}^{T_{2}^{f}/\tilde{T}_2^f}=G^{T_{2}^{f}/\tilde{T}_2^f}$ for all $i=1,2,\dots,r$. Therefore, stable formulas for RNs can be derived without invoking the inversion of Green's functions:
\begin{subequations}\label{equ:expRenyiNn_free_fermion}
    \begin{equation}\label{equ:expRenyiNn_untwisted_free_fermion}
        e^{\left(1-r\right)\mathcal{E}_{r}}=\det\left[\left(G^{T_{2}^{f}}\right)^{r}+\left(I-G^{T_{2}^{f}}\right)^{r}\right],
    \end{equation}
    \begin{equation}\label{equ:expRenyiNn_twisted_free_fermion}
        e^{\left(1-r\right)\mathcal{\tilde{E}}_{r}}=\mathcal{Z}_{\tilde{T}_{2}^{f}}^{r}\det\left[\left(G^{\tilde{T}_{2}^{f}}\right)^{r}+\left(I-G^{\tilde{T}_{2}^{f}}\right)^{r}\right].
    \end{equation}
\end{subequations}
In these expressions, one can diagonalize the matrices $G^{T_2^f}$ and $G^{\tilde{T}_2^f}$, allowing the untwisted and twisted RNs to be computed from their eigenvalues, respectively. 
In interacting systems, rank-2 Grover determinants maintain stability. 
For instance, the untwisted Grover determinant is expressed as $\det g_{2,\bar{\boldsymbol{s}}}=\det[G_{\mathbf{s}^{(1)}}^{T_{2}^{f}}G_{\mathbf{s}^{(2)}}^{T_{2}^{f}}+(I-G_{\mathbf{s}^{(1)}}^{T_{2}^{f}})(I-G_{\mathbf{s}^{(2)}}^{T_{2}^{f}})]$. 
However, as the rank increases beyond 2, it becomes unavoidable to deal with the inverse of the partially transposed Green's functions, leading to potential numerical difficulties. 

\begin{figure}
    \center
    \includegraphics[width=0.35\textwidth]{./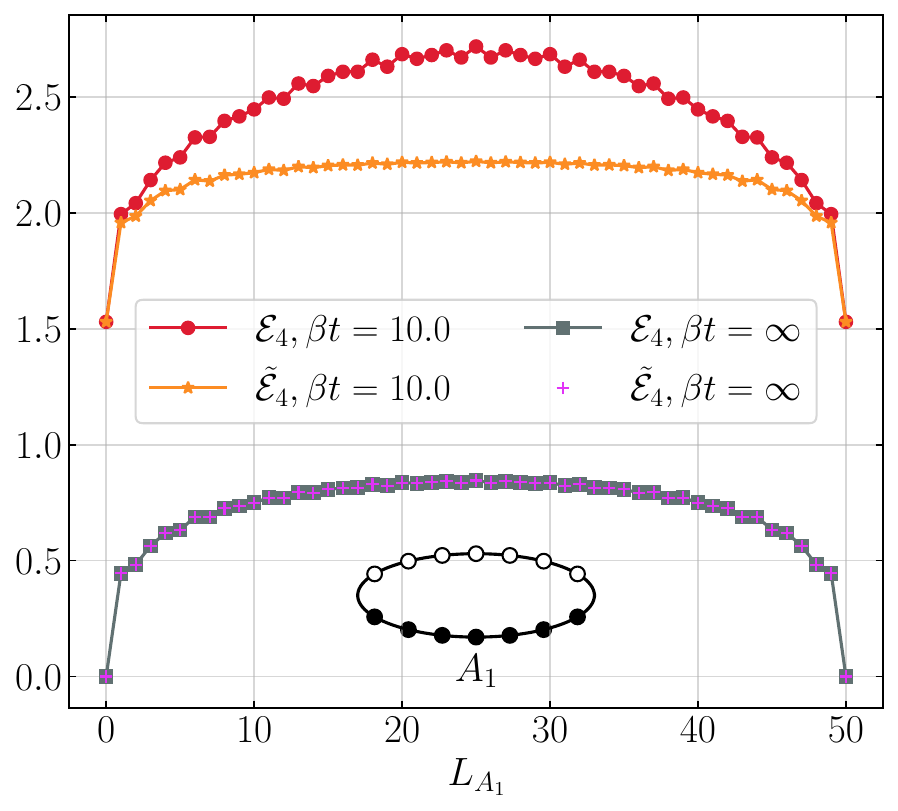}
    \caption{
        The variations of untwisted and twisted rank-4 RNs of a free-fermion chain with $\mu/t=0.5$ are shown as functions of subsystem length $L_{A_1}$ for two temperatures (finite-temperature and ground-state).  
        The chain length is $L_A=50$ and the inset illustrates the bipartite geometry. 
        All the points are calculated using Eq.~\eqref{equ:expRenyiNn_free_fermion}, except for the ground-state $\tilde{\mathcal{E}}_4$ at $L_{A_1}=L_A/2$. 
        At this specific point, the matrix block $(I_2-2G_{22})$ in Eq.~\eqref{equ:twistedPTGreen} becomes singular and we utilize Eq.~\eqref{equ:detGroverbar_Drut_method} instead. 
    }
    \label{fig:free_fermion_standard}
\end{figure}

Given that Eq.~\eqref{equ:expRenyiNn_free_fermion} provides standard results suitable for benchmarking, we demonstrate the issue of numerical instability of Grover determinants in the context of free fermions. 
The Hamiltonian under consideration, $\hat{H}_0=-t\sum_{i=1}^{N}(c^\dagger_ic_{i+1}+\mathrm{H.c.})-\mu\sum_{i}c^\dagger_ic_i$, describes a simple tight-binding periodic chain. 
Utilizing Eq.~\eqref{equ:expRenyiNn_free_fermion}, we calculate the rank-4 untwisted and twisted RNs as functions of subsystem length $L_{A_1}$, as depicted in Fig.~\ref{fig:free_fermion_standard}. 
At finite temperatures, the untwisted negativity differs from the twisted negativity, whereas they coincide when considering ground state averages. 
These observations align with our theoretical proof of Eq.~\eqref{equ:moment_4k_with_parity}. 
At finite temperatures, Fock states with varying total particle numbers mix to form a Gibbs state, which lacks a well-defined parity. 
In contrast, the ground state, along with all other eigenstates of $\hat{H}_0$, possesses a well-defined parity due to the parity symmetry preserved by the Hamiltonian. 

In the following discussion, we seek stable approaches to compute the Grover determinants without relying on Eq.~\eqref{equ:expRenyiNn_free_fermion}, as this formula is impractical for interacting systems, where the matrices $G^{T_2^f/\tilde{T}_2^f}_{\mathbf{s}^{(i)}}$ for different $i$ do not commute. 
We will begin by demonstrating the singularity of the partially transposed Green's function matrices and then introduce two solutions to stabilize the computation. 

\begin{figure*}[ht]
    \center
    \includegraphics[width=\textwidth]{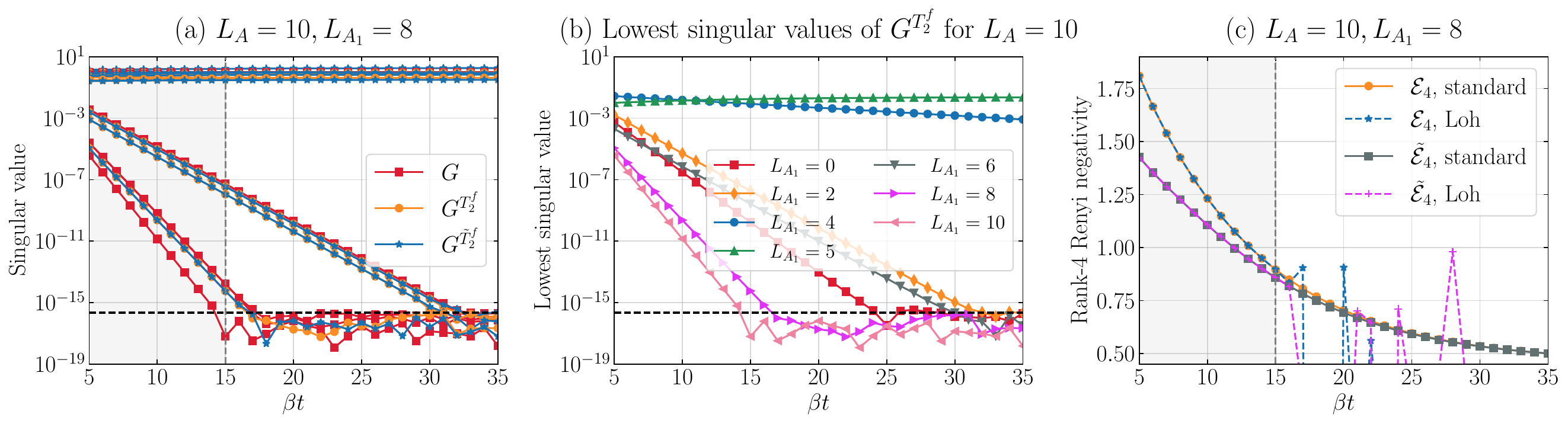}
    \caption{
        The numerical instability of high-rank Grover determinants and its stabilization in the high-temperature regime, demonstrated in the context of a $L_A=10$ free-fermion chain with $\mu/t=0.5$. 
        The bipartition geometry is consistent with that shown in Fig.~\ref{fig:free_fermion_standard}. 
        (a) All the singular values of matrices $G$ (red squares), $G^{T_2^f}$ (orange circles), and $G^{\tilde{T}_2^f}$ (blue stars) are plotted against the dimensionless inverse temperature $\beta t$, and for the latter two matrices, the subsystem length $L_{A_1}$ is $8$. 
        As $\beta$ increases, several singular values of the three matrices decrease exponentially. 
        The matrix inversion becomes unstable when the smallest singular value approaches machine precision, indicated by the black dashed horizontal line, which is approximately $2.22\times 10^{-16}$ for double-precision floating-point numbers. 
        (b) The lowest singular values of $G^{T_2^f}$ are shown as functions of inverse temperature $\beta$ for various subsystem lengths $L_{A_1}$. 
        (c) The rank-4 RNs, calculated using both the standard formula in Eq.~\eqref{equ:expRenyiNn_free_fermion} (origin circles and grey squares) and Loh's stable inversion formula in Eq.~\eqref{equ:logdetGrover_loh} (blue stars and purple plus signs), are plotted as functions of $\beta t$.
        Loh's stable inversion breaks down beyond $\beta t=15$, as indicated by the grey dashed vertical line, which aligns with (a). 
        Within the gray-filled region, the stable inverse formula in Eq.~\eqref{equ:logdetGrover_loh} for Grover determinants remains effective. 
    }
    \label{fig:numerical_instability}
\end{figure*}

\subsection{Singularity of the Green's function matrix}

If one naively performs the calculation following the formula, say, for the untwisted case, 
\begin{equation}\label{equ:detGrover_general}
    \det g_{r,\bar{\boldsymbol{s}}}=\det\left[\prod_{i=r}^{1}G_{\mathbf{s}^{(i)}}^{T_{2}^{f}}\left[I+\prod_{i=r}^{1}\left(\left(G_{\mathbf{s}^{(i)}}^{T_{2}^{f}}\right)^{-1}-I\right)\right]\right],
\end{equation}
a nonsense, exponentially large value would result. 
The issue arises from the unstable inverse of the partially transposed Green's functions.
In Fig.~\ref{fig:numerical_instability}(a), we plot the singular values of $G$, $G^{T_2^f}$, and $G^{\tilde{T}_2^f}$ against the inverse temperature $\beta$. 
While generally less severe, the partially transposed Green's functions ($G^{T_2^f}$ and $G^{\tilde{T}_2^f}$) indeed inherit singularities from $G$. 
The degree of singularity exhibits an evident dependence on temperature. Specifically, as the temperature decreases, the smallest few singular values of $G^{T_2^f/\tilde{T}_2^f}$ decrease exponentially. 
When the smallest singular value approaches machine precision (below the so-called ``breakdown temperature''), the matrix inversion becomes numerically unstable.

Additionally, the degree of instability is strongly influenced by the specific bipartite geometry.
In fact, Fig.~\ref{fig:numerical_instability}(a) only exhibits a particularly unstable case for $L_A=10$ (namely, $L_{A_1}=8$). 
In Fig.~\ref{fig:numerical_instability}(b), we present the lowest singular values of $G^{T_2^f}$ for various subsystem lengths $L_{A_1}$, revealing that the breakdown temperature varies significantly with $L_{A_1}$. 
For most values of $L_{A_1}$, the lowest singular values exponentially decay with increasing $\beta$. 
Remarkably, exactly at the equal-bipartition point ($L_{A_1}=L_A/2=5$), the lowest singular value is insensitive to temperature, allowing $G^{T_2^f}$ to be considered as a non-singular matrix. 
This permits the direct evaluation of Grover determinants using Eq.~\eqref{equ:detGrover_general} for equal-bipartition geometry, at very low temperatures and even in ground-state calculations. 
A general conclusion is that, as $L_{A_1}$ approaches the equal-bipartition point, the lowest singular value of $G^{T_2^f/\tilde{T}_2^f}$ decreases more slowly as $\beta$ increases. 

These observations—namely, that $G^{T_2^f/\tilde{T}_2^f}$ exhibits increasingly singular with decreasing temperature and as the geometry deviates from equal bipartition—remain valid for two-dimensional free systems and for interacting systems after HS transformation, based on practical experience. 
It is unsurprising that the overall singularity is significantly affected by the model under consideration and its specific parameters. 
In the context of interacting systems, within the framework of DQMC, additional factors such as the HS transformation employed and the specific auxiliary-field configuration also play crucial roles.
We conclude that the severity of the singularity is contingent upon three factors: (i) the model and the HS transformation employed, (ii) the bipartition geometry, and (iii) the temperature. 

\subsection{Solution 1: Loh's stable inversion formula and regularization of Green's functions}

Our first solution insists on working with the inversion of the partially transposed Green's functions and aims at stabilizing the Grover determinant calculation in Eq.~\eqref{equ:detGrover_general}. 
We first show that, prior to the breakdown point where the lowest singular values are close to machine precision, the Grover determinant can still be computed stably using a stabilization scheme based on matrix decomposition~\cite{Loh1989StableMatrixMultiplicationAlgorithms,SciPostPhys.Core2020Bauer}. 
Subsequently, we introduce a regularization scheme~\cite{Phys.Rev.B2014Assaad} for the partially transposed Green's functions to extend the stable calculation to lower temperatures. 

The stable formula bears a close resemblance to the methodologies employed for the stable computation of weights in DQMC (refer to Appendix \ref{app:numerical_stabilization_DQMC} for a brief overview of numerical stabilization in DQMC). 
Indeed, the term $(G_{\mathbf{s}}^{T_{2}^{f}})^{-1}-I$ is reminiscent of the matrix $B_\mathbf{s}(\beta,0)=(G_\mathbf{s}^{-1}-I)$ in Eq.~\eqref{equ:DQMC_B}, and they should exhibit similar mathematical behavior, characterized by the presence of exponentially large and small scales, leading to an exponentially large condition number~\cite{Loh1989StableMatrixMultiplicationAlgorithms,SciPostPhys.Core2020Bauer}. 
For convenience, we denote $\mathbb{B}_i \equiv (G_{\mathbf{s}^{(i)}}^{T_{2}^{f}})^{-1} - I$, and sometimes use $\mathbb{B}_{\mathbf{s}^{(i)}}$ to emphasize its dependence on $\mathbf{s}^{(i)}$. 
However, a more challenging situation arises here since $\mathbb{B}_i$ is not known from a priori (unlike $B(\beta,0)$, which can be calculated using the decoupled Hamiltonians $e^{K_l}$ and $e^{V_l[\mathbf{s}]}$). $\mathbb{B}_i$ must be determined by inverting $G^{T_2^f}_{\mathbf{s}^{(i)}}$. 
After obtaining $\mathbb{B}_i$, we perform UDV decomposition\footnote{Specifically, it can either be a SVD decomposition, or a QR decomposition followed by extracting the diagonal elements of R as D. See Appendix \ref{app:numerical_stabilization_DQMC} for details. } on $\mathbb{B}_i$, 
and then utilize stable multiplication and inversion formulas to separate large and small scales in the matrices as much as possible.
Analogous to the $B(\tau,\tau^\prime)$ matrices in Eq.~\eqref{equ:DQMC_B}, we define
\begin{equation}\label{equ:mathbbBii}
    \mathbb{B}\left(i,i^\prime\right)\equiv \prod_{j=i}^{i^\prime+1}\mathbb{B}_j=\mathbb{B}_i\mathbb{B}_{i-1}\cdots \mathbb{B}_{i^\prime+1},
\end{equation}
where $0\leq i^\prime\leq i\leq r$. It also satisfies $\mathbb{B}(i,i)=I$.  
After performing matrix decomposition $\mathbb{B}\left(r,0\right)=UDV$, we have\footnote{One could also split the $\mathbb{B}$ chain into two UDV decompositions and then use the corresponding formula for a stable inverse of the form, say, $(I+UDVUDV)^{-1}$. This approach is used in the incremental algorithm (see Appendix~\ref{app:schemes12}).}
\begin{equation}\label{equ:Getg00}
    \mathbb{G}\equiv\left[I+\mathbb{B}\left(r,0\right)\right]^{-1}=\left[D_{+}^{-1}U^{\dagger}+D_{-}V\right]^{-1}D_{+}^{-1}U^{\dagger},
\end{equation}
where $D_{+}=\max (D, 1)$ and $D_{-}=\min (D, 1)$. 
Subsequently, we obtain the stable formula for the Grover determinant:
\begin{equation}\label{equ:logdetGrover_loh}
    \begin{aligned}
        \ln\det g_{r,\bar{\boldsymbol{s}}}=&\sum_{i=1}^{r}\ln\det G_{\mathbf{s}^{(i)}}^{T_{2}^{f}}+\ln\det U+\ln\det D_{+}\\&+\ln\det\left(D_{+}^{-1}U^{\dagger}+D_{-}V\right).
    \end{aligned}
\end{equation}
A similar formula can be obtained for the twisted case. 
As depicted in Fig.~\ref{fig:numerical_instability}(b), when the temperature is relatively moderate and within the grey-shaded region, the stabilized formula (Eq.~\eqref{equ:logdetGrover_loh}) works well, yielding results consistent with those obtained from Eq.~\eqref{equ:expRenyiNn_free_fermion}. 
However, as the temperature decreases further, Eq.~\eqref{equ:logdetGrover_loh} fails to benchmark against Eq.~\eqref{equ:expRenyiNn_free_fermion}. 
This breakdown occurs at a temperature around $\beta t=15$, where the lowest singular value of $G^{T_2^f}$ or $G^{\tilde{T}_2^f}$ approaches machine precision, as illustrated in Fig.~\ref{fig:numerical_instability}(a). 

In summary, we have demonstrated that the singularity of partially transposed Green's functions leads to numerical instability in calculating high-rank Grover determinants.
The workaround proposed in Eq.~\eqref{equ:logdetGrover_loh} borrows the idea of numerical stabilization in DQMC~\cite{Loh1989StableMatrixMultiplicationAlgorithms,SciPostPhys.Core2020Bauer} and was also mentioned in Ref.~\cite{Phys.Rev.B2024Zhang} for computing high-rank REE. 
However, this formula is only effective when the smallest singular value of the partially transposed Green's functions is not close to machine precision. 
This obstacle prevents us from stably calculating high-rank RNs for generic bipartition at very low temperatures, including the zero-temperature or ground-state RNs, which corresponds to the projective QMC algorithm~\cite{Ann.Phys.1986Sugiyama,Europhys.Lett.1989Sorella,Int.J.ModernPhys.B1989Sorella}. 

To stabilize the computation of high-rank ground-state REE, Assaad \textit{et al.} proposed a regularization scheme for the Green's function matrices~\cite{Phys.Rev.B2014Assaad}. 
Before reducing the Green's function matrix to a subsystem, they regularized the matrix by adding a small positive constant $\Lambda$ to the zero eigenvalues and subtracting $\Lambda$ from other eigenvalues. 
Since they used the naive formula (similar to Eq.~\eqref{equ:detGrover_general}) to calculate high-rank REE, the regularization parameter $\Lambda$ should be large enough to stabilize the calculation.   
Additionally, the physical results should be tested to ensure they converge as $\Lambda$ decrease ($\Lambda\sim 10^{-5}$ is a good choice for the Kane-Mele Hubbard model they studied~\cite{Phys.Rev.B2014Assaad}). 
Motivated by these findings, we propose a new regularization scheme for partially transposed Green's function matrices, which can harbor complex eigenvalues.  
Instead of adding $\Lambda$ to the matrices' zero eigenvalues, we incorporate $\Lambda$ into their (nearly) vanishing singular values. 
Since Eq.~\eqref{equ:logdetGrover_loh} remains numerically stable when the smallest singular value is not near machine precision, a very small $\Lambda$ (approximately $10^{-13}$) suffices to stabilize the calculation. 
It is anticipated that the physical results should not be significantly affected by such a small $\Lambda$. 
Indeed, as shown in Fig.~\ref{fig:free_fermion_loh_regularization_Drut}, we applied this regularization scheme to the same free-fermion chain as in Fig.~\ref{fig:numerical_instability} and found that the results closely match those obtained from Eq.~\eqref{equ:expRenyiNn_free_fermion} across all temperatures. 
For interacting systems, this method is sufficiently accurate when the systematic error introduced by $\Lambda$ is negligible compared to the statistical error from Monte Carlo sampling. 

\begin{figure}
    \center
    \includegraphics[width=0.4\textwidth]{./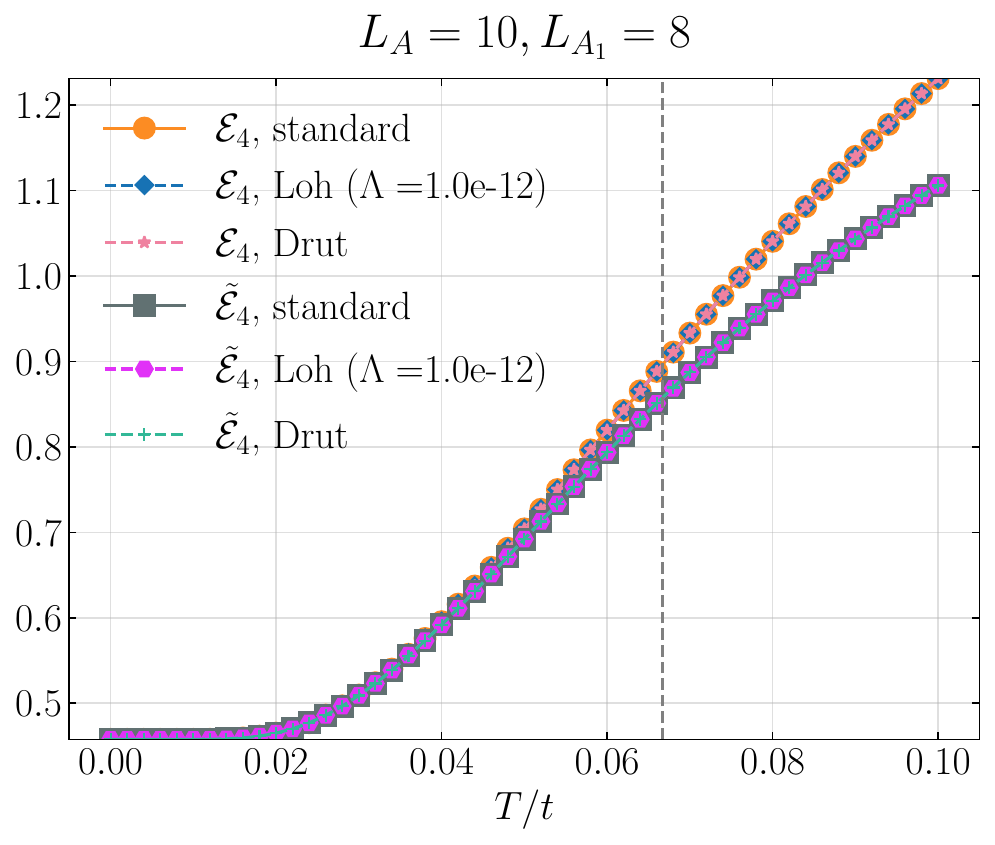}
    \caption{
        The rank-4 RNs of the same free-fermion chain studied in Fig.~\ref{fig:numerical_instability}(c) are calculated using the standard formula in Eq.~\eqref{equ:expRenyiNn_free_fermion}, Loh's stable inversion formula in Eq.~\eqref{equ:logdetGrover_loh} with regularized $G^{{T}_2^f}$ and $G^{\tilde{T}_2^f}$, and Drut's formula in Eqs.~\eqref{equ:detGrover_Drut_method} and \eqref{equ:detGroverbar_Drut_method}. 
        These results are plotted as functions of temperature $T/t$. 
        The grey dashed vertical line indicates the breakdown temperature of Loh's formula without regularization (see Fig.~\ref{fig:numerical_instability}). 
    }
    \label{fig:free_fermion_loh_regularization_Drut}
\end{figure}

\subsection{Solution 2: Drut's reconstruction of the Grover matrix}
The second solution is quite elegant and simple. Drut and Porter found a way to reconstruct high-rank Grover matrices in the context of REE, such that one can avoid the inversion of reduced Green's function matrices~\cite{Phys.Rev.E2016Drut}. 
Let us write the corresponding formulas for RNs. First we consider the untwisted case: 
\begin{subequations}\label{equ:detGrover_Drut_method}
    \begin{equation}\label{equ:detGrover_Drut}
        \det g_{r,\bar{\boldsymbol{s}}}=\det\mathcal{G}_{1\rightarrow r,\bar{\boldsymbol{s}}}^{T_{2}^{f}}\det K_{r,\bar{\boldsymbol{s}}}=\det T_{r,\bar{\boldsymbol{s}}},
    \end{equation}
where $\mathcal{G}_{1\rightarrow r,\bar{\boldsymbol{s}}}^{T_{2}^{f}}=\mathrm{diag}[G_{\mathbf{s}^{(1)}}^{T_{2}^{f}},\dots,G_{\mathbf{s}^{(r)}}^{T_{2}^{f}}]$ is a block-diagonal matrix with $G_{\mathbf{s}^{(i)}}^{T_{2}^{f}}$ as the $i$-th block, $K_{r,\bar{\boldsymbol{s}}}$ is given by
    \begin{equation}\label{equ:Kmat}
        K_{r,\bar{\boldsymbol{s}}}\equiv\left(\begin{array}{ccccc}
            I & 0 & \cdots & 0 & \mathbb{B}_{\mathbf{s}^{(r)}}\\
            -\mathbb{B}_{\mathbf{s}^{(1)}} & I & \cdots & 0 & 0\\
            0 & -\mathbb{B}_{\mathbf{s}^{(2)}} & \ddots & 0 & 0\\
            \vdots & \vdots & \ddots & \vdots & \vdots\\
            0 & 0 & \cdots & -\mathbb{B}_{\mathbf{s}^{(r-1)}} & I
            \end{array}\right),
    \end{equation}
and thus $T_{r,\bar{\boldsymbol{s}}}$ is given by
    \begin{equation}\label{equ:Tmat}
        \begin{aligned} & T_{r,\bar{\boldsymbol{s}}}\equiv K_{r,\bar{\boldsymbol{s}}}\mathcal{G}_{1\rightarrow r,\bar{\boldsymbol{s}}}^{T_{2}^{f}}\\
 & =\left(\begin{array}{ccccc}
G_{\mathbf{s}^{(1)}}^{T_{2}^{f}} & 0 & \cdots & 0 & I-G_{\mathbf{s}^{(r)}}^{T_{2}^{f}}\\
G_{\mathbf{s}^{(1)}}^{T_{2}^{f}}-I & G_{\mathbf{s}^{(2)}}^{T_{2}^{f}} & \cdots & 0 & 0\\
\vdots & \vdots & \ddots & \vdots & \vdots\\
0 & 0 & \cdots & G_{\mathbf{s}^{(r-1)}}^{T_{2}^{f}} & 0\\
0 & 0 & \cdots & G_{\mathbf{s}^{(r-1)}}^{T_{2}^{f}}-I & G_{\mathbf{s}^{(r)}}^{T_{2}^{f}}
\end{array}\right).
\end{aligned}
    \end{equation}
\end{subequations}
To calculate the twisted Grover determinants, one can similarly define $\tilde{T}_{r,\bar{\boldsymbol{s}}}$, satisfying $\det \tilde{g}_{r,\bar{\boldsymbol{s}}}=\det \tilde{T}_{r,\bar{\boldsymbol{s}}}$, by replacing $G_{\mathbf{s}^{(i)}}^{T_{2}^{f}}$ with $G_{\mathbf{s}^{(i)}}^{\tilde{T}_{2}^{f}}$ in Eq.~\eqref{equ:Tmat}. 
Additionally, the determinant of the modified Grover matrix $\underline{\tilde{g}}_{r,\bar{\boldsymbol{s}}}$ defined in Eq.~\eqref{equ:gkbar_twisted_DQMC} can also be reconstructed following a similar procedure. 
One can show that
\begin{subequations}\label{equ:detGroverbar_Drut_method}
\begin{equation}\label{equ:detGroverbar_Drut}
    \det\underline{\tilde{g}}_{r,\bar{\boldsymbol{s}}}=\det\mathcal{G}_{r\rightarrow 1,\bar{\boldsymbol{s}}}^{T_{2}^{f}}\det\underline{\tilde{K}}_{r,\bar{\boldsymbol{s}}}=\det\underline{\tilde{T}}_{r,\bar{\boldsymbol{s}}}, 
\end{equation}
where $\mathcal{G}_{r\rightarrow 1,\bar{\boldsymbol{s}}}^{T_{2}^{f}}=\mathrm{diag}[G_{\mathbf{s}^{(r)}}^{T_{2}^{f}},\dots,G_{\mathbf{s}^{(1)}}^{T_{2}^{f}}]$, $\underline{\tilde{K}}_{r,\bar{\boldsymbol{s}}}$ is given by
\begin{equation}\label{equ:Kbarmat}
    \underline{\tilde{K}}_{r,\bar{\boldsymbol{s}}}=\left(\begin{array}{ccccc}
I & -\mathbb{B}_{\mathbf{s}^{(r)}}U_{2} & \cdots & 0 & 0\\
0 & I & \cdots & 0 & 0\\
\vdots & \vdots & \ddots & \vdots & \vdots\\
0 & 0 & \cdots & I & -\mathbb{B}_{\mathbf{s}^{(2)}}U_{2}\\
\mathbb{B}_{\mathbf{s}^{(1)}}U_{2} & 0 & \cdots & 0 & I
\end{array}\right),
\end{equation}
and 
\begin{equation}\label{equ:Tbarmat}
    \begin{aligned}
        &\underline{\tilde{T}}_{r,\bar{\boldsymbol{s}}}\equiv\mathcal{G}_{r\rightarrow 1,\bar{\boldsymbol{s}}}^{T_{2}^{f}}\underline{\tilde{K}}_{r,\bar{\boldsymbol{s}}}\\&=\left(\begin{array}{cccc}
        G_{\mathbf{s}^{(r)}}^{T_{2}^{f}} & \left(G_{\mathbf{s}^{(r-1)}}^{T_{2}^{f}}-I\right)U_{2} & \cdots & 0\\
        0 & G_{\mathbf{s}^{(r-1)}}^{T_{2}^{f}} & \cdots & 0\\
        \vdots & \vdots & \ddots & \vdots\\
        \left(I-G_{\mathbf{s}^{(1)}}^{T_{2}^{f}}\right)U_{2} & 0 & \cdots & G_{\mathbf{s}^{(1)}}^{T_{2}^{f}}
        \end{array}\right).
    \end{aligned}
\end{equation}
\end{subequations}
Here, the matrix $U_2$ is defined above Eq.~\eqref{equ:twistedPTGreen}. 
Using Eqs.~\eqref{equ:detGrover_Drut_method} and \eqref{equ:detGroverbar_Drut_method}, we revisit the computation of RNs in Fig.~\ref{fig:free_fermion_loh_regularization_Drut}. The results are consistent with those obtained from both Eq.~\eqref{equ:expRenyiNn_free_fermion} and Eq.~\eqref{equ:logdetGrover_loh} with regularized Green's function matrices for all temperatures.  

\section{Incremental algorithm for rank-$r$ R\'{e}nyi negativity}\label{sec:incremental}
As system size $L$ or temperature increases, the RN in general become larger, and the associated Grover determinant becomes exponentially small, which causes the second challenge in calculating high-rank RN. 
This challenge is well-documented in the literature on various quantum Monte Carlo methods aimed at calculating REE, and several incremental algorithms have been proposed to address this issue~\cite{Phys.Rev.Lett.2020DEmidio,npjQuantumMater.2022Zhao,Phys.Rev.Lett.2022Zhao,Phys.Rev.B2023Pan,npjQuantumInf2025Liao,ArXiv2023Liao,Phys.Rev.B2024Zhang,Phys.Rev.Lett.2024DEmidio}. 

In this section, we develop numerically stable incremental algorithms for calculating both untwisted and twisted RNs.
For each variant, we first establish the theoretical framework of the incremental algorithm, then propose specific, computationally stable update scheme.
Finally, we demonstrate that the RNR can be efficiently computed within a single QMC run, eliminating the need for separate calculations of RN and thermal R\'{e}nyi entropy $S_{r}^{\mathrm{th}}$.

\subsection{Incremental algorithm for untwisted R\'{e}nyi negativity}
Let us first rewrite the DQMC formula for untwisted RN in Eq.~\eqref{equ:expRenyiNk_untwisted_DQMC}, 
\begin{equation}\label{equ:expRenyiNk-naive}
    e^{\left(1-r\right)\mathcal{E}_{r}}=\frac{\sum_{\mathbf{s}^{(1)}\cdots\mathbf{s}^{(r)}}W_{\mathbf{s}^{(1)}\cdots\mathbf{s}^{(r)}}\det g_{r,\bar{\boldsymbol{s}}}}{\sum_{\mathbf{s}^{(1)}\cdots\mathbf{s}^{(r)}}W_{\mathbf{s}^{(1)}\cdots\mathbf{s}^{(r)}}},
\end{equation}
where $W_{\mathbf{s}^{(1)}\cdots\mathbf{s}^{(r)}}=\prod_{i=1}^{r}w_{\mathbf{s}^{(i)}}$ represents the joint DQMC weight in the combined configuration space $\{\mathbf{s}^{(1)}\cdots\mathbf{s}^{(r)}\}$. 
We refer to $\mathbf{s}^{(i)}$ as the configuration at the $i$-th replica. 
When $\mathcal{E}_r$ is large, the Grover determinant, $\det g_r$, is typically exponentially small. 
The appearance of sampling spikes leads to the infinite variance problem~\cite{Phys.Rev.E2016Shi}, significantly affecting the evaluation of $\mathcal{E}_r$ through Eq.~\eqref{equ:expRenyiNk-naive}. 
Further studies on REE unveiled that the distribution of the Grover determinant defined by the reduced Green's function is non-Gaussian, whereas the logarithm and the $N_{\mathrm{inc}}$-th root of the Grover determinant follow a Gaussian distribution~\cite{npjQuantumInf2025Liao}. 
These findings motivate a similar incremental algorithm for untwisted RN, 
\begin{equation}
    e^{(1-r)\mathcal{E}_{r}}=\frac{Z_{N_{\mathrm{inc}}}}{Z_{N_{\mathrm{inc}}-1}}\cdots\frac{Z_{k+1}}{Z_{k}}\cdots\frac{Z_{1}}{Z_{0}},
\end{equation}
with the partition function defined as
\begin{equation}
    Z_{k}=\sum_{\mathbf{s}^{(1)}\cdots\mathbf{s}^{(r)}}W_{\mathbf{s}^{(1)}\cdots\mathbf{s}^{(r)}}\left(\det g_{r,\bar{\boldsymbol{s}}}\right)^{\frac{k}{N_{\mathrm{inc}}}}.
\end{equation}
Each intermediate incremental process (denoted by $k$) is an average of an ensemble existing in the replicated configuration space: 
\begin{equation}
    \begin{aligned}
        \frac{{Z}_{k+1}}{{Z}_{k}}&=\frac{\sum_{\mathbf{s}^{(1)}\cdots\mathbf{s}^{(r)}}{W}_{\mathbf{s}^{(1)}\cdots\mathbf{s}^{(r)}}^{\left(k\right)}\left(\det{{g}}_{r,\bar{\boldsymbol{s}}}\right)^{\frac{1}{N_{\mathrm{inc}}}}}{\sum_{\mathbf{s}^{(1)}\cdots\mathbf{s}^{(r)}}{W}_{\mathbf{s}^{(1)}\cdots\mathbf{s}^{(r)}}^{\left(k\right)}}\\&=\left\langle \left(\det{{g}}_{r,\bar{\boldsymbol{s}}}\right)^{\frac{1}{N_{\mathrm{inc}}}}\right\rangle _{{W}^{\left(k\right)}},
    \end{aligned}
\end{equation}
where $W_{\mathbf{s}^{(1)}\cdots\mathbf{s}^{(r)}}^{\left(k\right)}=W_{\mathbf{s}^{(1)}\cdots\mathbf{s}^{(r)}}\left(\det g_{r,\bar{\boldsymbol{s}}}\right)^{\frac{k}{N_{\mathrm{inc}}}}$. 
For a sign-problem-free Monte Carlo sampling of each factor, it is essential to prove that $\det g_{r,\bar{\boldsymbol{s}}}$ is real and positive for any rank $r$. 
In previous work~\cite{Nat.Commun.2025Wanga}, we proved that for two classes of sign-problem-free models, represented by the Hubbard model and $t$-$V$ model, the Grover determinant of any rank is indeed real and positive. 
We also provide the proof in Appendix~\ref{app:sign_detGrvoer}. 

\subsection{Local update scheme for untwisted R\'{e}nyi negativity}\label{sec:scheme3}
\begin{figure}
    \center
    \includegraphics[width=0.35\textwidth]{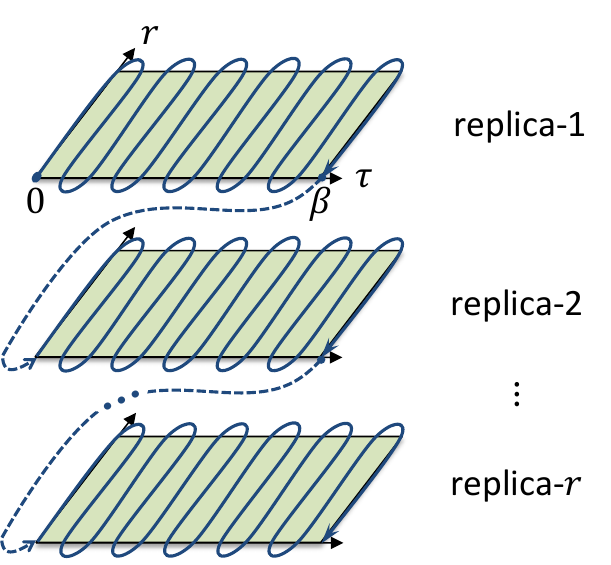}
    \caption{Illustration of the (forward) local sweep in each incremental process of the incremental algorithm for rank-$r$ RN. }
    \label{fig:localsweep}
\end{figure}
As illustrated in Fig.~\ref{fig:localsweep}, we sequentially update the $\mathbf{s}$-configurations, starting from the first replica $\mathbf{s}^{(1)}$ and proceeding to the last replica $\mathbf{s}^{(r)}$. 
Within each replica, updates occur from the first spacetime coordinate to the last. 
When updating the auxiliary field at a specific spacetime coordinate in replica-$i$, the acceptance ratio is given by
\begin{equation}
    R_{i}^{\left(k\right)}\equiv\frac{W_{\mathbf{s}^{(1)}\cdots\mathbf{s}^{(i)\prime}\cdots\mathbf{s}^{(r)}}^{\left(k\right)}}{W_{\mathbf{s}^{(1)}\cdots\mathbf{s}^{(i)}\cdots\mathbf{s}^{(r)}}^{\left(k\right)}}=\frac{w_{\mathbf{s}^{(i)\prime}}}{w_{\mathbf{s}^{(i)}}}\left(\frac{\det g_{r,\bar{\boldsymbol{s}}^\prime}}{\det g_{r,\bar{\boldsymbol{s}}}}\right)^{k/N_{\mathrm{inc}}}. 
\end{equation}
In practice, each calculation task associated with the intermediate factor labeled by $k$ is assigned to a single CPU, allowing these factors to be computed in parallel. 
One essential task for every CPU is to compute the ratio of Grover determinants before and after the update, ${\det g_{r,\bar{\boldsymbol{s}}^\prime}}/{\det g_{r,\bar{\boldsymbol{s}}}}$. 
The only difference between the CPUs lies in the root order of the Grover determinant ratio when evaluating the total ratio $R_i^{(k)}$. 

To design a fast update scheme for Monte Carlo sampling, the initial step is to select one or a few central matrices to update, which will be referred to as the update objects in the following discussion.  
A practical Monte Carlo sampling scheme involves both the update formula and the stabilization formula for these update objects. 
In normal DQMC, the update object is the Green's function matrix $G_{\mathbf{s},ij}$ in Eq.~\eqref{equ:DQMC_equaltimeG}. 
The inverse determinant ratio ($\det\left[G_{\mathbf{s}^{(i)}}\right]/\det\left[G_{\mathbf{s}^{(i)\prime}}\right]$) corresponds to the ``quantum part'' of ratio $w_{\mathbf{s}^{(i)\prime}}/w_{\mathbf{s}^{(i)}}$, excluding the ratio of the coupling coefficients $\alpha[\mathbf{s}^{(i)}]=\prod_l\alpha[s^{(i)}_l]$ (see Eq.~\eqref{equ:Z_DQMC}). 
A straightforward update scheme for an incremental algorithm could involve splitting the total ratio $R^{(k)}_i$ into three components, $R_{i}^{\left(k\right)}=r_{1}(r_{2}r_{3})^{k/N_\mathrm{inc}}$, where
\begin{equation}\label{equ:r1r2r3}
    \begin{aligned}
    &r_{1}=\frac{w_{\mathbf{s}^{(i)\prime}}}{w_{\mathbf{s}^{(i)}}},\ r_{2}=\frac{\det G_{\mathbf{s}^{(i)\prime}}^{T_{2}^{f}}}{\det G_{\mathbf{s}^{(i)}}^{T_{2}^{f}}},
     \\
    &r_{3}=\frac{\det\left[I+\mathbb{B}_{r}\cdots\mathbb{B}_{i}^{\prime}\cdots\mathbb{B}_{1}\right]}{\det\left[I+\mathbb{B}_{r}\cdots\mathbb{B}_{i}\cdots\mathbb{B}_{1}\right]}.
    \end{aligned}
\end{equation}
It is important to note that as one sweeps through the local auxiliary fields back and forth, the Grover determinant in the weight should be fixed at a specific imaginary time to ensure the weight is well-defined~\cite{Phys.Rev.B2023Pan,npjQuantumInf2025Liao,ArXiv2023Liao,Phys.Rev.B2024Zhang,Phys.Rev.Lett.2024DEmidio}. 
This point is typically set to $\tau=0$ in finite-temperature DQMC, so all instances of $G_{\mathbf{s}^{(i)}}^{T_{2}^{f}}$ appearing in $r_2$ and $r_3$ in Eq.~\eqref{equ:r1r2r3}, including those inside $\mathbb{B}_i$, are actually $G_{\mathbf{s}^{(i)}}^{T_{2}^{f}}(0,0)$. 

The calculation of $r_1$ follows standard DQMC procedures.  
In the following discussion, we introduce an update scheme for calculating the remaining part of the ratio which demonstrates sufficient numerical stability for the spinless $t$-$V$ model. 
We also discuss two alternative schemes with simpler update objects but lower numerical stability in Appendix~\ref{app:schemes12}. 
For convenience, Table~\ref{tab:dimension} summarizes the dimensions of all key matrices across these schemes.
\begin{table}[h]
\centering
\begin{tabular}{|c|c|}
\hline
Dimension & Intermediate matrices \\
\hline
$N\times m$ & $U,\mathcal{U},\mathscr{U}$ \\
$m\times N$ & $V,\mathcal{V},\mathbb{V},\mathcal{V}^{a},\mathcal{V}^{b},\mathcal{V}^{c}$ \\
$m\times m$ & $I_m,D$ \\
\hline
\end{tabular}
\caption{
    Table for the dimensions of the intermediate matrices in the key steps in the update schemes. 
    $m$ is the number of non-zero entries in the deviation matrix $\Delta$, see the explanation below Eq.~\eqref{equ:update_G00}.
}
\label{tab:dimension}
\end{table}

Before introducing the actual update object employed in this scheme, we must first derive an essential formula for the local update of the partially transposed Green's function matrix. 
Since the $G^{T_2^f}_{\mathbf{s}^{(i)}}$ in Eq.~\eqref{equ:r1r2r3} originates from the equal-time Green's function at $\tau=0$, we begin with the update formula for $G_{\mathbf{s}^{(i)}}(0,0)$ at time $\tau$\footnote{Technically, due to the symmetric Trotter decomposition, the Green's functions in the update formula have undergone partial propagation, i.e., $G(\tau,\tau)\rightarrow e^{-K}G(\tau,\tau)e^K$, $G(\tau,0)\rightarrow e^{-K}G(\tau,0)$, and $G(0,\tau)\rightarrow G(0,\tau)e^K$. }
\begin{equation}\label{equ:update_G00}
    G_{\mathbf{s}^{(i)\prime}}\left(0,0\right)=G_{\mathbf{s}^{(i)}}\left(0,0\right)+G_{\mathbf{s}^{(i)}}\left(0,\tau\right)U(I_{m}+VU)^{-1}V\left(\tau,0\right).
\end{equation}
Here is an explanation of the above formula. 
We first define the deviation matrix between the interaction matrix after and before the update, $\Delta\equiv e^{-V[\mathbf{s}^{(i)}]}e^{V[\mathbf{s}^{(i)\prime}]}-I$, and restrict ourselves to consider only $N\times N$ diagonal $\Delta$ with $m$ non-zeros entries. 
For instance, $m=1$ for the Hubbard model, and $m=2$ for the $t$-$V$ model. 
Next, we define projection matrices $P_{N\times m}$ and $P_{m\times N}$, which are cropped from the identity matrix $I=I_{N}$, with columns or rows corresponding to the non-zero entries of $\Delta$. 
The matrices in the above formula can then be expressed as $\Delta=P_{N\times m}DP_{m\times N}$, $U\equiv P_{N\times m}D$, $V\equiv P_{m\times N}(I-G_{\mathbf{s}^{(i)}}(\tau,\tau))$, and $V(\tau,0)\equiv P_{m\times N}G_{\mathbf{s}^{(i)}}(\tau,0)$. 
For simplicity, we abbreviate the subscript $\mathbf{s}^{(i)}$ in these intermediate matrices, but it is essential to remember that they depend on the $\mathbf{s}$-configuration in the current replica before the update. 
The update formula for $G_{\mathbf{s}^{(i)}}^{T_{2}^{f}}$ can be derived by applying Eq.~\eqref{equ:untwistedPTGreen} to Eq.~\eqref{equ:update_G00}:
\begin{equation}\label{equ:update_GT2f}
    \begin{aligned}
        G_{\mathbf{s}^{(i)\prime}}^{T_{2}^{f}}&=G_{\mathbf{s}^{(i)}}^{T_{2}^{f}}\left(I+\mathcal{U}\mathcal{V}\right)=G_{\mathbf{s}^{(i)}}^{T_{2}^{f}}+\mathscr{U}\mathcal{V},\\\mathscr{U}&\equiv\left(\begin{array}{c}
        \left(G_{\mathbf{s}^{(i)}}(0,\tau)\right)_{A_{1}N}\\
        \text{i}\left(G_{\mathbf{s}^{(i)}}(0,\tau)\right)_{A_{2}N}
        \end{array}\right)P_{N\times m}\equiv G_{\mathbf{s}^{(i)}}^{T_{2}^{f}}\mathcal{U},\\\mathcal{V}&\equiv D\left(I_{m}+VU\right)^{-1}\left(\begin{array}{cc}
        V_{mA_{1}}(\tau,0) & \text{i}V_{mA_{2}}(\tau,0)\end{array}\right).
    \end{aligned}
\end{equation}
We assume that the linear indices of the system's degrees of freedom are divided into two continuous subsystems $A_1$ and $A_2$, allowing for convenient expression of the block form of $\mathscr{U}$ and $\mathcal{V}$. 
This approach is easily generalizable to any bipartition, and for different bipartition geometries, the formula above is the only aspect that changes. 

Building upon the established update formula, we now define our primary update object for replica-$i$,
\begin{equation}\label{equ:scheme4ffen}
    \begin{aligned}
        \mathbb{F}_{i}&=\left[G_{\mathbf{s}^{(i)}}^{T_{2}^{f}}\mathbb{B}_{\bcancel{i}}^{-1}G_{\mathbf{s}^{(i-1)}}^{T_{2}^{f}}\mathbb{B}_{i-1}+G_{\mathbf{s}^{(i)}}^{T_{2}^{f}}\mathbb{B}_{i}G_{\mathbf{s}^{(i-1)}}^{T_{2}^{f}}\mathbb{B}_{i-1}\right]^{-1}\\&=\left[G_{\mathbf{s}^{(i)}}^{T_{2}^{f}}\mathbb{F}_{\bcancel{i}}^{-1}+I-G_{\mathbf{s}^{(i-1)}}^{T_{2}^{f}}\right]^{-1},
    \end{aligned}
\end{equation}
where $\mathbb{B}_{\bcancel{i}}\equiv\mathbb{B}\left(i-1,0\right)\mathbb{B}\left(r,i\right)=\mathbb{B}_{i-1}\cdots\mathbb{B}_{1}\mathbb{B}_{r}\cdots\mathbb{B}_{i+1}$ and 
\begin{equation}\label{equ:FcanceliInv}
    \begin{aligned}
        \mathbb{F}_{\bcancel{i}}^{-1}&\equiv\left(\mathbb{B}_{\bcancel{i}}^{-1}-I\right)G_{\mathbf{s}^{(i-1)}}^{T_{2}^{f}}\mathbb{B}_{i-1}\\&=\mathbb{B}_{i+1}^{-1}\cdots\mathbb{B}_{r}^{-1}\mathbb{B}_{1}^{-1}\cdots\mathbb{B}_{i-2}^{-1}G_{\mathbf{s}^{(i-1)}}^{T_{2}^{f}}-\left(I-G_{\mathbf{s}^{(i-1)}}^{T_{2}^{f}}\right).
    \end{aligned}
\end{equation}
It can be verified that the update ratio of $\det \mathbb{F}_i^{-1}$ is equivalent to the update ratio of $\det g_{r,\bar{\boldsymbol{s}}}$, since 
\begin{equation}\label{equ:scheme4detffiInv}
    \begin{aligned}
        \det\mathbb{F}_{i}^{-1}=&\det\left[G_{\mathbf{s}^{(i)}}^{T_{2}^{f}}\right]\det\left[\mathbb{B}_{\bcancel{i}}^{-1}+\mathbb{B}_{i}\right]\det\left[G_{\mathbf{s}^{(i-1)}}^{T_{2}^{f}}\mathbb{B}_{i-1}\right]\\=&\det\left[G_{\mathbf{s}^{(i)}}^{T_{2}^{f}}\right]\det\left[I+\mathbb{B}\left(i,0\right)\mathbb{B}\left(r,i\right)\right]\\&\times\det\left[\mathbb{B}_{\bcancel{i}}^{-1}\right]\det\left[G_{\mathbf{s}^{(i-1)}}^{T_{2}^{f}}\mathbb{B}_{i-1}\right],
    \end{aligned}
\end{equation}
where the third line is independent of $\mathbf{s}^{(i)}$ and cancels out when considering the update ratio.
This definition is based on two basic ideas. 
First, instead of using $\mathbb{B}_i$, we use its inverse, $\mathbb{B}_i^{-1}$, which provides better stability during numerical calculations. 
Second, we explicitly incorporate matrices from neighboring replicas, $G^{T_2^f}_{\mathbf{s}^{(i-1)}}$ and $\mathbb{B}_{i-1}$, into the definition of $\mathbb{F}_i$. 
We note that for high-rank Grover determinants, using the property $\det AB=\det A \det B$, the matrices $G^{T_2^f}_{\mathbf{s}^{(j)}}$ and $\mathbb{B}_j$ can be combined for at most two different $j\in[1,r]$, as the matrices are non-commutative. 
When $r=2$, $\mathbb{F}_i$ reduces to the inverse of the total Grover matrix:
\begin{equation}
    \mathbb{F}_{i}^{-1}=G_{\mathbf{s}^{(i)}}^{T_{2}^{f}}G_{\mathbf{s}^{(i-1)}}^{T_{2}^{f}}+\left(I-G_{\mathbf{s}^{(i)}}^{T_{2}^{f}}\right)\left(I-G_{\mathbf{s}^{(i-1)}}^{T_{2}^{f}}\right)=g_{2,\bar{\boldsymbol{s}}}
\end{equation}
Updating the inverse of the rank-2 Grover matrix is stable and commonly used in practice, as evidenced by previous studies on the incremental algorithm for rank-2 REE~\cite{Phys.Rev.Lett.2024DEmidio,npjQuantumInf2025Liao,ArXiv2023Liao}. 

Updating $\mathbb{F}_i$ in Eq.~\eqref{equ:scheme4ffen} follows from Eq.~\eqref{equ:update_GT2f}:
\begin{equation}\label{equ:scheme4update}
    \mathbb{F}_{i}^{\prime}=\mathbb{F}_{i}\left(I-\mathscr{U}\left(I_m+\mathcal{V}^{c}\mathscr{U}\right)^{-1}\mathcal{V}^{c}\right),
\end{equation}
with the ratio given by
\begin{equation}\label{equ:scheme4r2r3}
    r_{2}r_{3}=\frac{\det g_{r,\bar{\boldsymbol{s}}^\prime}}{\det g_{r,\bar{\boldsymbol{s}}}}=\frac{\det\mathbb{F}_{i}}{\det\mathbb{F}_{i}^{\prime}}=\det\left(I_m+\mathcal{V}^{c}\mathscr{U}\right),
\end{equation}
where $\mathcal{V}^{c}\equiv\mathcal{V}\mathbb{F}_{\bcancel{i}}^{-1}\mathbb{F}_{i}$.  
However, the numerical stabilization of $\mathbb{F}_i$ is more tedious and does not simply reuse the numerical stabilization routines from DQMC (listed in Appendix~\ref{app:numerical_stabilization_DQMC}). 
Nevertheless, the core principle of numerical stabilization remains the same: perform matrix decomposition and separate small and large scales as much as possible. 
A detailed discussion and code implementation are provided in Appendix \ref{app:numerical_stabilization_scheme4}.
Finally, the measurement of $\det g_{r,\bar{\boldsymbol{s}}}$ is given by:
\begin{equation}
    \ln\det g_{r,\bar{\boldsymbol{s}}}=-\ln\det\mathbb{F}_{i}+\sum_{j\neq i,i-1}\ln\det\left[I-G_{\mathbf{s}^{(j)}}^{T_{2}^{f}}\right],
\end{equation}
as derived from Eq.~\eqref{equ:scheme4detffiInv}. 

\subsection{Incremental algorithm for twisted R\'{e}nyi negativity}
The incremental algorithm for rank-$r$ twisted RN can be implemented in a similar manner. 
We start with Eq.~\eqref{equ:expRenyiNk_twisted_DQMC2} and define the following partition function: 
\begin{equation}
    \tilde{Z}_{k}=\sum_{\mathbf{s}^{(1)}\cdots\mathbf{s}^{(r)}}W_{\mathbf{s}^{(1)}\cdots\mathbf{s}^{(r)}}\left(\det\underline{\tilde{g}}_{r,\bar{\boldsymbol{s}}}\right)^{\frac{k}{N_{\mathrm{inc}}}},
\end{equation}
which results in an incremental expression for rank-$r$ twisted RN:
\begin{equation}
    e^{(1-r)\mathcal{\tilde{E}}_{r}}=\frac{\tilde{Z}_{N_{\mathrm{inc}}}}{\tilde{Z}_{N_{\mathrm{inc}}-1}}\cdots\frac{\tilde{Z}_{k+1}}{\tilde{Z}_{k}}\cdots\frac{\tilde{Z}_{1}}{\tilde{Z}_{0}}.
\end{equation}
Each intermediate factor's calculation is assigned to an incremental process (denoted by $k$), with
\begin{equation}
    \begin{aligned}
        \frac{\tilde{Z}_{k+1}}{\tilde{Z}_{k}}&=\frac{\sum_{\mathbf{s}^{(1)}\cdots\mathbf{s}^{(r)}}\tilde{W}_{\mathbf{s}^{(1)}\cdots\mathbf{s}^{(r)}}^{\left(k\right)}\left(\det\underline{\tilde{g}}_{r,\bar{\boldsymbol{s}}}\right)^{\frac{1}{N_{\mathrm{inc}}}}}{\sum_{\mathbf{s}^{(1)}\cdots\mathbf{s}^{(r)}}\tilde{W}_{\mathbf{s}^{(1)}\cdots\mathbf{s}^{(r)}}^{\left(k\right)}}\\&=\left\langle \left(\det\underline{\tilde{g}}_{r,\bar{\boldsymbol{s}}}\right)^{\frac{1}{N_{\mathrm{inc}}}}\right\rangle _{\tilde{W}^{\left(k\right)}},
    \end{aligned}
\end{equation}
where $\tilde{W}_{\mathbf{s}^{(1)}\cdots\mathbf{s}^{(r)}}^{\left(k\right)}=W_{\mathbf{s}^{(1)}\cdots\mathbf{s}^{(r)}}(\det\underline{\tilde{g}}_{r,\bar{\boldsymbol{s}}})^{\frac{k}{N_{\mathrm{inc}}}}$. 
As demonstrated in Appendix \ref{app:sign_detGrvoer}, analogous to the untwisted case, the twisted Grover determinant exhibits real and non-negative properties ($\det \tilde{g}_{r,\bar{\boldsymbol{s}}}\geq 0$) for two classes of sign-problem-free models. 
However, the normalization factor $\mathcal{Z}_{\tilde{T}_{2}^{f},\mathbf{s}^{(i)}}$ can be negative for certain ranks and subsystem sizes. 
In particular, for the first class of models, such as the Hubbard model, the normalization factor $\mathcal{Z}_{\tilde{T}_{2}^{f},\mathbf{s}^{(i)}}$ is proportional to $\left(-1\right)^{N_{A_{2}}}$, where $N_{A_{2}}$ represents the number of sites in subsystem $A_2$ (see Eq.~\eqref{equ:sign_Dirac_ZT2f}). 
Consequently, for odd ranks and odd values of $N_{A_{2}}$, the twisted RN is ill-defined (see Fig.~\ref{fig:Hubbard_chain_ED_DQMC} for an example with $r=3$). 
In what follows, we will focus exclusively on cases where both the modified Grover determinant $\det\underline{\tilde{g}}_{r,\bar{\boldsymbol{s}}}$ and the weight $\tilde{W}^{(k)}_{\mathbf{s}^{(1)}\cdots\mathbf{s}^{(r)}}$ are sign-problem-free.

\subsection{Local update scheme for twisted R\'{e}nyi negativity}

Consider an update occurring in replica-$i$. The update ratio is given by
\begin{equation}
    \tilde{R}_{i}^{\left(k\right)}\equiv\frac{\tilde{W}_{\mathbf{s}^{(1)}\cdots\mathbf{s}^{(i)\prime}\cdots\mathbf{s}^{(r)}}^{\left(k\right)}}{\tilde{W}_{\mathbf{s}^{(1)}\cdots\mathbf{s}^{(i)}\cdots\mathbf{s}^{(r)}}^{\left(k\right)}}=r_{1}\left(r_{2}\tilde{r}_{3}\right)^{\frac{k}{N_{\mathrm{inc}}}},
\end{equation}
where $r_1$ and $r_2$ are defined in Eq.~\eqref{equ:r1r2r3}, and 
\begin{equation}
    \tilde{r}_{3}=\frac{\det\left[I+\tilde{\mathbb{B}}_{r}\cdots\tilde{\mathbb{B}}_{i}^{\prime}\cdots\tilde{\mathbb{B}}_{1}\right]}{\det\left[I+\tilde{\mathbb{B}}_{r}\cdots\tilde{\mathbb{B}}_{i}\cdots\tilde{\mathbb{B}}_{1}\right]}. 
\end{equation}
Here, $\tilde{\mathbb{B}}_{i}=\mathbb{B}_{i}U_{2}=[(G^{T_{2}^{f}})^{-1}-I]U_{2}$. 
Similar to the untwisted case in Sec.~\ref{sec:scheme3}, we implement a scheme to directly calculate $r_2\tilde{r}_3$, offering numerical stability. 
The primary adjustment involves incorporating $U_2$ into the update object and intermediate matrices.
Eqs.~\eqref{equ:scheme4ffen} and \eqref{equ:FcanceliInv} are thus replaced by
\begin{equation}\label{equ:scheme4ffen_twisted}
    \underline{\tilde{\mathbb{F}}}_{i}=\left[G_{\mathbf{s}^{(i)}}^{T_{2}^{f}}\underline{\tilde{\mathbb{F}}}_{\bcancel{i}}^{-1}+U_{2}\left(I-G_{\mathbf{s}^{(i-1)}}^{T_{2}^{f}}\right)U_{2}\right]^{-1}
\end{equation}
and
\begin{equation}\label{equ:FcanceliInv_twisted}
    \underline{\tilde{\mathbb{F}}}_{\bcancel{i}}^{-1}=\left(\tilde{\mathbb{B}}_{\bcancel{i}}^{-1}-U_{2}\right)\left(I-G_{\mathbf{s}^{(i-1)}}^{T_{2}^{f}}\right)U_{2}, 
\end{equation}
respectively. 
The update formula for Eq.~\eqref{equ:scheme4ffen_twisted} and the calculation of the ratio $r_2\tilde{r}_3$ are analogous to Eqs.~\eqref{equ:scheme4update} and \eqref{equ:scheme4r2r3}, provided that one replaces ${{\mathbb{F}}}_{i}$ and $\underline{\tilde{\mathbb{F}}}_{i}$ therein with ${{\mathbb{F}}}_{\bcancel{i}}^{-1}$ and $\underline{\tilde{\mathbb{F}}}_{\bcancel{i}}^{-1}$, respectively. 
The numerical stabilization of $\underline{\tilde{\mathbb{F}}}_{i}$ is also similar, as discussed in Appendix \ref{app:numerical_stabilization_scheme4}. 
Finally, the measurement of $\det\underline{\tilde{g}}_{r}$ is given by
\begin{equation}
    \ln\det\underline{\tilde{g}}_{r}=-\ln\det\underline{\tilde{\mathbb{F}}}_{i}^{-1}+\sum_{j\neq i,i-1}\ln\det\left[\left(I-G_{\mathbf{s}^{(j)}}^{T_{2}^{f}}\right)U_{2}\right]. 
\end{equation}

\subsection{Incremental algorithm for R\'{e}nyi negativity ratio}
The rank-$r$ RNR defined in Eq.~\eqref{equ:def_RNR} can be efficiently calculated using an incremental algorithm in a single simulation, rather than computing RN and thermal R\'{e}nyi entropy separately. 
This approach offers an additional computational advantage: since RNR is intrinsically smaller than the corresponding RN of the same rank, fewer incremental processes are required for small statistical errors. 
When dividing two exponential observables, the result remains an exponential observable, yielding the following expression for untwisted RNR:
\begin{equation}
    \begin{aligned}
        e^{\left(1-r\right)R_{r}}&=\frac{e^{\left(1-r\right){\cal E}_{r}}}{e^{\left(1-r\right)S_{r}^{{\rm th}}}}=\frac{\sum_{\mathbf{s}^{(1)}\cdots\mathbf{s}^{(r)}}W_{\mathbf{s}^{(1)}\cdots\mathbf{s}^{(r)}}\det g_{r,\bar{\boldsymbol{s}}}}{\sum_{\mathbf{s}^{(1)}\cdots\mathbf{s}^{(r)}}W_{\mathbf{s}^{(1)}\cdots\mathbf{s}^{(r)}}\det g_{r,\bar{\boldsymbol{s}}}^{\mathrm{th}}}\\&=\frac{\sum_{\mathbf{s}^{(1)}\cdots\mathbf{s}^{(r)}}W_{\mathbf{s}^{(1)}\cdots\mathbf{s}^{(r)}}\det g_{r,\bar{\boldsymbol{s}}}^{\mathrm{th}}\left(\det g_{r,\bar{\boldsymbol{s}}}/\det g_{r,\bar{\boldsymbol{s}}}^{\mathrm{th}}\right)}{\sum_{\mathbf{s}^{(1)}\cdots\mathbf{s}^{(r)}}W_{\mathbf{s}^{(1)}\cdots\mathbf{s}^{(r)}}\det g_{r,\bar{\boldsymbol{s}}}^{\mathrm{th}}}\\&=\frac{Z_{N_{\mathrm{inc}}}}{Z_{N_{\mathrm{inc}}-1}}\cdots\frac{Z_{k+1}}{Z_{k}}\cdots\frac{Z_{1}}{Z_{0}},
    \end{aligned}
\end{equation}
where the partition function is defined as
\begin{equation}
    Z_{k}=\sum_{\mathbf{s}^{(1)}\cdots\mathbf{s}^{(r)}}W_{\mathbf{s}^{(1)}\cdots\mathbf{s}^{(r)}}\det g_{r,\bar{\boldsymbol{s}}}^{\mathrm{th}}\left(\det g_{r,\bar{\boldsymbol{s}}}/\det g_{r,\bar{\boldsymbol{s}}}^{\mathrm{th}}\right)^{k/N_{\mathrm{inc}}}.
\end{equation}
Here, we use $g_{r,\bar{\boldsymbol{s}}}^{\mathrm{th}}$ to denote the Grover determinant corresponding to the rank-$r$ R\'{e}nyi entropy, i.e., by setting $A_1=A$ and $G^{T_2^f}=G$ in Eq.~\eqref{equ:detGrover_general}. 
In each incremental CPU, the Monte-Carlo involves the following factor:
\begin{equation}
    \frac{Z_{k+1}}{Z_{k}}=\frac{\sum_{\mathbf{s}^{(1)}\cdots\mathbf{s}^{(r)}}W_{\mathbf{s}^{(1)}\cdots\mathbf{s}^{(r)}}^{\left(k\right)}\left(\det g_{r,\bar{\boldsymbol{s}}}/\det g_{r,\bar{\boldsymbol{s}}}^{\mathrm{th}}\right)^{1/N_{\mathrm{inc}}}}{\sum_{\mathbf{s}^{(1)}\cdots\mathbf{s}^{(r)}}W_{\mathbf{s}^{(1)}\cdots\mathbf{s}^{(r)}}^{\left(k\right)}}, 
\end{equation}
where the total weight is given by
\begin{equation}
    W_{\mathbf{s}^{(1)}\cdots\mathbf{s}^{(r)}}^{\left(k\right)}=W_{\mathbf{s}^{(1)}\cdots\mathbf{s}^{(r)}}\left(\det g_{r,\bar{\boldsymbol{s}}}^{\mathrm{th}}\right)^{1-k/N_{\mathrm{inc}}}\left(\det g_{r,\bar{\boldsymbol{s}}}\right)^{k/N_{\mathrm{inc}}}. 
\end{equation}
In practice, it is essential to perform the same procedures for $\det g_{r,\bar{\boldsymbol{s}}}^{\mathrm{th}}$ as for $\det g_{r,\bar{\boldsymbol{s}}}$, including both updates and stabilization. 
When evaluating the update ratio, particular attention is required to properly account for the distinct root orders specified in the preceding equations.
These formulas address only the untwisted case; similar principles apply to the twisted case.

\section{Model simulation: benchmarks and results}\label{sec:examples}
\subsection{Quantum-classical crossover in the Hubbard chain}\label{sec:Hubbard_chain}

\begin{figure}
    \center
    \includegraphics[width=0.45\textwidth]{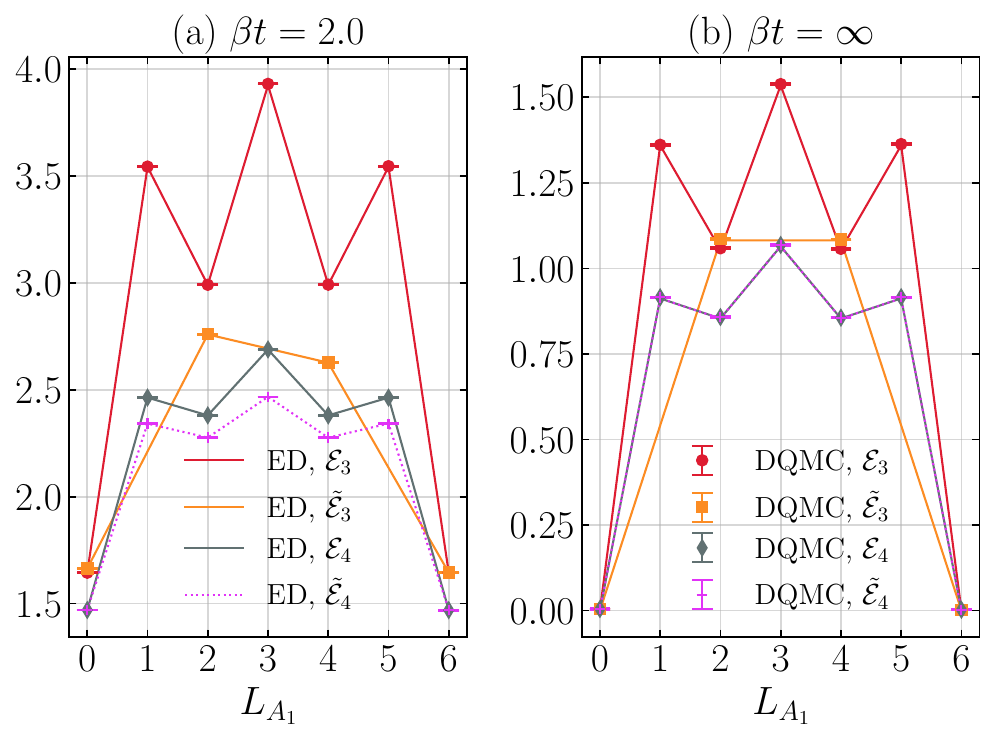}
    \caption{
        The variations of high-rank RNs as functions of subsystem length $L_{A_1}$ for a half-filled Hubbard chain with $U/t=1$ and $L_A=6$ is calculated using both DQMC and exact diagonalization (ED). 
        The bipartition geometry follows the inset of Fig.~\ref{fig:free_fermion_standard}. 
        For the untwisted RNs, the ED and DQMC results are obtained using Eq.~\eqref{equ:FPT} and Eq.~\eqref{equ:detGrover_Drut_method}, respectively. Analogous formulas are applied for the calculation of the twisted RNs. 
        We note that for $\beta t=2.0$, the DQMC data can alternatively be calculated directly from Eq.~\eqref{equ:logdetGrover_loh}. 
        Additionally, in the zero-temperature limit ($\beta t=\infty$), employing Eq.~\eqref{equ:logdetGrover_loh} with slightly regularized $G^{T_2^f}$ and $G^{\tilde{T}_2^f}$ matrices ($\Lambda=10^{-13}$) yields results that are consistent within the error bars. 
        For odd $L_{A_1}$, $\tilde{\mathcal{E}}_3$ is ill-defined and not shown, as $\mathrm{Tr}[(\rho^{\tilde{T}_2^f})^3]$ is negative (see Eq.~\eqref{equ:sign_Dirac_ZT2f} and the related discussion on the sign problem of the Grover determinants). 
    }
    \label{fig:Hubbard_chain_ED_DQMC}
\end{figure}

\begin{figure}
    \center
    \includegraphics[width=0.45\textwidth]{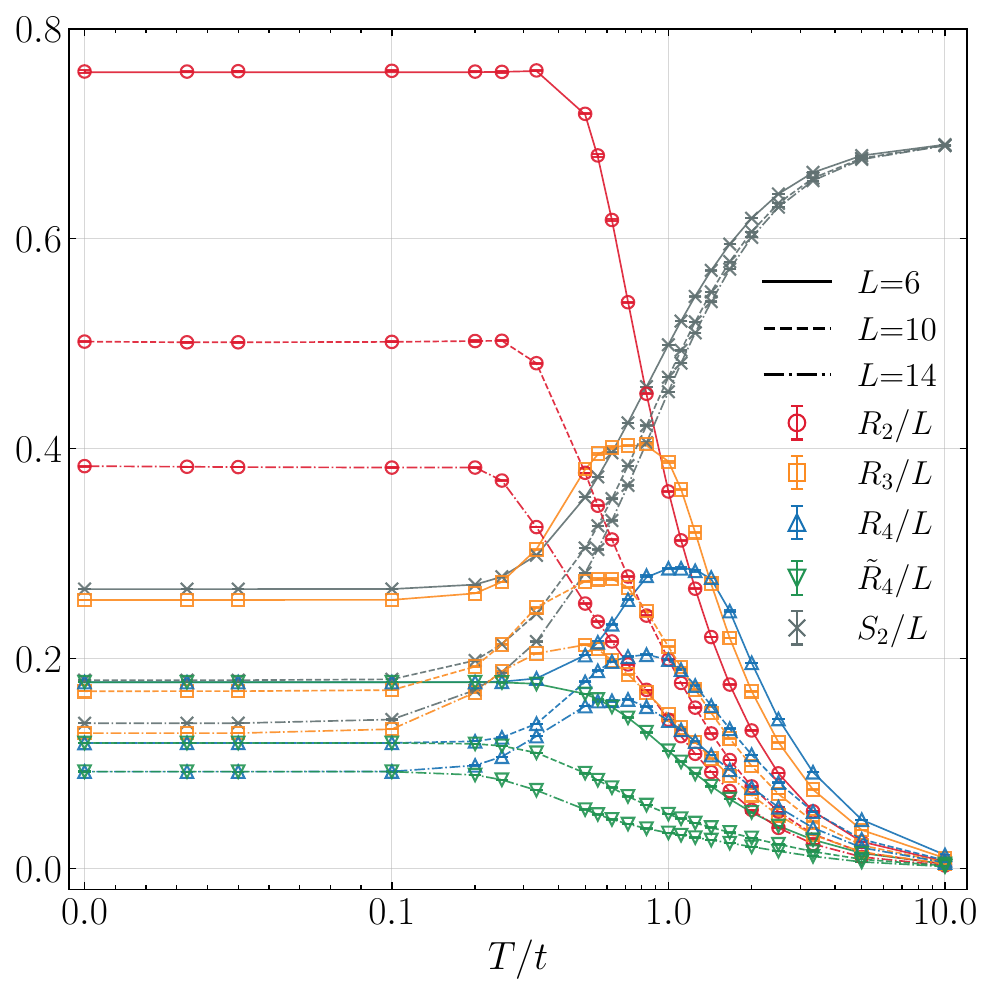}
    \caption{
        The quantum-classical crossover in the Hubbard chain is illustrated by the diminishing rank-2 untwisted and rank-4 twisted RNRs ($R_2$ and $\tilde{R}_4$, respectively) as temperature rises, while the rank-2 REE $S_2$ tends towards a volume law.
        In contrast, the rank-3 and rank-4 untwisted RNRs ($R_3$ and $R_4$, respectively) exhibit a local maximum at intermediate temperatures.
        The subsystem length is set to $L_{A_1}=L/2$.
        Error bars are smaller than the symbol size.
        The solid lines for $L=6$ are derived from ED calculations. For larger $L$, the dashed and dash-dot lines simply connect the DQMC data points.
    }
    \label{fig:Hubbard_chain_crossover}
\end{figure}

As a first example, we apply the methods in Sec.~\ref{sec:numerical_stable} within the framework of normal DQMC (without using incremental algorithm) to a half-filled Hubbard chain, whose Hamiltonian is expressed as
\begin{equation}
    \hat{H}=-t\sum_{i,\sigma=\uparrow,\downarrow}(c^\dagger_{i,\sigma}c_{i+1,\sigma}+\text{H.c.})+\frac{U}{2}\sum_{i}(\hat{n}_{i}-1)^2,
\end{equation}
where $\hat{n}_i=c^\dagger_{i\uparrow}c_{i\uparrow}+c_{i\downarrow}^\dagger c_{i\downarrow}$. 
As shown in Fig.~\ref{fig:Hubbard_chain_ED_DQMC}, both the untwisted and twisted RNs have been successfully validated against the exact diagonalization (ED) results at both finite and zero temperatures. 
The singularity of the partially transposed Green's functions depends on the model and the HS transformation. 
For the Hubbard model with $U/t=1.0$, practical experience indicates that Eq.~\eqref{equ:logdetGrover_loh} remains stable up to approximately $\beta t=20.0$.
At lower temperatures, unstable large values of the Grover determinant may appear randomly, depending on the specific $\mathbf{s}$-configuration. 
However, either approach—using Eq.~\eqref{equ:logdetGrover_loh} with a regularized $G^{T_2^f}$ or using Eq.~\eqref{equ:detGrover_Drut}—yields the same accurate results within the error bars.
For the former approach, a regularization parameter of $\Lambda=10^{-13}$ suffices to achieve the desired accuracy\footnote{While the real part of Monte-Carlo average $\langle \det g_r\rangle$ is sufficiently accurate, there may be an unexpected non-zero imaginary part on the order of $10^{-6}$ to $10^{-8}$, which should be disregarded. }.
We also note that, similarly to the free-fermion results in Fig.~\ref{fig:free_fermion_standard}, the ground-state values of untwisted and twisted rank-4 RNs, exhibited in Fig.~\ref{fig:Hubbard_chain_ED_DQMC}(b), are exactly identical, which aligns with the analytical result in Eq.~\eqref{equ:moment_4k_with_parity}. 

Next, we investigate the finite-temperature dependence of the RNRs in the Hubbard chain. 
In this paper, we further reveal that the rank-4 twisted RNR exhibits behavior similar to the rank-2 untwisted RNR studied in Ref.~\cite{Nat.Commun.2025Wanga}. 
Fig.~\ref{fig:Hubbard_chain_crossover} shows the rescaled behaviors of rank-2 REE ($S_2/L$), rank-2 to rank-4 untwisted RNRs ($R_{2,3,4}/L$), and rank-4 twisted RNR ($\tilde{R}_4/L$) as functions of temperature. 
The monotonic decrease of both $R_2/L$ and $\tilde{R}_4/L$ with increasing temperature highlights the quantum-classical crossover in the Hubbard chain~\cite{Nat.Commun.2025Wanga,J.Stat.Mech.2019Shapourian}. 
This finding not only provides a physical interpretation for the definition of RNR but also illustrates why the RNR may serve as a more reliable indicator of mixed-state entanglement compared to the REE, which tends to follow a volume law as temperature rises. 
Remarkably, $R_3$ and $R_4$ exhibit a local maximum at intermediate temperatures, contrasting with the monotonic decrease observed for $R_2$ and $\tilde{R}_4$. 
These results highlight the nuanced definition of untwisted RNR and suggest that certain ranks may not serve as reliable indicators of entanglement.

\subsection{Finite-temperature phase transition in the spinless $t$-$V$ model}\label{sec:tVmodel}

\begin{figure*}[htbp]
    \center
    \includegraphics[width=\textwidth]{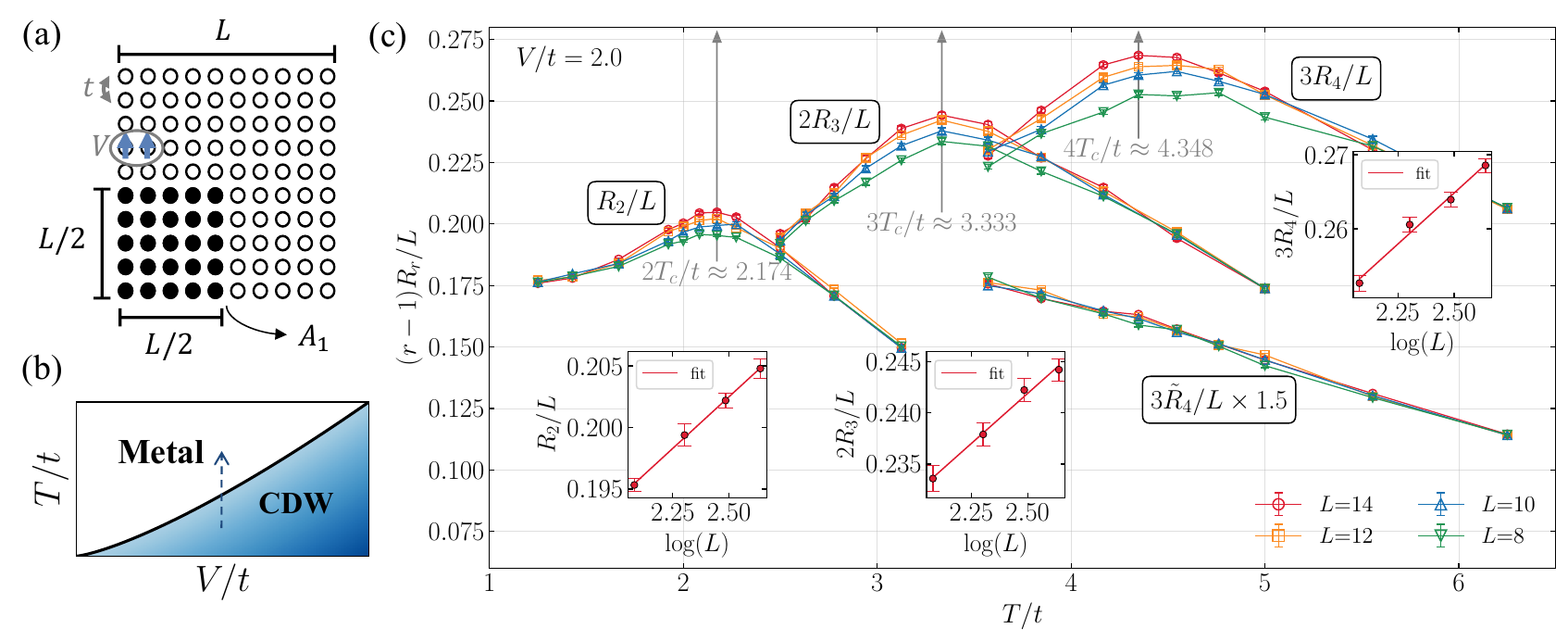}
    \caption{
        The behavior of different RNRs across the finite-temperature transition point of the spinless $t$-$V$ model. 
        (a) Illustration of the spinless $t$-$V$ model on a square lattice and our chosen bipartite geometry. 
        The model includes hopping and interaction terms between nearest neighboring sites.  
        We select the subsystem $A_1$ as the lower-left corner of the lattice, with a boundary length $L$ shared with $A_2$. 
        (b) Illustration of the phase diagram for the $t$-$V$ model. 
        The blue dashed arrow indicates our parameter path in (c). 
        (c) The temperature dependence of the area-law coefficients for rank-2, 3, and 4 untwisted RNRs, as well as the rank-4 twisted RNR, is illustrated. 
        To facilitate a direct comparison of values across different ranks and enhance the clarity of the plot, we present $(r-1)R_r=-\ln\{\mathrm{Tr}[(\rho^{T_2^f})^r]/\mathrm{Tr}[\rho^r]\}$ instead. 
        Additionally, the twisted RNR is scaled by a factor of 1.6 for improved visualization. 
        The gray arrows indicate the finite-temperature transition points identified by the untwisted RNRs at various ranks, with insets displaying the finite-size scaling of their respective area-law coefficients.
    }
    \label{fig:tVmodel}
\end{figure*}

Now we apply our incremental algorithm for arbitrary-rank RNRs (detailed in Sec.~\ref{sec:incremental}) to the half-filled spinless $t$-$V$ model on a square lattice with periodic boundary conditions, as illustrated in Fig.~\ref{fig:tVmodel}(a)~\cite{Phys.Rev.B1984Scalapino,Phys.Rev.B1985Gubernatis,Phys.Rev.Lett.2014Wang}, 
\begin{equation}
    H=-t\sum_{\langle i,j\rangle}(c_{i}^{\dagger}c_{j}+c_{j}^{\dagger}c_{i})+V\sum_{\langle i,j\rangle}\left(n_{i}-\frac{1}{2}\right)\left(n_{j}-\frac{1}{2}\right), 
\end{equation}
where both the hopping $t$ and the interaction $V$ involve only nearest neighbors. 
This model has been proved to be sign-problem-free model~\cite{Phys.Rev.B2014Huffman,Phys.Rev.B2015Li,Phys.Rev.Lett.2015Wang,Phys.Rev.Lett.2016Wei,Phys.Rev.Lett.2016Li}, and its Grover determinants of various ranks are also sign-problem-free~\cite{Nat.Commun.2025Wanga}. 
When a finite coupling $V$ is present, this model exhibits a charge density wave (CDW) ground state and undergoes a transition from the CDW phase to a metallic phase at a finite temperature, as illustrated in Fig.~\ref{fig:tVmodel}(b). 
The critical behavior of this transition is governed by the 2D Ising universality class~\cite{Phys.Rev.B1985Gubernatis,Phys.Rev.B2016Hesselmann}.
Here, we concentrate on a specific coupling strength of $V/t=2$, where the critical temperature has been estimated to be $T_c/t\approx 1.0$~\cite{Phys.Rev.B1985Gubernatis}. 
This temperature falls within the range where the singularity of the partially transposed Green's function can be handled stably, thus ensuring the feasibility of our simulation. 

As depicted in Fig.~\ref{fig:tVmodel}(c), the untwisted RNRs of different ranks exhibit similar behavior as temperature increases. 
Remarkably, the area-law coefficient of rank-$r$ untwisted RNR displays a peak at the replicated critical temperature $T=rT_c$ for various system sizes.
This phenomenon arises from the replica trick, $\rho^r\sim e^{-rH/T}$, which effectively elevates the phase transition temperature to $r$ times its original value.
The logarithmic divergence of $R_r/L$ with respect to the linear size $L$, as unveiled in Ref.~\cite{Nat.Commun.2025Wanga} for rank 2, is also observed for higher ranks.
This result qualitatively aligns with findings from studies on the RNR of bosonic systems near finite-temperature transition points~\cite{Phys.Rev.Lett.2020Wu,Phys.Rev.B2019Lu,Phys.Rev.Research2020Lu,Phys.Rev.B2025Ding}, in that singular behaviors for both bosons and fermions are observed at $rT_c$.
However, the quantitative behavior of the singularity differs significantly.
In the 2D transverse field Ising model, which features a finite-temperature transition also belonging to the 2D Ising universality class, the RNR monotonically decreases, and the singularity was identified in its temperature derivative~\cite{Phys.Rev.Lett.2020Wu,Phys.Rev.Research2020Lu}.
In contrast, in the present fermionic case, the singularity of untwisted RNR is clearly observed directly, without the need for a derivative.
Furthermore, perhaps the most intriguing observation in Fig.~\ref{fig:tVmodel}(c) is that the rank-4 twisted RNR $\tilde{R}_4$ more closely resembles the bosonic case: it monotonically decreases and obeys the area-law within error bars~\cite{Phys.Rev.Lett.2020Wu,Phys.Rev.B2025Ding}.

\subsection{Discussion on R\'{e}nyi proxy}

\begin{table*}[htbp]
    \centering
    \caption[Renyi Proxy]
        {Comparison between untwisted and twisted rank-$r$ RNRs as R\'{e}nyi proxies for LN. 
        The second row summarizes temperature dependence in the Hubbard chain [see Fig.~\ref{fig:Hubbard_chain_crossover}]. 
        Subsequent rows labeled ``spinless $t$-$V$ model'' summarize behaviors across the finite-temperature transition in the 2D Ising universality class at $T_c$ [see Fig.~\ref{fig:tVmodel}]. 
        The symbols $\uparrow$ and $\downarrow$ indicate that the RNR monotonically increases and decreases, respectively, with increasing temperature.}
    \label{tab:renyi_proxy_summary}
    \begin{tabular}{C{0.12\textwidth}C{0.28\textwidth}C{0.25\textwidth}C{0.25\textwidth}}
    \toprule
    \multicolumn{2}{c}{Scope} & Untwisted RNR & Twisted RNR \\
    \midrule
    \multirow{2}{*}{\parbox[c]{0.40\textwidth}{\centering Relation to LN by definition}} & & \multirow{2}{*}{Do not relate to LN explicitly} & Analytically continued to LN: \\
    & & & $\mathcal{E}=\lim_{r\to 1/2}(1-2r)\tilde{\mathcal{E}}_{2r}$ \\
    \midrule
    \multirow{2}{*}{Hubbard chain} & \multirow{2}{*}{Variation as $T$} & $r=2$: $\downarrow$ & \multirow{2}{*}{$r=4$: $\downarrow$} \\
    & & $r=3,4$: non-monotonic &  \\
    \midrule
    \multirow{2}{*}{Spinless $t$-$V$ model} & Variation as $T$ across $T_c$ & \multicolumn{1}{c}{$\uparrow\,\rightarrow\,\text{Maximum at } rT_c\,\rightarrow\,\downarrow$} & $\downarrow$ \\
    & Finite-size scaling across $T_c$ & \multicolumn{1}{c}{area-law $\rightarrow$ beyond-area-law $\rightarrow$ area-law} & area-law \\
    \bottomrule
    \end{tabular}%
\end{table*}%

The above findings offer insights into the nature of untwisted and twisted RNRs, neither of which has been rigorously established as an entanglement measure.
Table~\ref{tab:renyi_proxy_summary} summarizes the key differences in definition and temperature dependence between untwisted and twisted RNRs. 
The primary concern with untwisted RNR stems from the non-Hermitian nature of the untwisted PTDM $\rho^{T_2^f}$, whose moments lack a clear connection to the LN.
In contrast, the twisted RNR is arguably more physically meaningful, as the twisted PTDM $\rho^{\tilde{T}_2^f}$ is Hermitian, and even-rank twisted RNs can be analytically continued to the LN (see the discussion below Eq.~\eqref{equ:def_Renyinegativity}).
The bosonic RNR, defined using the conventional partial transpose~\cite{Phys.Rev.Lett.2012Calabrese,Phys.Rev.B2014Chunga,Phys.Rev.Lett.2020Wu}, shares this analytical continuation feature with twisted RNR, and their behaviors across the 2D Ising transition are indeed consistent.
Moreover, the rank-3 and -4 untwisted RNRs in the Hubbard chain exhibit unphysical temperature dependence.
Collectively, both definitional considerations and numerical evidence support the view that twisted RNR serves as a more appropriate R\'{e}nyi proxy for LN, analogous to how REE relates to EE.

\section{Conclusions and Outlook}\label{sec:conclusion}
In summary, to facilitate the calculation of high-rank RNs in interacting fermionic systems using DQMC, we have (i) identified stable formulas for calculating high-rank Grover determinants and (ii) developed stable update schemes for incremental algorithms for untwisted and twisted RNs of arbitrary rank.
We applied these methods to study the Hubbard model and spinless $t$-$V$ model, and presented the temperature-dependent behavior of high-rank RNRs, thereby expanding upon the previous results presented in Ref.~\cite{Nat.Commun.2025Wanga}.
We found that high-rank RNRs, whether untwisted or twisted, can exhibit significantly different behavior compared to the rank-2 untwisted RNR~\cite{Nat.Commun.2025Wanga}.
Specifically, rank-3 and -4 untwisted RNRs fail to accurately represent the quantum-classical crossover in the Hubbard chain.
Furthermore, the rank-4 twisted RNR displays behavior distinct from untwisted RNRs in the spinless $t$-$V$ model, with the former adhering to the area law while the latter exhibits beyond-area-law scaling around the critical points.
These results illuminate the nature of untwisted and twisted RNRs.
Due to drawbacks in its definition, the untwisted RNR exhibits unusual non-monotonic behaviors in certain cases, whereas the twisted RNR may serve as a more suitable R\'{e}nyi proxy for LN.

Several potential avenues exist for future research.
First and foremost, the nature of untwisted and twisted RNRs remains to be fully elucidated. 
Although lacking a clear connection to the LN, the untwisted RNR also captures the singularity at the critical point and displays intriguing beyond-area-law behavior in its vicinity.
A theoretical investigation of the rank-2 untwisted RN, $\mathcal{E}_2=-\ln\mathrm{Tr}[\rho X_2 \rho X_2]$, may provide a productive starting point.
Secondly, more case studies are needed to examine whether the twisted RNR faithfully captures mixed-state entanglement.
Our stable formulas for Grover determinants (e.g., Eqs.~\eqref{equ:logdetGrover_loh} and \eqref{equ:detGrover_Drut_method}) enable calculation of high-rank RNs for 1D interacting fermionic models across various bipartitions using direct sampling DQMC, thereby facilitating examination of theoretical predictions.
For instance, one could compare the scaling of twisted RN with $L_{A_1}$ for the Hubbard chain at finite temperature against conformal field theory predictions~\cite{J.Stat.Mech.2019Shapourian}.
Additionally, our stable incremental algorithms enable investigation of high-temperature behavior of high-rank RN(R)s in 2D interacting fermionic models, provided they remain free from the sign problem.
Finally, to extend these calculations to lower temperatures, a valuable technical advancement would be the development of an incremental algorithm, based on Drut's formula for high-rank Grover determinants, that completely avoids inverting the Green's function matrix, thereby enhancing computational stability.
Complementary to this, examining the validity of Green's function regularization within the incremental algorithm, particularly the systematic error it introduces, would offer another promising direction for low-temperature investigations.

\begin{acknowledgments}
    This work was supported by the National Key R\&D Program of China (Grant No. 2022YFA1402702, No. 2021YFA1401400), the National Natural Science Foundation of China (Grants No. 12447103, No. 12274289), the Innovation Program for Quantum Science and Technology (under Grant No. 2021ZD0301902), Yangyang Development Fund, and startup funds from SJTU.
    F.-H. W. is supported by T.D. Lee scholarship.
    The computations in this paper were run on the Siyuan-1 and $\pi$ 2.0 clusters supported by the Center for High Performance Computing at Shanghai Jiao Tong University.
\end{acknowledgments}

\bibliography{reference_increm.bib}

\appendix

\section{Proof of the trace properties of the partially transposed density matrices}\label{app:prood_in_Fock_space}
Consider the $r$-moment of the twisted PTDM $\rho^{\tilde{T}_{2}^{f}}$: 
\begin{widetext}
    \begin{equation}
        \begin{aligned}
            &{\rm Tr}\left[\left(\rho^{\tilde{T}_{2}^{f}}\right)^{r}\right]={\rm Tr}\left[\rho^{T_{2}^{f}}X_{2}\right]\\=&\sum_{n^{1},\bar{n}^{1},n^{2},\bar{n}^{2},\dots,n^{r},\bar{n}^{r}}\left\langle n^{1}\left|\rho\right|\bar{n}^{1}\right\rangle \cdots\left\langle n^{r}\left|\rho\right|\bar{n}^{r}\right\rangle \left(-1\right)^{\phi\left(n^{1},\bar{n}^{1}\right)}\cdots\left(-1\right)^{\phi\left(n^{r},\bar{n}^{r}\right)}\\&\times\sum_{n^{0}}\langle n^{0}|n_{A_{1}}^{1},\bar{n}_{A_{2}}^{1}\rangle\langle\bar{n}_{A_{1}}^{1},n_{A_{2}}^{1}|X_{2}|n_{A_{1}}^{2},\bar{n}_{A_{2}}^{2}\rangle\langle\bar{n}_{A_{1}}^{2},n_{A_{2}}^{2}|X_{2}\cdots X_{2}|n_{A_{1}}^{r},\bar{n}_{A_{2}}^{r}\rangle\langle\bar{n}_{A_{1}}^{r},n_{A_{2}}^{r}|X_{2}|n^{0}\rangle\\=&\sum_{n^{1},n^{2},\dots,n^{r}}\left\langle n^{1}\left|\rho\right|n_{A_{1}}^{2},n_{A_{2}}^{r}\right\rangle \left\langle n^{2}\left|\rho\right|n_{A_{1}}^{3},n_{A_{2}}^{1}\right\rangle \cdots\left\langle n^{r}\left|\rho\right|n_{A_{1}}^{1},n_{A_{2}}^{r-1}\right\rangle \\&\times\left(-1\right)^{\phi\left(n^{1},\left\{ n_{A_{1}}^{2},n_{A_{2}}^{r}\right\} \right)}\left(-1\right)^{\phi\left(n^{2},\left\{ n_{A_{1}}^{3},n_{A_{2}}^{1}\right\} \right)}\cdots\left(-1\right)^{\phi\left(n^{r},\left\{ n_{A_{1}}^{1},n_{A_{2}}^{r-1}\right\} \right)}\\&\times\left(-1\right)^{\tau_{A_{2}}^{r}}\left(-1\right)^{\tau_{A_{2}}^{1}}\cdots\left(-1\right)^{\tau_{A_{2}}^{r-1}}.
        \end{aligned}
    \end{equation}
\end{widetext}
As for the untwisted moment, simply delete the $X_2$ phase strings in the last line of the above equation. After doing a change of variables, $n^{(i-1)\text{ mod }r}_{A_2}\rightarrow n^{(i+1)\text{ mod }r}_{A_2}$, we arrive at the desired results in Eq.~\eqref{equ:moment_untwisted} and Eq.~\eqref{equ:moment_twisted}. 

When $\rho$ has a well-defined parity (denoted as $P_\rho$), i.e., $[\rho,\hat{P}]=0$ with $\hat{P}=(-1)^{\sum_j c^\dagger_j c_j}$ the fermionic parity operator, there are additional constraints on the summation over dummy indices $\{n^{1},n^{2},\dots,n^{r}\}$. 
From the sandwich chain in the second line of Eq.~\eqref{equ:moment_untwisted} or \eqref{equ:moment_twisted}, we deduce that, in order to have non-vanishing contribution, $\forall i\in\{1,2,\dots,r\}$, (i) each of $n^i$ must be of parity $P_\rho$; (ii) each of $\tilde{n}^i=n^{i}_{A_1}\cup n^{i+2}_{A_2}$ must also be of parity $P_\rho$. 
Formulate these constraints we obtain, 
\begin{equation}
    \begin{aligned}
        P_{\rho}=&\left(-1\right)^{\tau^{1}}=\left(-1\right)^{\tau^{2}}=\cdots=\left(-1\right)^{\tau^{r}}\\=&\left(-1\right)^{\tau_{1}^{1}+\tau_{2}^{3}}=\left(-1\right)^{\tau_{1}^{2}+\tau_{2}^{4}}=\cdots=\left(-1\right)^{\tau_{1}^{r}+\tau_{2}^{2}}.
    \end{aligned}
\end{equation}
Thus we further obtain the constraints on the parity of subsystems, $\left(-1\right)^{\tau_{b}^{i}}=\left(-1\right)^{\tau_{b}^{\left(i+2\right)\text{ mod }r}}$ for any $b\in\{1,2\}$ and $i\in\{1,2,\dots,r\}$. 
For even-rank moments, the dummy bit strings are naturally split into two groups, $\{n^{2\mathbb{Z}-1}\}$ and $\{n^{2\mathbb{Z}}\}$, each of which is of homogeneous subsystem parity. For odd-rank moments, all the dummy bit strings should have the same subsystem parity. 
\begin{subequations}
\begin{equation}\label{equ:sub_parity_constraints_even}
    \begin{aligned}
        \text{even }k\quad&\begin{cases}
        \left(-1\right)^{\tau_{b}^{1}}=\left(-1\right)^{\tau_{b}^{3}}=\cdots\left(-1\right)^{\tau_{b}^{r-1}}\\
        \left(-1\right)^{\tau_{b}^{2}}=\left(-1\right)^{\tau_{b}^{4}}=\cdots\left(-1\right)^{\tau_{b}^{r}}
        \end{cases},
    \end{aligned}
\end{equation}
\begin{equation}\label{equ:sub_parity_constraints_odd}
    \begin{aligned}
        \text{odd }k\quad&\left(-1\right)^{\tau_{b}^{1}}=\left(-1\right)^{\tau_{b}^{2}}=\cdots\left(-1\right)^{\tau_{b}^{r}},b=1,2.
    \end{aligned}
\end{equation}
\end{subequations}
Specifically, if $\rho$ is even (odd), $P_\rho=1(-1)$, then the parity of $A_1$ is equal (opposite) to the parity of $A_2$ within each group. 
Now we can prove the Eq.~\eqref{equ:moment_4k_with_parity}. Consider a slightly more general case of rank-$2\mathbb{Z}$ moments, we subtract the untwisted moment from the twisted moment,
\begin{equation}
    \begin{aligned}
        &\mathrm{Tr}\left[\left(\rho^{\tilde{T}_{2}^{f}}\right)^{2m}\right]-{\rm Tr}\left[\left(\rho^{{T}_{2}^{f}}\right)^{2m}\right]\\=&\sum_{n^{1},\dots,n^{2m}}\left(\left\langle n_{1}^{1},n_{2}^{3}\left|\rho\right|n^{2}\right\rangle \left\langle n_{1}^{2},n_{2}^{4}\left|\rho\right|n^{3}\right\rangle \cdots\left\langle n_{1}^{2m},n_{2}^{2}\left|\rho\right|n^{1}\right\rangle \right)\\\times&\left(-1\right)^{\sum_{i=1}^{2m}\phi\left(n^{i},n^{\left(i+1\right)\text{ mod }2m}\right)}\left[\left(-1\right)^{\sum_{j=1}^{m}\tau_{2}^{2j-1}+\sum_{j=1}^{m}\tau_{2}^{2j}}-1\right].
    \end{aligned}
\end{equation}
Here we abbreviate the subscripts of dummy bit strings, $n^i_b\equiv n^i_{A_b}$. 
When the total parities of the two groups (see Eq.~\eqref{equ:sub_parity_constraints_even}) are the same, then $(-1)^{\sum_j \tau_2^j}=1$, so these kind of terms are canceled out. We may resort to the terms with even (odd) $\sum_{j=1}^{m}\tau_{2}^{2j-1}$ and odd (even) $\sum_{j=1}^{m}\tau_{2}^{2j}$. However, these terms are forbidden by the parity constraints in Eq.~\eqref{equ:sub_parity_constraints_even} when $m$ is even. Thus we have proved the Eq.~\eqref{equ:moment_4k_with_parity}. 
Furthermore, the phase chain in the third line of Eq.~\eqref{equ:moment_untwisted}
\begin{equation}
    \begin{aligned}
        &\left(-1\right)^{\phi\left(n^{1},n^{2}\right)}\left(-1\right)^{\phi\left(n^{2},n^{3}\right)}\cdots\left(-1\right)^{\phi\left(n^{i},n^{i+1}\right)}\cdots\left(-1\right)^{\phi\left(n^{r},n^{1}\right)}\\&=\left(-1\right)^{\frac{\left[\left(\tau_{2}^{1}+\tau_{2}^{2}\right)\bmod2\right]}{2}+\frac{\left[\left(\tau_{2}^{2}+\tau_{2}^{3}\right)\bmod2\right]}{2}+\cdots+\frac{\left[\left(\tau_{2}^{r}+\tau_{2}^{1}\right)\bmod2\right]}{2}}\\&\times\left(-1\right)^{\left(\tau_{1}^{1}+\tau_{1}^{2}\right)\left(\tau_{2}^{1}+\tau_{2}^{2}\right)+\left(\tau_{1}^{2}+\tau_{1}^{3}\right)\left(\tau_{2}^{2}+\tau_{2}^{3}\right)+\cdots+\left(\tau_{1}^{r}+\tau_{1}^{1}\right)\left(\tau_{2}^{r}+\tau_{2}^{1}\right)}
    \end{aligned}
\end{equation}
can be simplified since the second line of the above equation is always unity. Indeed, when $P_\rho=1$, $\tau^i_1$ and $\tau^i_2$ have the same parity, thus $\left(-1\right)^{\sum_{i}\left(\tau_{1}^{i}+\tau_{1}^{i+1}\right)\left(\tau_{2}^{i}+\tau_{2}^{i+1}\right)}=\left(-1\right)^{\sum_{i}\left(\tau_{1}^{i}+\tau_{1}^{i+1}\right)^{2}}=1$. When $P_\rho=-1$, $\tau^i_1$ and $\tau^i_2$ have opposite parities, thus $\left(-1\right)^{\sum_{i}\left(\tau_{1}^{i}+\tau_{1}^{i+1}\right)\left(\tau_{2}^{i}+\tau_{2}^{i+1}\right)}=\left(-1\right)^{\sum_{i}\left(\tau_{1}^{i}+\tau_{1}^{i+1}\right)\left(\tau_{1}^{i}+\tau_{1}^{i+1}+2\right)}=1$. So we finally obtain the Eq.~\eqref{equ:phase_chain_with_parity} since $\left(-1\right)^{\left[\left(\tau+\tau^{\prime}\right)\bmod2\right]/2}=\left(-1\right)^{\left(\tau+\tau^{\prime}\right)^{2}/2}$. 

\section{Proof of Eq.~\eqref{equ:Gamma_W} using the product form of Gaussian states}\label{app:Gamma-W-relation}
For physical Gaussian states of the form in Eq.~\eqref{equ:Gaussian_state_Majorana} with Hermitian $W$, there exists a canonical transformation to bring $\rho_0$ into the product form (or standard form)~\cite{Quant.Inf.Comput.2005Bravyi,ArXiv2005Bravyi}, given by
\begin{equation}\label{equ:Gaussian_product_form}
    \begin{aligned}
        \rho_{0}&=\frac{1}{Z}\exp\left(\sum_{j}\frac{w_{j}}{2}\mathrm{i}\tilde{\gamma}_{2j-1}\tilde{\gamma}_{2j}\right)\\&=\frac{1}{Z}\prod_{j}\left[\cosh\frac{w_{j}}{2}+\sinh\frac{w_{j}}{2}\mathrm{i}\tilde{\gamma}_{2j-1}\tilde{\gamma}_{2j}\right]\\&=\frac{1}{2^{N}}\prod_{j}\left[1+\mathrm{i}\lambda_{j}\tilde{\gamma}_{2j-1}\tilde{\gamma}_{2j}\right],
        \end{aligned}
\end{equation}
where $\lambda_j=\tanh\frac{w_j}{2}$ with $\{\pm w_j\}$ the eigenvalues of the $W$ matrix, and $\{\tilde{\gamma}_j\}$ is related to the original basis via $\tilde{\boldsymbol\gamma}=O^T\boldsymbol{\gamma}$ with $O$ a $2N\times 2N$ real orthogonal matrix satisfying $OO^T=I$. 
Indeed, since a real antisymmetric matrix $W_0$ can always be transformed to its real canonical form via a real orthogonal transformation $O$, 
\begin{equation}\label{equ:real_canonical_form}
    O^TW_{0}O=\bigoplus_{i=1}^{N}\left(\begin{array}{cc}
        0 & w_{i}\\
        -w_{i} & 0
        \end{array}\right),
\end{equation}
the product form in Eq.~\eqref{equ:Gaussian_product_form} is always available for Hermitian $W=\mathrm{i}W_0$, and each $w_j$ is real and positive. 
Each block in Eq.~\eqref{equ:real_canonical_form} can be written as $\mathrm{i}w_i\sigma_y$ and be further diagonalized to obtain $W_0$'s eigenvalues $\{\pm \mathrm{i}w_i\}$, thus $\{\pm w_i\}$ are the eigenvalues of $W$.
To be more specific, let us denote the unitary matrix that diagonalizes the Pauli matrix $\sigma_y=\left(\begin{array}{cc}
    0 & -\mathrm{i} \\
    \mathrm{i} & 0
    \end{array}\right)$ as $V_y\equiv\frac{1}{\sqrt{2}}\left(\begin{array}{cc}
    -\mathrm{i} & -\mathrm{i}\\
    1 & -1
    \end{array}\right)$, then $V_{y}^{\dagger}\sigma_{y}V_{y}=\sigma_z$.
Therefore, $W$ can be diagonalized by
\begin{equation}\label{equ:diagonalize_W}
    U_{W}^{\dagger}WU_{W}=-\bigoplus_{i=1}^{N}\left(\begin{array}{cc}
        w_{i} & 0\\
        0 & -w_{i}
        \end{array}\right)\equiv-\Lambda,
\end{equation}
where $U_{W}\equiv OU_{y}$ and $U_{y}\equiv\oplus_{i=1}^{N}V_{y}$. 
Next, it is easy to evaluate the covariance matrix $\Gamma$ using the product form~\eqref{equ:Gaussian_product_form} under the new basis $\tilde{\gamma}_{k}=\gamma_{l}O_{lk}$: 
\begin{equation}
    \begin{aligned}
        \mathrm{i}\Gamma_{kl}\equiv&\frac{\mathrm{i}}{2}\langle[\gamma_{k},\gamma_{l}]\rangle=\frac{\mathrm{i}}{2}\left\langle \left[O_{kk^{\prime}}\tilde{\gamma}_{k^{\prime}},O_{ll^{\prime}}\tilde{\gamma}_{l^{\prime}}\right]\right\rangle \\=&\sum_{k^{\prime},l^{\prime}=1}^{2N}O_{kk^{\prime}}O_{ll^{\prime}}\frac{\mathrm{i}}{2}\left(\left\langle \tilde{\gamma}_{k^{\prime}}\tilde{\gamma}_{l^{\prime}}\right\rangle -\left\langle \tilde{\gamma}_{l^{\prime}}\tilde{\gamma}_{k^{\prime}}\right\rangle \right)\\=&\sum_{i=1}^{N}\left(O_{k,2i-1}O_{l,2i}-O_{k,2i}O_{l,2i-1}\right)\mathrm{i}\left\langle \tilde{\gamma}_{2i-1}\tilde{\gamma}_{2i}\right\rangle \\=&\left(O\left[\bigoplus_{i=1}^{N}\left(\begin{array}{cc}
        0 & \tanh\left(\frac{w_{i}}{2}\right)\\
        -\tanh\left(\frac{w_{i}}{2}\right) & 0
        \end{array}\right)\right]O^{T}\right)_{kl}.
        \end{aligned}
\end{equation}
We note that only the diagonal correlations have non-zero values under $\tilde{\boldsymbol{\gamma}}$-basis, with $\langle \mathrm{i}\tilde{\gamma}_{2i-1}\tilde{\gamma}_{2i}\rangle =2\langle \tilde{c}_{i}^{\dagger}\tilde{c}_{i}\rangle -1=\lambda_j$. 
Since $\mathrm{i}\Gamma$ has similar real canonical form as $W_0$, one finds that $\Gamma$ can also be diagonalized by $U_{W}$,
\begin{equation}\label{equ:diagonalize_Gamma}
    \begin{aligned}
        U_{W}^{\dagger}\Gamma U_{W}&=\left(-\mathrm{i}\right)U_{y}^{\dagger}\left[\bigoplus_{i=1}^{N}\left(\begin{array}{cc}
        0 & \tanh\frac{w_{i}}{2}\\
        -\tanh\frac{w_{i}}{2} & 0
        \end{array}\right)\right]U_{y}\\&=\bigoplus_{i=1}^{N}\left(\begin{array}{cc}
        \tanh\frac{w_{i}}{2}\\
         & -\tanh\frac{w_{i}}{2}
        \end{array}\right)=\tanh\frac{\Lambda}{2}.
        \end{aligned}
\end{equation}
Combine Eq.~\eqref{equ:diagonalize_W} and Eq.~\eqref{equ:diagonalize_Gamma}, we obtain the relation in Eq.~\eqref{equ:Gamma_W}. 
From this relation, one finds that the $\{\pm\lambda_j\}$ appeared in the product form~\eqref{equ:Gaussian_product_form} of Gaussian states are the eigenvalues of the covariance matrix. 

The essence of the above proof lies in Eq.~\eqref{equ:real_canonical_form} where one finds a real orthogonal basis transformation, which is a canonical transformation preserving the anti-commutation relation between Majorana operators, $\{\gamma_i,\gamma_j\}=2\delta_{ij}$, and the self-adjoint property of Majorana fermions, $\gamma_i^\dagger=\gamma_i$~\cite{Ripka1986QuantumTheoryFinite,J.Stat.Mech.2014Klich}. 
This proof procedure can be straightforwardly generalized to some special non-Hermitian $W$. 
It is easy to realize that all $W=cW_0$ with $c$ an arbitrary complex number can be proved by the same procedure, but with complex $w_j$. 
In fact, for any complex and normal antisymmetric $W$ satisfying $WW^\dagger=W^\dagger W$ (which includes $W=cW_0$), since its real and imaginary parts can be separately transformed to the real canonical form using the same $O$~\cite{Horn1985MatrixAnalysis}, the product form is available but with complex $w_j$. 

For a generic non-Hermitian $W$, the real orthogonal transformation $O$ and thus the product form~\eqref{equ:Gaussian_product_form} are not always available. 
However, the antisymmetric property of $W$ still provides us with some mathematically similar results. 
In particular, it turns out that Eq.~\eqref{equ:Gamma_W} holds for arbitrary diagonalizable $W$~\cite{J.Stat.Mech.2010Fagotti}. 
Below, we outline a brief review of the proof. 
First of all, we note that the eigenvalues of any $2N\times 2N$ antisymmetric matrix $W$ are in pairs of $\pm w_j$ where $j=1,2,\dots,N$. 
Denote the corresponding eigenvector to eigenvalue $w$ as $\vec{v}_{w}$ and for simplicity assume that $W$'s spectrum is non-degenerate\footnote{If $W$ has degenerate eigenvalues, one can always redefine the eigenvectors inside this degenerate subspace to make them ``orthogonal'' to each other.}, 
there is an ``orthogonal relation'' between the eigenvectors, $\vec{v}_{w}\cdot\vec{v}_{w^{\prime}}=\delta_{w+w^{\prime}}\vec{v}_{w}\cdot\vec{v}_{-w}$. 
Both $W$ and functions of $W$ can be decomposed using $W$'s eigenvectors: 
\begin{gather}
    \label{equ:decompose_W}
    W=\sum_{j=1}^{N}w_{j}\frac{\vec{v}_{w_{j}}\otimes\vec{v}_{-w_{j}}-\vec{v}_{-w_{j}}\otimes\vec{v}_{w_{j}}}{\vec{v}_{w_{j}}\cdot\vec{v}_{-w_{j}}},
    \\
    \label{equ:decompose_fW}
    f(W)=\sum_{j=1}^{N}\frac{f(w_{j})\vec{v}_{w_{j}}\otimes\vec{v}_{-w_{j}}+f(-w_{j})\vec{v}_{-w_{j}}\otimes\vec{v}_{w_{j}}}{\vec{v}_{w_{j}}\cdot\vec{v}_{-w_{j}}}.
\end{gather}
Here, we assume the condition $\vec{v}_{w}\cdot\vec{v}_{-w}\neq0$. 
Putting Eq.~\eqref{equ:decompose_W} into Eq.~\eqref{equ:Gaussian_state_Majorana}, we get a similar ``product form'' for Gaussian states with generic $W$: 
\begin{equation}
    \begin{aligned}
        \rho_{0}=&\frac{1}{Z}\exp\left(\sum_{j}\frac{w_{j}}{2}\frac{\left[\vec{v}_{w_{j}}\cdot\boldsymbol{\gamma},\vec{v}_{-w_{j}}\cdot\boldsymbol{\gamma}\right]}{2\vec{v}_{w_{j}}\cdot\vec{v}_{-w_{j}}}\right)\\=&\frac{1}{Z}\prod_{j}\left[\cosh\left(\frac{w_{j}}{2}\right)+\sinh\left(\frac{w_{j}}{2}\right)\frac{\left[\vec{v}_{w_{j}}\cdot\boldsymbol{\gamma},\vec{v}_{-w_{j}}\cdot\boldsymbol{\gamma}\right]}{2\vec{v}_{w_{j}}\cdot\vec{v}_{-w_{j}}}\right]\\=&\frac{1}{2^{N}}\prod_{j}\left[1+\tanh\left(\frac{w_{j}}{2}\right)\frac{\left[\vec{v}_{w_{j}}\cdot\boldsymbol{\gamma},\vec{v}_{-w_{j}}\cdot\boldsymbol{\gamma}\right]}{2\vec{v}_{w_{j}}\cdot\vec{v}_{-w_{j}}}\right].
        \end{aligned}
\end{equation}
We note that the operator $\left[\vec{v}_{w_{j}}\cdot\boldsymbol{\gamma},\vec{v}_{-w_{j}}\cdot\boldsymbol{\gamma}\right]/2\vec{v}_{w_{j}}\cdot\vec{v}_{-w_{j}}$ behaves very similarly to the $\mathrm{i}\tilde{\gamma}_{2i-1}\tilde{\gamma}_{2i}$ in Eq.~\eqref{equ:Gaussian_product_form}. 
Firstly, $\left[\vec{v}_{w_{j}}\cdot\boldsymbol{\gamma},\vec{v}_{-w_{j}}\cdot\boldsymbol{\gamma}\right]$ with different $j$ commutes with each other. 
Secondly, it satisfies $\left[\vec{v}_{w_{j}}\cdot\boldsymbol{\gamma},\vec{v}_{-w_{j}}\cdot\boldsymbol{\gamma}\right]^2=4(\vec{v}_{w_{j}}\cdot\vec{v}_{-w_{j}})^2$. 
The covariance matrix $\Gamma$ can be evaluated as follows, 
\begin{equation}
    \begin{aligned}
        \Gamma_{kl}=&\frac{1}{Z}\mathrm{Tr}\left[\frac{\left[\gamma_{k},\gamma_{l}\right]}{2}\exp\left(\frac{\boldsymbol{\gamma}^{T}W\boldsymbol{\gamma}}{4}\right)\right]\\=&\frac{1}{2^{N}}\sum_{j}\tanh\left(\frac{w_{j}}{2}\right)\mathrm{Tr}\left[\frac{\left[\gamma_{k},\gamma_{l}\right]}{2}\frac{\left[\vec{v}_{w_{j}}\cdot\boldsymbol{\gamma},\vec{v}_{-w_{j}}\cdot\boldsymbol{\gamma}\right]}{2\vec{v}_{w_{j}}\cdot\vec{v}_{-w_{j}}}\right]\\=&\sum_{j}\frac{-\tanh\left(\frac{w_{j}}{2}\right)\vec{v}_{w_{j},k}\vec{v}_{-w_{j},l}+\tanh\left(\frac{w_{j}}{2}\right)\vec{v}_{w_{j},l}\vec{v}_{-w_{j},k}}{\vec{v}_{w_{j}}\cdot\vec{v}_{-w_{j}}}\\=&\left[-\tanh\left(\frac{W}{2}\right)\right]_{kl}, 
        \end{aligned}
\end{equation}
where the last line uses the decomposition of $f(W)$ in Eq.~\eqref{equ:decompose_fW}.

\section{Product rule of Gaussian states}\label{app:product_rule_Gaussian}
To the best of our knowledge, the terminology ``product rule of Gaussian states'' is first mentioned in Ref.~\cite{J.Stat.Mech.2010Fagotti}. 
Since many formulations and statements in Sec.~\ref{sec:Prelimiary} are based on the product rule, including the twisted partially transposed Gaussian state and formulation of DQMC, we review it here in both Majorana basis and (particle-number-conserving) Dirac basis. 

The product rule of Gaussian states means that the product of two Gaussian states is still a Gaussian state. 
Let us denote the Gaussian state as $\rho_W=\frac{1}{Z_{W}}e^{\hat{W}}$. For the general cases in the Majorana basis, $\hat{W}\equiv\frac{1}{4}\boldsymbol{\gamma}^{T}W\boldsymbol{\gamma}$ and $Z_{W}=\pm\sqrt{\det\left(1+e^{W}\right)}$. For the particle-number-conserving cases in the Dirac basis, $\hat{W}\equiv\mathbf{c}^{\dagger}W\mathbf{c}$ and $Z_{W}={\det\left(1+e^{W}\right)}$. 
Then we wonder whether the product of $\rho_W$ and $\rho_{W^\prime}$ is still a Gaussian state.  
The proof roots in the Baker-Campbell-Hausdorff formula,
\begin{equation}
    \begin{aligned}
    e^{\hat{W}^{\prime\prime}}&\equiv e^{\hat{W}}e^{\hat{W}^{\prime}}
    \\&=e^{\hat{W}+\hat{W}^{\prime}+\frac{1}{2}\left[\hat{W},\hat{W}^{\prime}\right]+\frac{1}{12}\left(\left[\hat{W},\left[\hat{W},\hat{W}^{\prime}\right]\right]+\left[\hat{W}^{\prime},\left[\hat{W},\hat{W}^{\prime}\right]\right]\right)+\cdots},
    \end{aligned}
\end{equation}
and the following commuting relations
\begin{subequations}
    \begin{equation}
        \left[\gamma_{l}\gamma_{n},\gamma_{j}\gamma_{k}\right]=-2\gamma_{l}\gamma_{j}\delta_{kn}+2\gamma_{l}\gamma_{k}\delta_{jn}+2\gamma_{n}\gamma_{j}\delta_{lk}-2\gamma_{n}\gamma_{k}\delta_{lj},
    \end{equation}
    \begin{equation}
        \left[c_{l}^{\dagger}c_{n},c_{j}^{\dagger}c_{k}\right]=c_{l}^{\dagger}c_{k}\delta_{nk}-c_{j}^{\dagger}c_{n}\delta_{lk}.
    \end{equation}
\end{subequations}
As a result, although we do not write down a simply closed form of $\hat{W}^{\prime\prime}$ yet, we can assert that it is a fermionic quadratic, thus $\rho_{W^{\prime\prime}}\equiv\rho_{W}\rho_{W^{\prime}}/\mathrm{Tr}\left[\rho_{W}\rho_{W^{\prime}}\right]$ is a Gaussian state. 

Furthermore, using the trace formula for a product of exponential quadratic~\cite{Assaad2008WorldlineDeterminantalQuantum,J.Stat.Mech.2014Klich}, 
\begin{equation}
    \mathrm{Tr}\left[e^{\hat{W}_{1}}e^{\hat{W}_{2}}\cdots e^{\hat{W}_{n}}\right]=\left(\pm\right)\left(\det\left[I+e^{W_{1}}e^{W_{2}}\cdots e^{W_{n}}\right]\right)^{\eta},
\end{equation}
where $\eta=\frac{1}{2}(1)$ for Majorana (Dirac) basis and there is a sign ambiguity for Majorana cases, 
we can deduce more useful relations. Starting with the relation between single-particle Hamiltonians, 
\begin{equation}\label{equ:prodrule_W}
    e^{W^{\prime\prime}}=e^{W}e^{W^{\prime\prime}},
\end{equation}
we obtain the relations between the covariance matrices/Green's functions, 
\begin{subequations}\label{equ:prodrule_GG}
    \begin{equation}\label{equ:prodrule_Gamma}
        \Gamma^{\prime\prime}=\left(I+\Gamma^{\prime}\right)\left(I+\Gamma\Gamma^{\prime}\right)^{-1}\left(\Gamma+\Gamma^{\prime}\right)\left(I+\Gamma^{\prime}\right)^{-1},
    \end{equation}
    \begin{equation}\label{equ:prodrule_Green}
        G^{\prime\prime}=\left[I+\left(G^{-1}-I\right)\left(G^{\prime-1}-I\right)\right]^{-1}.
    \end{equation}
\end{subequations}
As an example, we consider the twisted PTDM $\rho^{\tilde{T}_2^f}=\rho^{T_2^f}X_2$ where $\rho^{T_2^f}$ is a Gaussian state given by the covariance matrix in Eq.~\eqref{equ:PT_Gamma} and $X_2=e^{\mathrm{i}\pi\sum_{j\in A_2}n_j}\equiv\frac{1}{Z_2}e^{\boldsymbol{\gamma}^TW_2}\boldsymbol{\gamma}$ with $e^{W_2}=I_1\oplus (-I_2)\equiv U_2$. 
Using the product rule in Eq.~\eqref{equ:prodrule_Gamma}, one can obtain the twisted covariance matrix in Eq.~\eqref{equ:TPT_Gamma}. 
By the way, we note that casting the decoupled Hamiltonian into one single Gaussian state as in Eqs.~\eqref{equ:rho_MQMC} and \eqref{equ:rho_DQMC} also relies on the product rule. 
Since we only care about the covariance matrices or Green's functions about the final Gaussian state $\rho_\mathbf{s}$, we do not need the relations in Eq.~\eqref{equ:prodrule_GG}.

\section{Sign problem of the Grover determinants}\label{app:sign_detGrvoer}
In this appendix, we prove that both the untwisted and twisted Grover determinants are real and positive for two classes of sign-problem-free models. 
We have detailed the untwisted case in the supplementary of Ref.~\cite{Nat.Commun.2025Wanga}, including how to show that these two model classes are represented by the Hubbard model and the spinless $t$-$V$ model respectively. 
Here, we focus on the mathematical structure and unify the proof for different entanglement Green's functions, including reduced, untwisted, and twisted Green's functions. 
\subsection{Sufficient condition I in the Dirac basis}
\begin{lemma}\label{the:lemma1}
    If two matrices, $G^{\uparrow}$ and $G^{\downarrow}$, are connected by the relation $G^{\downarrow}=U^{\dagger}\left(I-G^{\uparrow}\right)^{\dagger}U$ with $U$ an unitary matrix satisfying $UU^\dagger=I$, then their rank-$r$ Grover determinants defined by
\begin{equation}
    \det \mathbf{g}_{r}\equiv\det\left[\prod_{i=r}^{1}G_{i}\left[I+\prod_{i=r}^{1}\left(G_{i}^{-1}-I\right)\right]\right],
\end{equation}
where $G_i=G^{\uparrow,\downarrow}_i$ and in every replica the relation $G_i^{\downarrow}=U^{\dagger}(I-G_i^{\uparrow})^{\dagger}U$ holds, are complex conjugate to each other, i.e., $\det \mathbf{g}_r^{\uparrow}=(\det \mathbf{g}_r^{\downarrow})^*$. 
\end{lemma}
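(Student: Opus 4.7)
The plan is to work with the factored form
\[\det\mathbf{g}_{r} = \Bigl(\prod_{i=1}^{r}\det G_{i}\Bigr)\det\bigl[I+\mathbb{B}_{r}\mathbb{B}_{r-1}\cdots\mathbb{B}_{1}\bigr],\]
with $\mathbb{B}_{i}\equiv G_{i}^{-1}-I$, which separates the Grover determinant into scalar prefactors and a structural piece whose transformation under the hypothesized relation is easier to track.

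First I would translate the hypothesis $G_{i}^{\downarrow}=U^{\dagger}(I-G_{i}^{\uparrow})^{\dagger}U$ into a rule for $\mathbb{B}_{i}^{\downarrow}$. A short computation using the identity $(I-X)^{-1}-I=X(I-X)^{-1}$ gives $\mathbb{B}_{i}^{\downarrow}=U^{\dagger}N_{i}^{\dagger}U$, where $N_{i}\equiv G_{i}^{\uparrow}(I-G_{i}^{\uparrow})^{-1}$. The key observation is that $N_{i}=(\mathbb{B}_{i}^{\uparrow})^{-1}$. Substituting into the factored form, the unitarity of $U$ collapses adjacent $UU^{\dagger}$ pairs, while $(AB)^{\dagger}=B^{\dagger}A^{\dagger}$ reverses the factor ordering inside the determinant, yielding
\[\det\mathbf{g}_{r}^{\downarrow} = \overline{\Bigl(\prod_{i=1}^{r}\det(I-G_{i}^{\uparrow})\Bigr)\det\bigl[I+N_{1}N_{2}\cdots N_{r}\bigr]}.\]

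The final step is to demonstrate $(\det\mathbf{g}_{r}^{\downarrow})^{*}=\det\mathbf{g}_{r}^{\uparrow}$. I would apply the standard identity $\det(I+A)=\det A\cdot\det(I+A^{-1})$ with $A=\mathbb{B}_{r}^{\uparrow}\cdots\mathbb{B}_{1}^{\uparrow}$, whose inverse is exactly $N_{1}N_{2}\cdots N_{r}$. This maneuver simultaneously reverses the product ordering inside the determinant and introduces the factor $\prod_{i}\det\mathbb{B}_{i}^{\uparrow}=\prod_{i}\det(I-G_{i}^{\uparrow})/\det G_{i}^{\uparrow}$, which converts the $\prod\det G_{i}^{\uparrow}$ prefactor appearing in $\det\mathbf{g}_{r}^{\uparrow}$ into the $\prod\det(I-G_{i}^{\uparrow})$ appearing in $(\det\mathbf{g}_{r}^{\downarrow})^{*}$, closing the identity.

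The main obstacle is this ordering reversal: Hermitian conjugation forces the product inside $\det[I+\cdots]$ to flip direction, so a naive identification with $\det\mathbf{g}_{r}^{\uparrow}$ fails. The cyclic property $\det(I+AB)=\det(I+BA)$ alone cannot fully reverse a length-$r$ product; the $\det A\cdot\det(I+A^{-1})$ manipulation is what genuinely exchanges forward and backward orderings, and it is a fortunate feature of the $\mathbb{B}_{i}^{\uparrow}=N_{i}^{-1}$ structure that the induced determinant factor is precisely what balances the mismatched $\det G_{i}^{\uparrow}$ versus $\det(I-G_{i}^{\uparrow})$ prefactors. One implicit technical assumption throughout is invertibility of both $G_{i}^{\uparrow}$ and $I-G_{i}^{\uparrow}$, which holds generically for Gaussian-state Green's functions at finite temperature.
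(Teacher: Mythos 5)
Your proof is correct and follows essentially the same route as the paper's: both translate the hypothesis into $\mathbb{B}_i^{\downarrow}=U^{\dagger}\bigl[(\mathbb{B}_i^{\uparrow})^{-1}\bigr]^{\dagger}U$, use unitary invariance and Hermitian conjugation to pull out a complex conjugate with reversed ordering, and then close the identity via $\det(I+A)=\det A\,\det(I+A^{-1})$ together with $(I-G_i)=\mathbb{B}_iG_i$ to rebalance the prefactors. The only difference is cosmetic—you factor the Grover determinant into scalar prefactors times $\det[I+\mathbb{B}_r\cdots\mathbb{B}_1]$ up front, whereas the paper manipulates everything inside a single determinant—and your explicit note on the invertibility of $G_i$ and $I-G_i$ is an assumption the paper also makes implicitly.
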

\begin{proof}[Proof]
    Using $G^{\downarrow}=U^{\dagger}(I-G^{\uparrow})^{\dagger}U$ we further obtain
\begin{equation}
    \left(G^{\downarrow}\right)^{-1}-I=U^{\dagger}\left[\left(\left(G^{\uparrow}\right)^{-1}-I\right)^{-1}\right]^{\dagger}U.
\end{equation}
Then we can prove Lemma~\ref{the:lemma1} by using the fact that unitary transformation does not change determinant
\begin{equation}
    \begin{aligned}
        &\det \mathbf{g}_{r}^{\downarrow}\\=&\det\left\{ \prod_{i=r}^{1}G_{i}^{\downarrow}\left[I+\prod_{i=r}^{1}\left(\left(G_{i}^{\downarrow}\right)^{-1}-I\right)\right]\right\} \\=&\det\left\{ \prod_{i=r}^{1}\left(I-G_{i}^{\uparrow}\right)^{\dagger}\left[I+\prod_{i=r}^{1}\left(\left(G_{i}^{\uparrow}\right)^{-1}-I\right)^{-1\dagger}\right]\right\} \\=&\det\left\{ \left[I+\prod_{i=1}^{r}\left(\left(G_{i}^{\uparrow}\right)^{-1}-I\right)^{-1}\right]\prod_{i=1}^{r}\left(I-G_{i}^{\uparrow}\right)\right\} ^{*}\\=&\det\left\{ \left[\prod_{i=r}^{1}\left(\left(G_{i}^{\uparrow}\right)^{-1}-I\right)+I\right]\prod_{i=1}^{r}G_{i}^{\uparrow}\right\} ^{*}\\=&\left(\det \mathbf{g}_{r}^{\uparrow}\right)^{*}.
        \end{aligned}
\end{equation}
We have use $(I-G_{i}^{\uparrow})=((G_{i}^{\uparrow})^{-1}-I)G_{i}^{\uparrow}$ and $\det AB=\det A\det B$ to come to the last but second line. 
\end{proof}

\paragraph*{Example 1.1 } The Green's function matrices of a class of sign-problem-free models, represented by the half-filled Hubbard model on bipartite lattices such as square and honeycomb lattices, can be decoupled into the two independent spin sectors under appropriate basis, such that $G=G^{\uparrow}\oplus G^{\downarrow}$, and they satisfy the relation $G_{jk}^{\downarrow}=(-)^{j+k}(\delta_{jk}-G_{kj}^{\uparrow*})$~\cite{Nat.Commun.2025Wanga}. 
Here, the superscript $\uparrow$ and $\downarrow$ denotes spin-up and spin-down sectors, respectively. The subscripts $j$ and $k$ run over the lattice sites. 
The matrix defined by $U_{\eta,jk}=\delta_{jk}(-)^{j}$ is diagonal and unitary, which gives $G^{\downarrow}=U_{\eta}^{\dagger}\left(I-G^{\uparrow}\right)^{\dagger}U_{\eta}$. 
Consequently, Lemma~\ref{the:lemma1} applies to this case, and the Grover determinants corresponding to the total or reduced Green's functions are real and positive, i.e., $\det \mathbf{g}_r=\det \mathbf{g}_r^\uparrow \det \mathbf{g}_r^\downarrow \geq 0$~\cite{Phys.Rev.B2024Zhang}. 

\paragraph*{Example 1.2} Still consider the class of models in Example 1.1. 
Consider a continuous bipartite geometry $A=A_1\cup A_2$ and using Eq.~\eqref{equ:untwistedPTGreen} to do untwisted partial transpose, we find that 
\begin{equation}\label{equ:sign_dirac_untwisted}
    G^{\downarrow,T_{2}^{f}}=V^{\dagger}U_{\eta}^{\dagger}\left(I-G^{\uparrow,T_{2}^{f}}\right)^{\dagger}U_{\eta}V
\end{equation}
where $V\equiv(\mathrm{i}I_{A_{1}})\oplus(-\mathrm{i}I_{A_{2}})\equiv\mathrm{i}U_2$ is block-diagonal, unitary and anti-Hermitian, i.e., $V^\dagger=V^{-1}=-V$. 
Thus the untwisted Green's function still satisfies the condition for Lemma~\ref{the:lemma1} and the untwisted Grover determinant (see Eq.~\eqref{equ:gk_untwisted_DQMC}) is real and positive, $\det g_r=\det g_r^{\uparrow}\det g_r^{\downarrow}\geq 0$.  

\paragraph*{Example 1.3} Still consider the class of models in Example 1.1. 
Now we further consider twisted partial transpose given by Eq.~\eqref{equ:twistedPTGreen}. 
From Eq.~\eqref{equ:sign_dirac_untwisted} we have 
\begin{equation}
    \left(\left(G^{\downarrow,T_{2}^{f}}\right)^{-1}-I\right)=V^{\dagger}U_{\eta}^{\dagger}\left(\left(G^{\uparrow,T_{2}^{f}}\right)^{-1}-I\right)^{-1\dagger}U_{\eta}V. 
\end{equation}
Next, using the definition of twisted partial transpose and the product rule of Gaussian state in Eq.~\eqref{equ:prodrule_Green} we have
\begin{equation}
    \begin{aligned}
        &\left(\left(G^{\downarrow,\tilde{T}_{2}^{f}}\right)^{-1}-I\right)=\left(\left(G^{\downarrow,T_{2}^{f}}\right)^{-1}-I\right)U_{2}\\&=V^{\dagger}U_{\eta}^{\dagger}\left(\left(G^{\uparrow,T_{2}^{f}}\right)^{-1}-I\right)^{-1\dagger}U_{\eta}VU_{2}\\&=V^{\dagger}U_{\eta}^{\dagger}\left[\left(\left(G^{\uparrow,\tilde{T}_{2}^{f}}\right)^{-1}-I\right)U_{2}^{\dagger}\right]^{-1\dagger}U_{\eta}VU_{2}\\&=U_{2}^{\dagger}V^{\dagger}U_{\eta}^{\dagger}\left(\left(G^{\uparrow,\tilde{T}_{2}^{f}}\right)^{-1}-I\right)^{-1\dagger}U_{\eta}VU_{2}. 
        \end{aligned}
\end{equation}
To go to the last line, we use the fact that $U_2\equiv I_1\oplus (-I_2)$ commutes with any block-diagonal matrices, including $V^\dagger$ and $U^\dagger_\eta$. 
Actually, due to $VU_2=\mathrm{i}U_2^2=\mathrm{i}$, the above formula can be further simplified. 
As a result, the twisted Green's function still satisfies the condition for Lemma~\ref{the:lemma1}, so the twisted Grover determinant (see Eq.~\eqref{equ:gk_twisted_DQMC}) is real and positive, $\det \tilde{g}_r=\det \tilde{g}_r^{\uparrow}\det \tilde{g}_r^{\downarrow}\geq 0$. 

By the way, we discuss the sign of the normalization factor $\mathcal{Z}_{\tilde{T}_2^f}\equiv \det(I_{2}-2G_{22})$ for this class of models. 
It turns out that $\mathcal{Z}_{\tilde{T}_2^f}$ may be negative, depending on the number of sites in subsystem $A_2$ (denoted as $N_2$): 
\begin{equation}\label{equ:sign_Dirac_ZT2f}
    \begin{aligned}
        \mathcal{Z}_{\tilde{T}_{2}^{f}}&=\det\left(I_{2}-2G_{22}\right)\\&=\det\left(I_{2}^{\uparrow}-2G_{22}^{\uparrow}\right)\det\left(I_{2}^{\downarrow}-2G_{22}^{\downarrow}\right)\\&=\det\left(-I_{2}^{\uparrow}+2U_{\eta,2}^{\dagger}\left(G_{22}^{\downarrow}\right)^{\dagger}U_{\eta,2}\right)\det\left(I_{2}^{\downarrow}-2G_{22}^{\downarrow}\right)\\&=\left(-1\right)^{N_{2}}\left|\det\left(I_{2}^{\downarrow}-2G_{22}^{\downarrow}\right)\right|^{2}.
        \end{aligned}
\end{equation}
We emphasize that the final step relies on the diagonal nature of $U_{\eta}$, making the formula not generally applicable to all cases in Lemma~\ref{the:lemma1}. 
By combining Eq.~\eqref{equ:sign_Dirac_ZT2f} with $\det \tilde{g}_r \geq 0$, we conclude that, for the models in Example 1.1, the modified twisted Grover determinants (namely, $\det \underline{\tilde{g}}_r=(\prod_{i=1}^r\mathcal{Z}_{\tilde{T}_2^f,\mathbf{s}^{(i)}})\det \tilde{g}_r$) are real and positive except when both $r$ and $N_2$ are odd.

\subsection{Sufficient condition II in the Majorana basis}
The second sufficient condition for real and positive Grover determinants is proved in the Majorana basis, where the density matrix is decomposed as in Eq.~\eqref{equ:rho_MQMC}. 
And we note that the $W_\mathbf{s}$ matrix is related to the covariance matrix $\Gamma_\mathbf{s}$ via $\Gamma_\mathbf{s}=-\tanh (W_\mathbf{s}/2)$. 

\begin{lemma}\label{the:lemma2}
If two matrices, $\Gamma^{(1)}$ and $\Gamma^{(2)}$, are connected by the relation $\Gamma^{(2)}=U^{\dagger}\Gamma^{(1)*}U$ with $U$ an unitary matrix satisfying $UU^\dagger=I$, then their rank-$r$ Grover determinants defined by
\begin{equation}
    \det\mathbf{g}_{r}=\left(\prod_{i=r}^{1}\det\left[I+e^{W_{i}}\right]^{-1/2}\right)\det\left[I+\prod_{i=r}^{1}e^{W_{i}}\right]^{1/2},
\end{equation}
where $\tanh (-W_i/2)=\Gamma_i=\Gamma^{(1),(2)}_i$ and in every replica the relation $\Gamma_i^{(2)}=U^{\dagger}\Gamma^{(1)*}_iU$ holds, are complex conjugate to each other, i.e., $\det \mathbf{g}_r^{(1)}=(\det \mathbf{g}_r^{(2)})^*$. 
\end{lemma}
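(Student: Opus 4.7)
The plan is to mirror the Dirac-basis argument of Lemma~\ref{the:lemma1} in the Majorana setting. First I would use Eq.~\eqref{equ:Gamma_W} in the form $e^{W} = (I+\Gamma)^{-1}(I-\Gamma)$ to translate the hypothesis $\Gamma_i^{(2)} = U^{\dagger} \Gamma_i^{(1)*} U$ into an equivalent relation at the single-particle level. Inserting the hypothesis into $I \pm \Gamma^{(2)} = U^{\dagger}(I \pm \Gamma^{(1)*})U$ and pushing $U U^{\dagger}=I$ through the product gives
\begin{equation}
e^{W_i^{(2)}} \;=\; U^{\dagger}\,(e^{W_i^{(1)}})^{*}\,U,
\end{equation}
which is the Majorana counterpart of the relation exploited in Lemma~\ref{the:lemma1}.

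Next I would substitute this into the two ingredients of $\det \mathbf{g}_r^{(2)}$. For each per-replica normalization, unitary invariance of the determinant yields
\begin{equation}
\det\!\bigl[I+e^{W_i^{(2)}}\bigr] \;=\; \bigl(\det\!\bigl[I+e^{W_i^{(1)}}\bigr]\bigr)^{*}.
\end{equation}
For the product chain, telescoping $UU^{\dagger}=I$ between adjacent replicas gives
\begin{equation}
\prod_{i=r}^{1} e^{W_i^{(2)}} \;=\; U^{\dagger}\!\left(\prod_{i=r}^{1} e^{W_i^{(1)}}\right)^{\!*}\!U,
\end{equation}
so that likewise $\det\!\bigl[I+\prod_{i} e^{W_i^{(2)}}\bigr] = \bigl(\det\!\bigl[I+\prod_{i} e^{W_i^{(1)}}\bigr]\bigr)^{*}$. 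Every determinant appearing under a square root in the definition of $\det\mathbf{g}_r$ is therefore conjugated when passing from replica stack $(1)$ to $(2)$, which at the level of the radicands immediately gives $\det \mathbf{g}_r^{(2)} = (\det \mathbf{g}_r^{(1)})^{*}$.

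The main obstacle is precisely the square root: the Majorana normalization $Z_W = \pm\sqrt{\det(I+e^W)}$ carries an inherent branch ambiguity that is absent in the Dirac case of Lemma~\ref{the:lemma1}, and complex conjugation of a radicand does not automatically commute with the square root. To dispose of this, I would fix a consistent branch by continuous deformation from a reference configuration (e.g.\ $W_i \to 0$, where both square roots are manifestly real and positive and the hypothesis becomes trivial), then invoke continuity along the path so that the two branches track each other. Equivalently, one can express the Grover square roots as Pfaffians of antisymmetric matrices constructed from the $W_i$, transport the conjugation through the Pfaffian using the same unitary-invariance argument, and compare the canonical Pfaffian signs on the two sides. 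Either route pins down the overall sign and upgrades the radicand-level identity to the claimed $\det \mathbf{g}_r^{(1)} = (\det \mathbf{g}_r^{(2)})^{*}$.
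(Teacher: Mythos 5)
Your core argument is exactly the paper's: the paper transfers the hypothesis to the single-particle level via $\tanh(W_i/2)=\Gamma_i$, obtaining $W_i^{(2)}=U^{\dagger}W_i^{(1)*}U$, and then invokes unitary invariance of the determinant to conjugate both the per-replica factors $\det[I+e^{W_i}]$ and the chain factor $\det[I+\prod_i e^{W_i}]$, concluding $\det\mathbf{g}_r^{(1)}=(\det\mathbf{g}_r^{(2)})^*$ directly at the level of the square roots. Your route through $e^{W}=(I+\Gamma)^{-1}(I-\Gamma)$ from Eq.~\eqref{equ:Gamma_W} is a trivially equivalent way to reach $e^{W_i^{(2)}}=U^{\dagger}(e^{W_i^{(1)}})^{*}U$, so up to that point the two proofs coincide.

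Where you genuinely go beyond the paper is the branch ambiguity of the square roots. The paper's proof silently commutes complex conjugation with $(\cdot)^{1/2}$, and the paper itself concedes this in the ``Important Note'' at the end of Appendix~\ref{app:sign_detGrvoer}: the Majorana-basis trace formula carries a sign ambiguity ``which has not been accounted for in our current proof.'' You correctly identify this as the only real obstruction, which is to your credit; however, your two proposed repairs remain sketches rather than complete arguments. The continuity argument requires that the radicands stay nonvanishing along the chosen deformation path (branches can only be tracked where $\det[I+e^{W}]\neq 0$), and for generic complex $W_i$ this genericity must be argued, not assumed. The Pfaffian route has a concrete wrinkle you do not address: Pfaffians transform under \emph{congruence}, $\mathrm{Pf}(B^{T}MB)=\det(B)\,\mathrm{Pf}(M)$, not under conjugation $B^{\dagger}MB$, so transporting the relation $M^{(2)}=U^{\dagger}M^{(1)*}U$ through a Pfaffian produces a residual $\det(U)$ phase that appears once per Pfaffian factor and need not cancel between the $r$ normalization factors and the single chain factor; tracking this phase (which is $\pm1$ for the diagonal $U_{\eta}$ of Example~2.1, but not in general) is exactly the unfinished bookkeeping. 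So your proposal matches the paper's proof where the paper is rigorous, and flags---but does not close---the gap the paper acknowledges leaving open.
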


\begin{proof}[Proof]
Since $\tanh (W_i/2)=\Gamma_i$, we know that $W_i^{(1)}$ and $W_i^{(2)}$ also fulfill the relation $W_i^{(2)}=U^{\dagger}W^{(1)*}_iU$. 
Then we can prove Lemma~\ref{the:lemma2} by using the fact that unitary transformation does not change determinant
\begin{equation}
    \begin{aligned}
        &\det\mathbf{g}_{r}^{\left(1\right)}\\=&\left(\prod_{i=r}^{1}\det\left[I+e^{W_{i}^{\left(1\right)}}\right]^{-1/2}\right)\det\left[I+\prod_{i=r}^{1}e^{W_{i}^{\left(1\right)}}\right]^{1/2}\\=&\left(\prod_{i=r}^{1}\det\left[I+e^{W_{i}^{\left(2\right)}}\right]^{-1/2}\right)^{*}\left(\det\left[I+\prod_{i=r}^{1}e^{W_{i}^{\left(2\right)}}\right]^{1/2}\right)^{*}\\=&\left(\det\mathbf{g}_{r}^{\left(2\right)}\right)^{*}.
        \end{aligned}
\end{equation}
\end{proof}

\paragraph*{Example 2.1}The covariance matrices of a class of sign-problem-free models, represented by the spinless $t$-$V$ model on bipartite lattices such as square and honeycomb lattices, satisfy $\Gamma_{jk}^{(2)}=(-)^{j+k}\Gamma^{(1)*}_{jk}$~\cite{Nat.Commun.2025Wanga}. 
Here, the superscript $(1)$ and $(2)$ corresponds to two Majorana species, i.e., $\gamma_i^{(1)}$ and $\gamma_i^{(2)}$ in Eq.~\eqref{equ:MajoranaOp}. 
The total covariance matrix in this class of models can be decoupled into the two independent specie sectors under an appropriate basis where $\Gamma=\Gamma^{(1)}\oplus \Gamma^{(2)}$. 
The matrix defined by $U_{\eta,jk}=\delta_{jk}(-)^{j}$ is diagonal and unitary. 
Thus Lemma~\ref{the:lemma2} applies to this case, and the Grover determinants corresponding to the total or reduced covariance matrices are real and positive, i.e., $\det \mathbf{g}_r=\det \mathbf{g}_r^{(1)} \det \mathbf{g}_r^{(2)} \geq 0$. 

\paragraph*{Example 2.2} Consider the class of models in Example 2.1 and the bipartition geometry in Example 1.2. 
Using Eq.~\eqref{equ:PT_Gamma} to do untwisted partial transpose, we find that
\begin{equation}\label{equ:sign_Majo_untwisted}
    \Gamma^{(2),T_{2}^{f}}=U_{2}^{\dagger}U_{\eta}^{\dagger}\left(\Gamma^{\left(1\right),T_{2}^{f}}\right)^{*}U_{\eta}U_{2}.
\end{equation}
Thus Lemma~\ref{the:lemma2} applies to $\Gamma^{(1),T_2^f}$ and $\Gamma^{(2),T_2^f}$, and the untwisted Grover determinant is real and positive, $\det g_r=\det g_r^{(1)}\det g_r^{(2)}\geq 0$. 

\paragraph*{Example 2.3} Still consider the class of models in Example 2.1. 
Now we further consider twisted partial transpose given by Eq.~\eqref{equ:TPT_Gamma}. 
From Eq.~\eqref{equ:sign_Majo_untwisted} and $\tanh(-W^{T_2^f})=\Gamma^{T_2^f}$ we learn that
\begin{equation}
    W^{(2),T_{2}^{f}}=U_{2}^{\dagger}U_{\eta}^{\dagger}\left(W^{\left(1\right),T_{2}^{f}}\right)^{*}U_{\eta}U_{2}.
\end{equation}
Further, using the definition of twisted partial transpose and the product rule of Gaussian state in Eq.~\eqref{equ:prodrule_W} we have 
\begin{equation}
    \begin{aligned}
        &e^{W^{\left(2\right),\tilde{T}_{2}^{f}}}=e^{W^{\left(2\right),T_{2}^{f}}}U_{2}\\=&U_{2}^{\dagger}U_{\eta}^{\dagger}\left(e^{W^{\left(1\right),T_{2}^{f}}}\right)^{*}U_{\eta}U_{2}U_{2}\\=&U_{2}^{\dagger}U_{\eta}^{\dagger}\left(e^{W^{\left(1\right),\tilde{T}_{2}^{f}}}U_{2}^{\dagger}\right)^{*}U_{\eta}U_{2}U_{2}\\=&U_{2}^{\dagger}U_{\eta}^{\dagger}\left(e^{W^{\left(1\right),\tilde{T}_{2}^{f}}}\right)^{*}U_{\eta}U_{2},
        \end{aligned}
\end{equation}
where we have used $U_2^*=U_2=U_2^\dagger$, $U_2^2=I$, and the fact that $U_2$ commutes with $U_\eta$ to go to the last line. 
As a result, according to Lemma \ref{the:lemma2}, the twisted Grover determinant is real and positive, $\det \tilde{g}_r=\det \tilde{g}_r^{(1)}\det \tilde{g}_r^{(2)}\geq 0$. 

\textbf{Important Note}: The proof presented in this subsection regarding the Majorana basis, while providing valuable intuitive and heuristic insights, lacks mathematical rigor. 
The trace of Gaussian states in the Majorana basis would involve a sign ambiguity~\cite{J.Stat.Mech.2014Klich}, which has not been accounted for in our current proof. 
In our numerical investigations of the 2D $t$-$V$ model, we consistently observe positive values for both untwisted and twisted Grover determinants. 
However, we note that the normalization factor $\mathcal{Z}_{\tilde{T}_2^f}$ does not follow Eq.~\eqref{equ:sign_Dirac_ZT2f} and can assume negative values for even $N_2$. 
Importantly, the rank-4 modified twisted Grover determinant $\det \underline{\tilde{g}}_4$ that we consider in the main text is always real and positive. 

\section{Discussion on other local update schemes for untwisted R\'{e}nyi negativity}\label{app:schemes12}
In this appendix, we present two alternative local update schemes for untwisted R\'{e}nyi negativity. Compared to the approach presented in the main text (hereafter referred to as "Scheme 3"), these methods offer greater conceptual simplicity and mathematical elegance, though at the cost of reduced numerical stability. 
\subsection{Scheme 1: Calculate $r_2$ and $r_3$ separately}
\paragraph*{Calculate $r_2$. } As a result of Eq.~\eqref{equ:update_GT2f}, we immediately obtain $r_{2}=\det\left(I_{m}+\mathcal{V}\mathcal{U}\right)$ by applying Sylvester's determinant identity. 
Considering the definition of $\mathbb{B}_i$, we deduce that the appropriate central matrix to be updated is the inverse of $G_{\mathbf{s}^{(i)}}^{T_{2}^{f}}$, 
\begin{equation}
    \begin{aligned}
        \left(G_{\mathbf{s}^{(i)\prime}}^{T_{2}^{f}}\right)^{-1}&=\left(I+\mathcal{U}\mathcal{V}\right)^{-1}\left(G_{\mathbf{s}^{(i)}}^{T_{2}^{f}}\right)^{-1}\\&=\left(G_{\mathbf{s}^{(i)}}^{T_{2}^{f}}\right)^{-1}-\mathcal{U}(I_{m}+\mathcal{V}\mathcal{U})^{-1}\mathcal{V}\left(G_{\mathbf{s}^{(i)}}^{T_{2}^{f}}\right)^{-1}.
    \end{aligned}
\end{equation}
In the second line, we utilize the Woodbury matrix identity, which reduces to the Sherman-Morrison formula when $m=1$. 
Its numerical stabilization directly follows from the stabilization results of $G_{\mathbf{s}^{(i)}}$, followed by performing the FPT and taking the inverse.

\paragraph*{Calculate $r_3$. } Since $r_3$ shares a mathematical structure similar to the DQMC ratio $r_1$, with $\mathbb{B}_i$ corresponding to $B_{l}$, it can be evaluated using formulas analogous to those employed in DQMC. 
When an update occurs in replica-$i$, the modification of $\mathbb{B}_i$ is described by:
\begin{equation}
    \begin{aligned}
        \mathbb{B}_{i}^{\prime}&=(I-\mathcal{U}\mathcal{V}^{a})\mathbb{B}_{i},\\
        \mathcal{V}^{a}&=(I_{m}+\mathcal{V}\mathcal{U})^{-1}\mathcal{V}\left(G_{\mathbf{s}^{(i)}}^{T_{2}^{f}}\right)^{-1}\mathbb{B}_{i}^{-1}.
    \end{aligned}
\end{equation}
Here, the term $(I-\mathcal{U}\mathcal{V}^{a})$ is analogous to the matrix $\Delta$ in DQMC. 
The calculation of $r_3$ then closely follows the DQMC ratio: 
\begin{equation}
    \begin{aligned}
        r_{3}=&\frac{\det\left[I+\mathbb{B}\left(r,i\right)(I-\mathcal{U}\mathcal{V}^{a})\mathbb{B}\left(i,0\right)\right]}{\det\left[I+\mathbb{B}\left(r,0\right)\right]}\\=&\det\left[I-\mathcal{U}\mathcal{V}^{a}(I-\mathbb{G}(i,i))\right]=\det\left[I_{m}-\mathbb{V}\mathcal{U}\right],
    \end{aligned}
\end{equation}
where $\mathbb{V}\equiv\mathcal{V}^{a}\left(I-\mathbb{G}\left(i,i\right)\right)$. 
We define an equal-replica ``Green's function" analogous to $G(\tau,\tau)$, 
\begin{equation}\label{equ:getgii}
    \mathbb{G}(i,i)\equiv\left(I+\mathbb{B}(i,0)\mathbb{B}(r,i)\right)^{-1},
\end{equation}
which generalize $\mathbb{G}(0,0)=\mathbb{G}(r,r)$ as previously defined in Eq.~\eqref{equ:Getg00}. 
The central matrices to be updated are $\mathbb{B}_i^{-1}$ and $\mathbb{G}$, with update formulas given by:
\begin{gather}
    (\mathbb{B}_{i}^{-1})^{\prime}=\mathbb{B}_{i}^{-1}\left(I+\mathcal{U}(I_{m}-\mathcal{V}^{a}\mathcal{U})^{-1}\mathcal{V}^{a}\right),\\
    \mathbb{G}^{\prime}(i,i)=\mathbb{G}\left[I+\mathcal{U}\left(I_{m}-\mathbb{V}\mathcal{U}\right){}^{-1}\mathbb{V}\right].
\end{gather}
The stabilization of $\mathbb{B}_{i}$ and $\mathbb{B}_{i}^{-1}$ directly derives from $G_{\mathbf{s}^{(i)}}$, while the stabilization routines for $\mathbb{G}(i,i)$ are similar to those for $G_{\mathbf{s}^{(i)}}$, with all instances of $B(\tau,\tau^\prime)\rightarrow \mathbb{B}(i,i^\prime)$ (see Eq.~\eqref{equ:Getg00} and similar formulas in Appendix~\ref{app:numerical_stabilization_DQMC}). 
When the local updates in one replica are completed and we need to move to the next replica, $\mathbb{G}(i,i)$ also follows a propagation formula similar to $G(\tau,\tau)$, for example: 
\begin{equation}
    \begin{aligned}
        \mathbb{G}(i+1,i+1)&=\mathbb{B}_{i+1}\mathbb{G}(i,i)\mathbb{B}_{i+1}^{-1},\\\mathbb{G}(i-1,i-1)&=\mathbb{B}_{i}^{-1}\mathbb{G}(i,i)\mathbb{B}_{i}.
    \end{aligned}
\end{equation}
\paragraph*{Measure $\det g_{r,\bar{\boldsymbol{s}}}$. }The measurement of the Grover determinant, and consequently its $N_{\mathrm{inc}}$-th root, can be conducted after numerical stabilization, where $\ln\det G_{\mathbf{s}^{(i)}}^{T_{2}^{f}}$ and $\ln\det\mathbb{G}(i,i)$ have been calculated stably. 
By retrieving $\ln\det G_{\mathbf{s}^{(j)}}^{T_{2}^{f}}$ from other replicas ($j\neq i$), which remain unchanged during updates in replica-$i$, we can calculate $\det g_{r,\bar{\boldsymbol{s}}}$ as follows:
\begin{equation}
    \ln\det g_{r,\bar{\boldsymbol{s}}} = \sum_{j=1}^r \ln\det G_{\mathbf{s}^{(j)}}^{T_{2}^{f}}-\ln\det\mathbb{G}(i,i).
\end{equation}

\subsection{Scheme 2: Calculate $r_2r_3$ together}
In this approach, we focus on a more stable update object that combines $G^{T_2^f}_{\mathbf{s}^{(i)}}$ and $\left(I+\mathbb{B}(i,0)\mathbb{B}(r,i)\right)$: 
\begin{equation}\label{equ:scheme2ffen}
    \mathbb{F}_{i}=\left[G_{\mathbf{s}^{(i)}}^{T_{2}^{f}}\left(I+\mathbb{B}\left(i,0\right)\mathbb{B}\left(r,i\right)\right)\right]^{-1}=\mathbb{G}\left(i,i\right)\left(G_{\mathbf{s}^{(i)}}^{T_{2}^{f}}\right)^{-1}.  
\end{equation}
This is because $G_{\mathbf{s}^{(i)}}^{T_{2}^{f}}\mathbb{B}(i,0)$ inside the square bracket results in the cancellation of one inverse of $G_{\mathbf{s}^{(i)}}^{T_{2}^{f}}$ (note that $G_{\mathbf{s}^{(i)}}^{T_{2}^{f}}\mathbb{B}_i=I-G_{\mathbf{s}^{(i)}}^{T_{2}^{f}}$). 
The update formula for $\mathbb{F}_i$ can be derived from the update of $G^{T_2^f}_{\mathbf{s}^{(i)}}$ in Eq.~\eqref{equ:update_GT2f}
\begin{equation}
    \mathbb{F}_{i}^{\prime}=\mathbb{F}_{i}\left(I-\mathscr{U}\left(I_m+\mathcal{V}^{b}\mathscr{U}\right)^{-1}\mathcal{V}^{b}\right),
\end{equation}
where we define
\begin{equation}
    \mathcal{V}^{b}\equiv\mathcal{V}\mathbb{G}_{\bcancel{i}}^{-1}\mathbb{F}_{i}\text{ with }\mathbb{G}_{\bcancel{i}}^{-1}\equiv I-\mathbb{B}\left(i-1,0\right)\mathbb{B}\left(r,i\right).
\end{equation}
Thus, the ratio $r_2r_3$ is given by
\begin{equation}
    r_{2}r_{3}=\frac{\det g_{r,\bar{\boldsymbol{s}}^\prime}}{\det g_{r,\bar{\boldsymbol{s}}}}=\frac{\det\mathbb{F}_{i}}{\det\mathbb{F}_{i}^{\prime}}=\det\left(I_m+\mathcal{V}^{b}\mathscr{U}\right).
\end{equation}
Here, the subscript $\bcancel{i}$ of $\mathbb{G}_{\bcancel{i}}$ indicates that it does not depend on the auxiliary fields in replica-$i$. 
This choice of updating object (i.e., $\mathbb{F}_i$) was also recommended by a former work on high-rank REE calculations~\cite{Phys.Rev.B2024Zhang}. 
The stabilization of $\mathbb{F}_i$ directly follows from $\mathbb{G}(i,i)$ and $G^{T_2^f}_{\mathbf{s}^{(i)}}$. 
The measurement of $\det g_{r,\bar{\boldsymbol{s}}}$ is similar to Scheme 1, given by
\begin{equation}
    \ln\det g_r = \sum_{j\neq i}^r \ln\det G_{\mathbf{s}^{(j)}}^{T_{2}^{f}}-\ln\det\mathbb{F}_i.
\end{equation}

\subsection{Practical experience and discussion}

We have tested Schemes 1 and 2 on the 1D/2D half-filled Hubbard model, and the high-rank RN behaves well at moderate temperatures (not shown). 
However, for the spinless $t$-$V$ model (see Sec.~\ref{sec:tVmodel} for details), updating $\left(G^{T_2^f}_{\mathbf{s}^{(i)}}\right)^{-1}$, $\mathbb{B}_i^{-1}$, and $\mathbb{F}_i$ is unstable for a generic bipartition\footnote{An exception occurs in the special case of an equal bipartition, where the Green's functions exhibit much weaker singularities (see the discussion in Sec.~\ref{sec:numerical_stable}); in this scenario, Schemes 1 and 2 can be effective.}.
Specifically, the total phase of $W^{(k)}_{\mathbf{s}^{(1)}\cdots\mathbf{s}^{(r)}}$ may deviate from unity during updates. 
This issue motivates us to update a more complex and stable object (see Eq.~\eqref{equ:scheme4ffen}), which is a variant of the total Grover matrix in Eq.~\eqref{equ:gk_untwisted_DQMC}, maintaining the same determinant ratio.
Compared to Eq.~\eqref{equ:scheme2ffen}, Eq.~\eqref{equ:scheme4ffen} cancels one additional inverse of $G^{T_2^f}$, as shown in the first term of the second line of Eq.~\eqref{equ:FcanceliInv}. 

\section{Numerical stabilization routines in DQMC}\label{app:numerical_stabilization_DQMC}
In this appendix, we briefly review the numerical stabilization routines used in DQMC. 
They aim at stably dealing with matrices with exponentially large conditional numbers, including the $B(\tau,\tau^\prime)$ matrices defined below Eq.~\eqref{equ:DQMC_equaltimeG} and also the $\mathbb{B}(i,j)$ matrices defined in Eq.~\eqref{equ:mathbbBii}. 
For numerical stable matrix operations, these matrices are decomposed into $UDV$ forms~\cite{Loh1989StableMatrixMultiplicationAlgorithms,SciPostPhys.Core2020Bauer}. 
The real diagonal matrix $D$ stores scale information of $B(\tau,\tau^\prime)$ or $\mathbb{B}(i,j)$, and for matrices with large conditional numbers, there are both exponentially large and small number among the entries of $D$. 
In most numerically stable routines, we further factorize it to become $D=D_+D_-$ with $D_{+}=\max (D, 1)$ and $D_{-}=\min (D, 1)$, and avoid adding or subtracting these different scales during computation as much as possible. 
In practice, there are two kinds of UDV decomposition, one is singular value decomposition (SVD) and the other is a QR decomposition followed by extracting the absolute value of diagonal elements of $R$ as $D$. 
One can see a detailed comparison and benchmark of them in Ref.~\cite{SciPostPhys.Core2020Bauer}. 
In both choices, $U$ is a unitary matrix that will be used to simplify some formulas. 

In this and the next appendices, we use $A$ (and other variant fonts like $\mathbb{A}$) to represent the matrix decomposition form of some matrix $B$, and their relation is denoted as $A:=B=UDV$. 
Since we store $B(\beta,\tau)^\dagger=U_LD_LV_L=:A_L$ and $B(\tau,0)=U_RD_RV_R=:A_R$ in practice, the following stable inverse routines are used for calculating Green's functions stably. 
For equal-time Green's functions, we need: 
\begin{equation}
    \begin{aligned}\label{equ:stab_Gtt}
        G\left(\tau,\tau\right)&=\left[1+A_{R}A_{L}^{\dagger}\right]^{-1}\\&=U_{L}D_{L+}^{-1}M^{-1}D_{R+}^{-1}U_{R}^{\dagger}\\\text{with }M&\equiv D_{R+}^{-1}U_{R}^{\dagger}U_{L}D_{L+}^{-1}+D_{R-}V_{R}V_{L}^{\dagger}D_{L-}.
        \end{aligned}
\end{equation}
In particular, we have simplified forms at $\tau=0$ and $\beta$
\begin{gather}
    \begin{aligned}\label{equ:stab_G00}
        G\left(0,0\right)&=\left[I+A_{L}^{\dagger}\right]^{-1}\\&=U_{L}D_{L+}^{-1}\left[U_{L}^{\dagger}D_{L+}^{-1}+T_{L}^{\dagger}D_{L-}\right]^{-1},
    \end{aligned}\\
    \begin{aligned}\label{equ:stab_Gbb}
        G\left(\beta,\beta\right)&=\left[I+A_{R}\right]^{-1}\\&=\left[D_{R+}^{-1}U_{R}^{\dagger}+D_{R-}V_{R}\right]^{-1}D_{R+}^{-1}U_{R}^{\dagger}.
    \end{aligned}
\end{gather}
For time-displaced Green's functions we need: 
\begin{equation}\label{equ:stab_Gt0}
    \begin{aligned}
        G\left(\tau,0\right)&=\left[A_{R}^{-1}+A_{L}^{\dagger}\right]^{-1}\\&=U_{L}D_{L+}^{-1}M^{-1}D_{R-}V_{R},\\\text{with }M&\equiv D_{R+}^{-1}U_{R}^{\dagger}U_{L}D_{L+}^{-1}+D_{R-}V_{R}V_{L}^{\dagger}D_{L-},
        \end{aligned}
\end{equation}
\begin{equation}\label{equ:stab_G0t}
    \begin{aligned}
        -G\left(0,\tau\right)&=\left[A_{L}^{\dagger-1}+A_{R}\right]^{-1}\\&=V_{R}^{-1}D_{R+}^{-1}M^{-1}D_{L-}U_{L}^{\dagger}\\\text{with }M&\equiv D_{L+}^{-1}\left(V_{L}^{\dagger}\right)^{-1}V_{R}^{-1}D_{R+}^{-1}+D_{L-}U_{L}^{\dagger}U_{R}D_{R-},
        \end{aligned}
\end{equation}
and we can also evaluate $G(0,0)$ at time $\tau$
\begin{equation}\label{equ:stab_G00_t}
    \begin{aligned}
        G\left(0,0\right)&=\left[1+A_{L}^{\dagger}A_{R}\right]^{-1}\\&=V_{R}^{-1}D_{R+}^{-1}M^{-1}D_{L+}^{-1}\left(V_{L}^{\dagger}\right)^{-1}\\\text{with }M&\equiv D_{L+}^{-1}\left(V_{L}^{\dagger}\right)^{-1}V_{R}^{-1}D_{R+}^{-1}+D_{L-}U_{L}^{\dagger}U_{R}D_{R-}.
        \end{aligned}
\end{equation}
Similarly, to calculate $\mathbb{G}(i,i)$ defined in Eq.~\eqref{equ:getgii}, just let $A_L:=\mathbb{B}(r,i)^\dagger$ and $A_R:=\mathbb{B}(i,0)$ in Eq.~\eqref{equ:stab_Gtt}. 

\section{Numerical stabilization in the Scheme 3 of incremental algorithm}\label{app:numerical_stabilization_scheme4}
In this appendix, we detailed the numerical stabilization of the update object $\mathbb{F}_i$ defined in Eq.~\eqref{equ:scheme4ffen}, 
\begin{equation}\label{equ:appE_FFi}
    \mathbb{F}_{i}=\left[G_{\mathbf{s}^{(i)}}^{T_{2}^{f}}\mathbb{F}_{\bcancel{i}}^{-1}+I-G_{\mathbf{s}^{(i-1)}}^{T_{2}^{f}}\right]^{-1}, 
\end{equation}
where $\mathbb{F}_{\bcancel{i}}^{-1}$ is given by Eq.~\eqref{equ:FcanceliInv}
\begin{equation}\label{equ:appE_FFiInv}
    \mathbb{F}_{\bcancel{i}}^{-1}\equiv\underbrace{\mathbb{B}_{i+1}^{-1}\cdots\mathbb{B}_{r}^{-1}}_{\mathbb{A}_{L}\left(i\right)}\underbrace{\mathbb{B}_{1}^{-1}\cdots\mathbb{B}_{i-2}^{-1}}_{\mathbb{A}_{R}\left(i-2\right)}G_{\mathbf{s}^{(i-1)}}^{T_{2}^{f}}-\left(I-G_{\mathbf{s}^{(i-1)}}^{T_{2}^{f}}\right).
\end{equation}
Here we defined two matrix decompositions, namely, $\mathbb{A}_{L}\left(i\right):=\left(\mathbb{B}\left(r,i\right)\right)^{-1}$, $\mathbb{A}_{R}\left(i\right):=\left(\mathbb{B}\left(i,0\right)\right)^{-1}$, where $0\leq i\leq r$. 
It reduces to simpler formulas in special replicas close to the boundary: 
\begin{subequations}
    \begin{equation}\label{equ:appE_FFiInv1}
        \begin{aligned}
            \mathbb{F}_{\bcancel{1}}^{-1}&\equiv\mathbb{B}_{2}^{-1}\cdots\mathbb{B}_{r-1}^{-1}G_{\mathbf{s}^{(r)}}^{T_{2}^{f}}-\left(I-G_{\mathbf{s}^{(r)}}^{T_{2}^{f}}\right)\\&=\underbrace{\mathbb{B}_{2}^{-1}\cdots\mathbb{B}_{r}^{-1}}_{\mathbb{A}_{L}(1)}\left(I-G_{\mathbf{s}^{(r)}}^{T_{2}^{f}}\right)-\left(I-G_{\mathbf{s}^{(r)}}^{T_{2}^{f}}\right),
        \end{aligned}
    \end{equation}
    \begin{equation}\label{equ:appE_FFiInv2}
        \mathbb{F}_{\bcancel{2}}^{-1}\equiv\underbrace{\mathbb{B}_{3}^{-1}\cdots\mathbb{B}_{r}^{-1}}_{\mathbb{A}_{L}(2)}G_{\mathbf{s}^{(1)}}^{T_{2}^{f}}-\left(I-G_{\mathbf{s}^{(1)}}^{T_{2}^{f}}\right),
    \end{equation}
    \begin{equation}\label{equ:appE_FFiInvr}
        \mathbb{F}_{\bcancel{r}}^{-1}\equiv\underbrace{\mathbb{B}_{1}^{-1}\cdots\mathbb{B}_{r-2}^{-1}}_{\mathbb{A}_{R}(r-2)}G_{\mathbf{s}^{(r-1)}}^{T_{2}^{f}}-\left(I-G_{\mathbf{s}^{(r-1)}}^{T_{2}^{f}}\right)
    \end{equation}
\end{subequations}
Both $\mathbb{F}_{\bcancel{i}}^{-1}$ and $G_{\mathbf{s}^{(i-1)}}^{T_{2}^{f}}$ are unchanged during sweeping inside replica-$i$. 
When going from one replica to the next, we need to update these two matrices. 
For the best stability, all the $G^{T_2^f}_{\mathbf{s}^{(i)}}$ and $\mathbb{B}_i^{-1}$ are decomposed into UDV form, 
\begin{gather}
    G_{\mathbf{s}^{(i)}}^{T_{2}^{f}}=U_{i}D_{i}V_{i}=:A_{i},
    \\
    I-G_{\mathbf{s}^{(i)}}^{T_{2}^{f}}=\bar{U}_{i}\bar{D}_{i}\bar{V}_{i}=:\bar{A}_{i},
    \\
    \begin{aligned}
        \\\mathbb{B}_{i}^{-1}&=\left(A_{i}^{-1}-I\right)^{-1}\\&=U_{i}D_{i-}\left(\underbrace{V_{i}^{-1}D_{i+}^{-1}-U_{i}D_{i-}}_{U^{\prime}D^{\prime}V^{\prime}}\right)^{-1}\\&=U_{i}\underbrace{D_{i-}V^{\prime-1}D^{\prime-1}}_{U^{\prime\prime}D^{\prime\prime}V^{\prime\prime}}U^{\prime}\\&=\underbrace{U_{i}U^{\prime\prime}}_{\mathbb{U}_{i}}\underbrace{D^{\prime\prime}}_{\mathbb{D}_{i}}\underbrace{V^{\prime\prime}U^{\prime}}_{\mathbb{V}_{i}}=:\mathbb{A}_{i}.
    \end{aligned}
\end{gather}
Thus we also have $\mathbb{A}_L(i)=\prod_{j=i+1}^r\mathbb{A}_j$ and $\mathbb{A}_R(i)=\prod_{j=1}^i\mathbb{A}_i$. 
Next, all the matrix multiplications in the first term of Eq.~\eqref{equ:appE_FFiInv} are done in decomposition form, i.e., 
\begin{equation}\label{equ:UDVxUDV}
    \begin{aligned}
        A_{1}A_{2}&=U_{1}\underbrace{D_{1}V_{1}U_{2}D_{2}}_{U^{\prime}D^{\prime}V^{\prime}}V_{2}\\&=\underbrace{U_{1}U^{\prime}}_{U_{12}}\underbrace{D^{\prime}}_{D_{12}}\underbrace{V^{\prime}V_{2}}_{V_{12}}=A_{12}.
        \end{aligned}
\end{equation}
After added with $(I-G^{T_2^f}_{\mathbf{s}^{(i)}})$, we obtain an accurate $\mathbb{F}_{\bcancel{i}}^{-1}$. 
Finally, the evaluation of Eq.~\eqref{equ:appE_FFi} is done using normal matrix operations.
\footnote{One could have tried an even more stable approach: keep the $G_{\mathbf{s}^{(i)}}^{T_{2}^{f}}\mathbb{F}_{\bcancel{i}}^{-1}$ and $(I-G_{\mathbf{s}^{(i-1)}}^{T_{2}^{f}})$ in Eq.~\eqref{equ:appE_FFi} stay in UDV decomposition forms, and then use a stable inverse routine (which is similar to Eq.~\eqref{equ:stab_Gt0} but without any inverse or Hermitian conjugate) to calculate $\mathbb{F}_i$. We find that normal matrix operations are stable enough. }

For the twisted case, the numerical stabilization requires only minor adjustments. 
First, store $\tilde{\mathbb{B}}_{i}^{-1}=U_2\mathbb{B}_i^{-1}=\left(U_{2}\mathbb{U}_{i}\right)\mathbb{D}_{i}\mathbb{V}_{i}=:\tilde{\mathbb{A}}_i$ instead of ${\mathbb{A}}_i$.
Second, the stable calculation of $\underline{\tilde{\mathbb{F}}}_{\bcancel{i}}^{-1}$ (see Eq.~\eqref{equ:FcanceliInv_twisted}) involves additional $U_2$ matrices during UDV multiplication:
\begin{equation}
    \underline{\tilde{\mathbb{F}}}_{\bcancel{i}}^{-1}\equiv\underbrace{\tilde{\mathbb{B}}_{i+1}^{-1}\cdots\tilde{\mathbb{B}}_{r}^{-1}}_{\tilde{\mathbb{A}}_{L}\left(i\right)}\underbrace{\tilde{\mathbb{B}}_{1}^{-1}\cdots\tilde{\mathbb{B}}_{i-2}^{-1}}_{\tilde{\mathbb{A}}_{R}\left(i-2\right)}U_{2}G_{\mathbf{s}^{(i-1)}}^{T_{2}^{f}}U_{2}-U_{2}\left(I-G_{\mathbf{s}^{(i-1)}}^{T_{2}^{f}}\right)U_{2}. 
\end{equation}

The pseudo codes in Algorithms~\ref{alg:scheme4forward} and \ref{alg:scheme4backward} detail the forward (from replica-$1$, time-$0$ to replica-$n$, time-$\beta$, see Fig.~\ref{fig:localsweep} for illustration) and backward procedures. 
We maintain two arrays consisting of UDV decompositions for conducting numerical stabilization for $G_{\mathbf{s}^{(i)}}$ and $\mathbb{F}_i$, namely, the $A_{\mathrm{stab}}$ of shape $(r,N_{\mathrm{stab}}+1)$ and $\mathbb{A}_{\mathrm{stab}}$ of shape $(r+1)$, respectively. 
The workflow with $A_{\mathrm{stab}}$ is the same as normal DQMC implementation~\cite{Assaad2008WorldlineDeterminantalQuantum} but now there are $r$ replicas. 
Within each replica, we conduct stable calculations of $G_{\mathbf{s}^{(i)}}$ at regular intervals of $\tau_{\mathrm{stab}}$, where $\tau_{\mathrm{stab}} N_{\mathrm{stab}}=\beta$.
The workflow with $\mathbb{A}_{\mathrm{stab}}$ is also similar: every time before going from replica-$i$ to replica-$(i+1)$ (replica-$(i-1)$), we calculate $\mathbb{A}_R(i)$ ($\mathbb{A}_L(i-1)$) by right (left) multiplying $\mathbb{A}_{i}$ to it according to Eq.~\eqref{equ:UDVxUDV}, and then set it to become $\mathbb{A}_\mathrm{stab}[i]$ ($\mathbb{A}_\mathrm{stab}[i-1]$). 

Below we list other routines we need.  
The first one is the update routines for Green's functions:
\begin{equation}\label{equ:update_Greens_funcs}
    \begin{aligned}
        G_{\mathbf{s}^{\prime}}\left(\tau,\tau\right)&=G_{\mathbf{s}}-G_{\mathbf{s}}U(I_{m}+VU)^{-1}V,\\G_{\mathbf{s}^{\prime}}\left(\tau,0\right)&=G_{\mathbf{s}}\left(\tau,0\right)-G_{\mathbf{s}}U(I_{m}+VU)^{-1}P_{k\times N}G_{\mathbf{s}}(\tau,0),\\G_{\mathbf{s}^{\prime}}\left(0,\tau\right)&=G_{\mathbf{s}}\left(0,\tau\right)-G_{\mathbf{s}}\left(0,\tau\right)U(I_{m}+VU)^{-1}V,\\G_{\mathbf{s}^{\prime}}\left(0,0\right)&=G_{\mathbf{s}}\left(0,0\right)+G_{\mathbf{s}}\left(0,\tau\right)U(I_{m}+VU)^{-1}P_{k\times N}G_{\mathbf{s}}\left(\tau,0\right).
        \end{aligned}
\end{equation}
Again, we note that if the symmetric Trotter decomposition is utilized, one should apply the above update formula to partially propagated Green's function (see the footnote before Eq.~\eqref{equ:update_G00}) instead of the true Green's functions. 
Secondly, two routines for multiplying a normal matrix with a UDV decomposition are needed:
\begin{subequations}
    \begin{equation}\label{equ:appE_lmulAF}
        BA=\underbrace{BUD}_{U_{0}D_{0}V_{0}}V=\underbrace{U_{0}}_{U^{\prime}}\underbrace{D_{0}}_{D^{\prime}}\underbrace{V_{0}V}_{V^{\prime}}=A^{\prime},
    \end{equation}
    \begin{equation}\label{equ:appE_rmulAF}
        AB=U\underbrace{DVB}_{U_{0}D_{0}V_{0}}=\underbrace{UU_{0}}_{U^{\prime}}\underbrace{D_{0}}_{D^{\prime}}\underbrace{V_{0}}_{V^{\prime}}=A^{\prime}.
    \end{equation}
\end{subequations}

\clearpage

\begin{algorithm*}
	\caption{Forward propagation framework for Scheme 3 of incremental algorithm}
    \label{alg:scheme4forward}

	\SetKwData{Left}{left}\SetKwData{This}{this}\SetKwData{Up}{up}
	\SetKwFunction{Union}{Union}\SetKwFunction{FindCompress}{FindCompress}
	\SetKwInput{Input}{Input}
    \SetKwInOut{Output}{Output}
	
	\Input{$\{G_{\mathbf{s}^{(i)}}(0,0)\mid i=1,2,\dots,r\}$,$\{G_{\mathbf{s}^{(i)}}^{T_2^f}\mid i=1,2,\dots,r\}$}
    \Input{$\{A_i := G_{\mathbf{s}^{(i)}}^{T_2^f}\mid i=1,2,\dots,r\}$,$\{\bar{A}_i := I-G_{\mathbf{s}^{(i)}}^{T_2^f}\mid i=1,2,\dots,r\}$}
    \Input{${A}_{\mathrm{stab}}[1:r,0:N_{\mathrm{stab}}]=\{A_{L}(i,l_{\mathrm{stab}}):=B_{\mathbf{s}^{(i)}}(\beta,l_{\mathrm{stab}}\tau_{\mathrm{stab}})^\dagger\mid i=1,2,\dots,r;l_{\mathrm{stab}}=0,1,2,\dots,N_{\mathrm{stab}}\}$}
    \Input{$\mathbb{A}_{\mathrm{stab}}[0:r]=\{\mathbb{A}_L(i):=\mathbb{B}(r,i)^{-1}\mid i=0,1,2,\dots,r\}$}

    \BlankLine
	Set $\mathbb{A}_{\mathrm{stab}}[0]=\mathbb{A}_R(0):=I$\;
    \For{$i\in 1,2,\dots,r$}{
        \uIf{$i==1$}{
            Retrieve $\mathbb{A}_L(1)={\mathbb{A}}_{\mathrm{stab}}[1]$ and $\bar{A}_r$\;
            Calculate $\mathbb{F}_{\bcancel{1}}^{-1}$ using Eq.~\eqref{equ:appE_FFiInv1}\;
        }
        \uElseIf{$i==2$}{
            Retrieve $\mathbb{A}_L(2)=\mathbb{A}_{\mathrm{stab}}[2]$ and $A_1$\;
            Calculate $\mathbb{F}_{\bcancel{2}}^{-1}$ using Eq.~\eqref{equ:appE_FFiInv2}\;
        }
        \uElseIf{$i==r$}{
            Retrieve $\mathbb{A}_R(r-2)=\mathbb{A}_{\mathrm{stab}}[r-2]$ and $A_{r-1}$\;
            Calculate $\mathbb{F}_{\bcancel{r}}^{-1}$ using Eq.~\eqref{equ:appE_FFiInvr}\;
        }
        \Else{
            Retrieve $\mathbb{A}_L(i)=\mathbb{A}_{\mathrm{stab}}[i]$, $\mathbb{A}_R(i-2)=\mathbb{A}_{\mathrm{stab}}[i-2]$ and $A_{i-1}$\;
            Calculate $\mathbb{F}_{\bcancel{i}}^{-1}$ using Eq.~\eqref{equ:appE_FFiInv}\;
        }
        Calculate $\mathbb{F}_{i}$ using Eq.~\eqref{equ:appE_FFi}\;
        Set $A_{\mathrm{stab}}[i,0]=A_R(i,0):=I$\;
        \For{$l\in 1,2,\dots,L_\tau$}{
            Propagate to $G_{\mathbf{s}^{(i)}}(\tau,\tau),G_{\mathbf{s}^{(i)}}(\tau,0),G_{\mathbf{s}^{(i)}}(0,\tau)$\;
            Update $G_{\mathbf{s}^{(i)}}(\tau,\tau),G_{\mathbf{s}^{(i)}}(\tau,0),G_{\mathbf{s}^{(i)}}(0,\tau),G_\mathbf{s}(0,0)$ using Eq.~\eqref{equ:update_Greens_funcs} and update $\mathbb{F}_i$ using Eq.~\eqref{equ:scheme4update}\;
            \If{$\tau==l_{\mathrm{stab}}\tau_{\mathrm{stab}}$}{
                Retrieve $A_L(i,l_{\mathrm{stab}})=A_{\mathrm{stab}}[i,l_{\mathrm{stab}}]$\;
                Calculate $B_{\mathbf{s}^{(i)}}(\tau,\tau-\tau_{\mathrm{stab}})$\;
                \uIf{$l_{\mathrm{stab}}==1$}{
                    Calculate $A_R(i,1):=B_{\mathbf{s}^{(i)}}(\tau,\tau-\tau_{\mathrm{stab}})$\;
                }
                \Else{
                    Retrieve $A_R(i,l_{\mathrm{stab}}-1)=A_{\mathrm{stab}}[i,l_{\mathrm{stab}}-1]$\;
                    Calculate $A_R(i,l_{\mathrm{stab}})=B_{\mathbf{s}^{(i)}}(\tau,\tau-\tau_{\mathrm{stab}})A_R(i,l_{\mathrm{stab}}-1)$ using Eq.~\eqref{equ:appE_lmulAF}\;
                }
                Set $A_{\mathrm{stab}}[i,l_{\mathrm{stab}}]=A_R(i,l_{\mathrm{stab}})$\;
                \uIf{$l==L_\tau$}{
                    Calculate $G_{\mathbf{s}^{(i)}}(\beta,\beta)=[I+A_R(i,N_{\mathrm{stab}})]^{-1}$ using Eq.~\eqref{equ:stab_Gbb}\;
                    Calculate $G_{\mathbf{s}^{(i)}}(\beta,0)=I-G_{\mathbf{s}^{(i)}}(\beta,\beta)$\;
                    Calculate $G_{\mathbf{s}^{(i)}}(0,\beta)=-G_{\mathbf{s}^{(i)}}(\beta,\beta)$\;
                    Calculate $G_{\mathbf{s}^{(i)}}(0,0)=G_{\mathbf{s}^{(i)}}(\beta,\beta)$\;
                }
                \Else{
                    Calculate $G_{\mathbf{s}^{(i)}}(\tau,\tau)=[I+A_R(i,l_{\mathrm{stab}})A_L^\dagger(i,l_{\mathrm{stab}})]^{-1}$ using Eq.~\eqref{equ:stab_Gtt}\;
                    Calculate $G_{\mathbf{s}^{(i)}}(\tau,0)=[A_R^{-1}(i,l_{\mathrm{stab}})+A_L^\dagger(i,l_{\mathrm{stab}})]^{-1}$ using Eq.~\eqref{equ:stab_Gt0}\;
                    Calculate $G_{\mathbf{s}^{(i)}}(0,\tau)=-[A_L^{\dagger -1}(i,l_{\mathrm{stab}})+A_R(i,l_{\mathrm{stab}})]^{-1}$ using Eq.~\eqref{equ:stab_G0t}\;
                    Calculate $G_{\mathbf{s}^{(i)}}(0,0)=[I+A_L^{\dagger}(i,l_{\mathrm{stab}})A_R(i,l_{\mathrm{stab}})]^{-1}$ using Eq.~\eqref{equ:stab_G00_t}\;
                }
                Calculate $G_{\mathbf{s}^{(i)}}^{T_2^f}=(G_{\mathbf{s}^{(i)}}(0,0))^{T_2^f}$ using Eq.~\eqref{equ:untwistedPTGreen}\;
                Calculate $\mathbb{F}_i$ using Eq.~\eqref{equ:appE_FFi}\;
                Measure $\det g_r$\;
            }
        }
        Calculate $A_i:=G_{\mathbf{s}^{(i)}}^{T_2^f}$, $\bar{A}_i:=I-G_{\mathbf{s}^{(i)}}^{T_2^f}$, and $\mathbb{A}_i:=\mathbb{B}_i^{-1}$\;
        \uIf{$i==1$}{
            Set $\mathbb{A}_R(i)=\mathbb{A}_i$\;
        }
        \Else{
            Retrieve $\mathbb{A}_R(i-1)$\;
            Calculate $\mathbb{A}_R(i)=\mathbb{A}_R(i-1)\mathbb{A}_i$ using Eq.~\eqref{equ:UDVxUDV}\;
        }
        Set $\mathbb{A}_{\mathrm{stab}}[i]=\mathbb{A}_R(i)$\;
    }
	\Output{${A}_{\mathrm{stab}}[1:r,0:N_\mathrm{stab}]=\{A_{R}(i,l_{\mathrm{stab}}):=B_{\mathbf{s}^{(i)}}(l_{\mathrm{stab}}\tau_{\mathrm{stab}},0)\mid i=1,2,\dots,r;l_{\mathrm{stab}}=0,1,2,\dots,N_{\mathrm{stab}}\}$}
    \Output{$\mathbb{A}_{\mathrm{stab}}[0:r]=\{\mathbb{A}_R(i):=\mathbb{B}(i,0)^{-1}\mid i=0,1,2,\dots,r\}$}
\end{algorithm*}

\begin{algorithm*}
	\caption{Backward propagation framework for Scheme 3 of incremental algorithm}
    \label{alg:scheme4backward}

	\SetKwData{Left}{left}\SetKwData{This}{this}\SetKwData{Up}{up}
	\SetKwFunction{Union}{Union}\SetKwFunction{FindCompress}{FindCompress}
	\SetKwInput{Input}{Input}
    \SetKwInOut{Output}{Output}
	
	\Input{$\{G_{\mathbf{s}^{(i)}}(\beta,\beta)\mid i=1,2,\dots,r\}$,$\{G_{\mathbf{s}^{(i)}}^{T_2^f}\mid i=1,2,\dots,r\}$}
    \Input{$\{A_i:=G_{\mathbf{s}^{(i)}}^{T_2^f}\mid i=1,2,\dots,r\}$,$\{\bar{A}_i:=I-G_{\mathbf{s}^{(i)}}^{T_2^f}\mid i=1,2,\dots,r\}$}
    \Input{${A}_{\mathrm{stab}}[1:r,0:N_\mathrm{stab}]=\{A_{R}(i,l_{\mathrm{stab}}):=B_{\mathbf{s}^{(i)}}(l_{\mathrm{stab}}\tau_{\mathrm{stab}},0)\mid i=1,2,\dots,r;l_{\mathrm{stab}}=0,1,2,\dots,N_{\mathrm{stab}}\}$}
    \Input{$\mathbb{A}_{\mathrm{stab}}[0:r]=\{\mathbb{A}_R(i):=\mathbb{B}(i,0)^{-1}\mid i=0,1,2,\dots,r\}$}
    
	\BlankLine
    Set $\mathbb{A}_{\mathrm{stab}}[r]=\mathbb{A}_L(r):=I$\;
    \For{$i\in r,r-1,\dots,1$}{
        \uIf{$i==1$}{
            Retrieve $\mathbb{A}_L(1)={\mathbb{A}}_{\mathrm{stab}}[1]$ and $\bar{A}_r$\;
            Calculate $\mathbb{F}_{\bcancel{1}}^{-1}$ using Eq.~\eqref{equ:appE_FFiInv1}\;
        }
        \uElseIf{$i==2$}{
            Retrieve $\mathbb{A}_L(2)=\mathbb{A}_{\mathrm{stab}}[2]$ and $A_1$\;
            Calculate $\mathbb{F}_{\bcancel{2}}^{-1}$ using Eq.~\eqref{equ:appE_FFiInv2}\;
        }
        \uElseIf{$i==r$}{
            Retrieve $\mathbb{A}_R(r-2)=\mathbb{A}_{\mathrm{stab}}[r-2]$ and $A_{r-1}$\;
            Calculate $\mathbb{F}_{\bcancel{r}}^{-1}$ using Eq.~\eqref{equ:appE_FFiInvr}\;
        }
        \Else{
            Retrieve $\mathbb{A}_L(i)=\mathbb{A}_{\mathrm{stab}}[i]$, $\mathbb{A}_R(i-2)=\mathbb{A}_{\mathrm{stab}}[i-2]$ and $A_{i-1}$\;
            Calculate $\mathbb{F}_{\bcancel{i}}^{-1}$ using Eq.~\eqref{equ:appE_FFiInv}\;
        }
        Calculate $\mathbb{F}_{i}$ using Eq.~\eqref{equ:appE_FFi}\;
        Set $\mathbb{A}_{\mathrm{stab}}[N_{\mathrm{stab}}]=\mathbb{A}_L(N_{\mathrm{stab}}):=I$\;
        \For{$l\in L_\tau, L_\tau-1,\dots,1$}{
            Update $G_{\mathbf{s}^{(i)}}(\tau,\tau),G_{\mathbf{s}^{(i)}}(\tau,0),G_{\mathbf{s}^{(i)}}(0,\tau),G_\mathbf{s}(0,0)$ using Eq.~\eqref{equ:update_Greens_funcs} and update $\mathbb{F}_i$ using Eq.~\eqref{equ:scheme4update}\;
            Propagate to $G_{\mathbf{s}^{(i)}}(\tau-\Delta_\tau,\tau-\Delta_\tau),G_{\mathbf{s}^{(i)}}(\tau-\Delta_\tau,0),G_{\mathbf{s}^{(i)}}(0,\tau-\Delta_\tau)$\;
            \If{$\tau-\Delta_\tau==l_{\mathrm{stab}}\tau_{\mathrm{stab}}$}{
                Retrieve $A_R(i,l_{\mathrm{stab}})=A_{\mathrm{stab}}[i,l_{\mathrm{stab}}]$\;
                Calculate $B_{\mathbf{s}^{(i)}}(\tau-\Delta_\tau+\tau_{\mathrm{stab}},\tau-\Delta_\tau)$\;
                \uIf{$l_{\mathrm{stab}}+1==N_{\mathrm{stab}}$}{
                    Calculate $A_L(i,l_{\mathrm{stab}}):=B_{\mathbf{s}^{(i)}}(\tau-\Delta_\tau+\tau_{\mathrm{stab}},\tau-\Delta_\tau)$\;
                }
                \Else{
                    Retrieve $A_L(i,l_{\mathrm{stab}}+1)=A_{\mathrm{stab}}[i,l_{\mathrm{stab}}+1]$\;
                    Calculate $A_L(i,l_{\mathrm{stab}})=A_L(i,l_{\mathrm{stab}}+1)B_{\mathbf{s}^{(i)}}(\tau-\Delta_\tau+\tau_{\mathrm{stab}},\tau-\Delta_\tau)$ using Eq.~\eqref{equ:appE_rmulAF}\;
                }
                Set $A_{\mathrm{stab}}[i,l_{\mathrm{stab}}]=A_L(i,l_{\mathrm{stab}})$\;
                \uIf{$l==1$}{
                    Calculate $G_{\mathbf{s}^{(i)}}(0,0)=[I+A_L^\dagger(i,0)]^{-1}$ using Eq.~\eqref{equ:stab_G00}\;
                    Calculate $G_{\mathbf{s}^{(i)}}(0,0^+)=G_{\mathbf{s}^{(i)}}(0,0)-I$\;
                }
                \Else{
                    Calculate $G_{\mathbf{s}^{(i)}}(\tau-\Delta_\tau,\tau-\Delta_\tau)=[I+A_R(i,l_{\mathrm{stab}})A_L^\dagger(i,l_{\mathrm{stab}})]^{-1}$ using Eq.~\eqref{equ:stab_Gtt}\;
                    Calculate $G_{\mathbf{s}^{(i)}}(\tau-\Delta_\tau,0)=[A_R^{-1}(i,l_{\mathrm{stab}})+A_L^\dagger(i,l_{\mathrm{stab}})]^{-1}$ using Eq.~\eqref{equ:stab_Gt0}\;
                    Calculate $G_{\mathbf{s}^{(i)}}(0,\tau-\Delta_\tau)=-[A_L^{\dagger -1}(i,l_{\mathrm{stab}})+A_R(i,l_{\mathrm{stab}})]^{-1}$ using Eq.~\eqref{equ:stab_G0t}\;
                    Calculate $G_{\mathbf{s}^{(i)}}(0,0)=[I+A_L^{\dagger}(i,l_{\mathrm{stab}})A_R(i,l_{\mathrm{stab}})]^{-1}$ using Eq.~\eqref{equ:stab_G00_t}\;
                }
                Calculate $G_{\mathbf{s}^{(i)}}^{T_2^f}=(G_{\mathbf{s}^{(i)}}(0,0))^{T_2^f}$ using Eq.~\eqref{equ:untwistedPTGreen}\;
                Calculate $\mathbb{F}_i$ using Eq.~\eqref{equ:appE_FFi}\;
            }
        }
        Calculate $A_i:=G_{\mathbf{s}^{(i)}}^{T_2^f}$, $\bar{A}_i:=I-G_{\mathbf{s}^{(i)}}^{T_2^f}$, and $\mathbb{A}_i:=\mathbb{B}_i^{-1}$\;
        \uIf{$i==r$}{
            Set $\mathbb{A}_L(i-1)=\mathbb{A}_i$\;
        }
        \Else{
            Retrieve $\mathbb{A}_L(i)$\;
            Calculate $\mathbb{A}_L(i-1)=\mathbb{A}_i\mathbb{A}_L(i)$ using Eq.~\eqref{equ:UDVxUDV}\;
        }
        Set $\mathbb{A}_{\mathrm{stab}}[i-1]=\mathbb{A}_L(i-1)$\;
    }
    \Output{${A}_{\mathrm{stab}}[1:r,0:N_\mathrm{stab}]=\{A_{L}(i,l_{\mathrm{stab}}):=B_{\mathbf{s}^{(i)}}(\beta,l_{\mathrm{stab}}\tau_{\mathrm{stab}})^\dagger\mid i=1,2,\dots,r;l_{\mathrm{stab}}=0,1,2,\dots,N_{\mathrm{stab}}\}$}
    \Output{$\mathbb{A}_{\mathrm{stab}}[0:r]:=\{\mathbb{A}_L(i)=\mathbb{B}(r,i)^{-1}\mid i=0,1,2,\dots,r\}$}
\end{algorithm*}

\end{document}